\newtheorem{theorem}{Theorem}[section]
\newtheorem{lemma}[theorem]{Lemma}
\newtheorem{corollary}[theorem]{Corollary}
\newtheorem{proposition}[theorem]{Proposition}
\newtheorem{definition}[theorem]{Definition}
\newcommand{\R}{\mathbb{R}}
\newcommand{\C}{\mathbb{C}}
\newcommand{\Z}{\mathbb{Z}}
\newcommand{\<}{\langle}
\renewcommand{\>}{\rangle}
\newcommand{\goto}{\rightarrow}
\newcommand{\tr}{\operatorname{Tr}}
\newcommand{\Id}{\text{\em I}}
\numberwithin{equation}{section}
\def \endprf{\hfill {\vrule height6pt width6pt depth0pt}\medskip}
\newenvironment{proof}{\noindent {\bf Proof} }{\endprf\par}
\newcommand{\srf}{\text{SRF}}
\newcommand{\PWn}{\mathcal{P}_{W}}
\newcommand{\normInf}[1]{\left|\left| #1 \right|\right| _{\infty}}
\newcommand{\normTwo}[1]{\left|\left| #1 \right|\right| _{2}}
\newcommand{\normOne}[1]{\left|\left| #1 \right|\right| _{1}}
\newcommand{\normInfInf}[1]{\left|\left| #1 \right|\right| _{\infty}}
\newcommand{\normTV}[1]{\left|\left| #1 \right|\right| _{\text{TV}}}
\newcommand{\abs}[1]{\left| #1 \right|}
\newcommand{\keys}[1]{\left\{ #1 \right\}}
\newcommand{\sqbr}[1]{\left[ #1 \right]}
\newcommand{\brac}[1]{\left( #1 \right) }
\newcommand{\MAT}[1]{\begin{bmatrix} #1 \end{bmatrix}}
\newcommand{\PROD}[2]{\left \langle #1, #2\right \rangle}
\newcommand{\derTwo}[2]{\frac{\text{d}^2#2}{\text{d}#1^2}}
\newcommand{\Imag}[1]{\text{Im}\brac{ #1 }}
\newcommand{\Real}[1]{\text{Re}\brac{ #1 }}
\newcommand{\optvalue}{2}
\newcommand{\optvaluetimestwo}{4}
\newcommand{\optvaluetwoD}{2.38}
\newcommand{\minm}{128}
\newcommand{\minmTwoD}{512}
\newcommand{\tC}{0.1649}
\newcommand{\tp}{t_{+}}
\newcommand{\tm}{t_{-}}
\newcommand{\tA}{0.7559}
\newcommand{\tx}{0.4269}
\newcommand{\rOne}{0.2447}
\newcommand{\rTwo}{0.84}
\newcommand{\fc}{f_c}
\newcommand{\Deltamin}{\Delta_{\text{min}}}
\newcommand{\Deltaminth}{\Delta_{\text{\em min}}}
\newcommand{\Ktwo}{K^{\text{2D}}}
\author{Emmanuel J. Cand\`{e}s\thanks{Departments of Mathematics and
    of Statistics, Stanford University, Stanford CA} \,\, and Carlos
  Fernandez-Granda\thanks{Department of Electrical Engineering,
    Stanford University, Stanford CA}}
\title{Towards a Mathematical Theory of Super-Resolution
%Stable Super-Resolution via Convex Optimization
}
\date{March 2012; Revised June 2012}
\begin{document}

\maketitle
\vspace{-0.3in}

\begin{abstract}
  This paper develops a mathematical theory of
  super-resolution. Broadly speaking, super-resolution is the problem
  of recovering the fine details of an object---the high end of its
  spectrum---from coarse scale information only---from samples at the
  low end of the spectrum.  Suppose we have many point sources at
  unknown locations in $[0,1]$ and with unknown complex-valued
  amplitudes. We only observe Fourier samples of this object up until
  a frequency cut-off $f_c$. We show that one can super-resolve these
  point sources with infinite precision---i.e.~recover the exact
  locations and amplitudes---by solving a simple convex optimization
  problem, which can essentially be reformulated as a semidefinite
  program. This holds provided that the distance between sources is at
  least $2/f_c$. This result extends to higher dimensions and other
  models. In one dimension for instance, it is possible to recover a
  piecewise smooth function by resolving the discontinuity points with
  infinite precision as well. We also show that the theory and methods
  are robust to noise. In particular, in the discrete setting we
  develop some theoretical results explaining how the accuracy of the
  super-resolved signal is expected to degrade when both the noise
  level and the {\em super-resolution factor} vary.
\end{abstract}

{\bf Keywords.} Diffraction limit, deconvolution, stable signal
recovery, sparsity, sparse spike trains, $\ell_1$ minimization, dual
certificates, interpolation, super-resolution, semidefinite programming.. 

\section{Introduction}
\label{sec:intro}

\subsection{Super-resolution}

Super-resolution is a word used in different contexts mainly to design
techniques for enhancing the resolution of a sensing system. Interest
in such techniques comes from the fact that there usually is a
physical limit on the highest possible resolution a sensing system can
achieve. To be concrete, the spatial resolution of an imaging device
may be measured by how closely lines can be resolved. For an optical
system, it is well known that resolution is fundamentally limited by
diffraction. In microscopy, this is called the Abbe diffraction limit
and is a fundamental obstacle to observing sub-wavelength structures.
This is the reason why resolving sub-wavelength features is a crucial
challenge in fields such as astronomy \cite{astronomy_puschmann},
medical imaging \cite{medical_greenspan}, and microscopy
\cite{microscopy_mccutchen}.  In electronic imaging, limitations stem
from the lens and the size of the sensors, e.~g.~pixel size. Here,
there is an inflexible limit to the effective resolution of a whole
system due to photon shot noise which degrades image quality when
pixels are made smaller.  Some other fields where it is desirable to
extrapolate fine scale details from low-resolution data---or resolve
sub-pixel details---include spectroscopy
\cite{spectroscopy_superresolution}, radar \cite{radar_odendaal},
non-optical medical imaging \cite{pet_kennedy} and geophysics
\cite{seismic_khaidukov}. For a survey of super-resolution techniques
in imaging, see \cite{imaging_park} and the references
therein.

% Despite being notoriously ill posed, the problem of super-resolution
% is of great importance in many applications, particularly when the
% measuring process imposes a physical limit on the highest resolution
% available. In optics, for instance, the Abbe diffraction limit of the
% imaging system is a fundamental obstacle for the direct observation of
% sub-wavelength information. This makes super-resolution a central
% challenge in fields such as astronomy \cite{astronomy_puschmann},
% medical imaging \cite{medical_greenspan} or microscopy
% \cite{microscopy_mccutchen}. In electronic imaging, resolution can be
% increased by reducing pixel size. However this is limited by shot
% noise, which degrades image quality if pixels are made too small. As a
% result, there is considerable interest in the problem of
% super-resolving sub-pixel structure (see \cite{imaging_park} and
% references therein).

This paper is about super-resolution, which loosely speaking is the
process whereby the fine scale structure of an object is retrieved
from coarse scale information only.\footnote{We discuss here
  computational super-resolution methods as opposed to instrumental
  techniques such as interferometry.} A useful mathematical model may
be of the following form: start with an object $x(t_1,t_2)$ of
interest, a function of two spatial variables, and the point-spread
function $h(t_1,t_2)$ of an optical instrument. This instrument acts
as a filter in the sense that we may observe samples from the
convolution product
\[
y(t_1,t_2) = (x * h)(t_1, t_2).
\]
In the frequency domain, this equation becomes 
\begin{equation}
\label{eq:low-pass}
\hat y(\omega_1, \omega_2) = \hat x(\omega_1, \omega_2) \hat
h(\omega_1, \omega_2), 
\end{equation}
where $\hat x$ is the Fourier transform of $x$, and $\hat h$ is the
{\em modulation transfer function} or simply {\em transfer function}
of the instrument. Now, common optical instruments act as low-pass filters
in the sense that their transfer function $\hat h$ vanishes for all
values of $\omega$ obeying $|\omega| \ge \Omega$ in which $\Omega$ is
a frequency cut-off; that is, 
\[
|\omega| := \sqrt{\omega_1^2 + \omega_2^2} > \Omega \quad \Rightarrow
\quad \hat h(\omega_1, \omega_2) = 0.
\]
In microscopy with coherent illumination, the bandwidth $\Omega$ is
given by $\Omega = 2\pi \text{NA}/\lambda$ where $\text{NA}$ is the
numerical aperture and $\lambda$ is the wavelength of the illumination
light. For reference, the transfer function in this case is simply the
indicator function of a disk and the point-spread function has
spherical symmetry and a radial profile proportional to the ratio
between the Bessel function of the first order and the radius. In
microscopy with incoherent light, the transfer function is the Airy
function and is proportional to the square of the coherent point-spread
function. Regardless, the frequency cut-off induces a physical
resolution limit which is roughly inversely proportional to $\Omega$
(in microscopy, the Rayleigh resolution distance is defined to be
$0.61 \times 2\pi/\Omega$).

\begin{figure}
\centering
\subfigure[]{
\includegraphics[width=5.5cm]{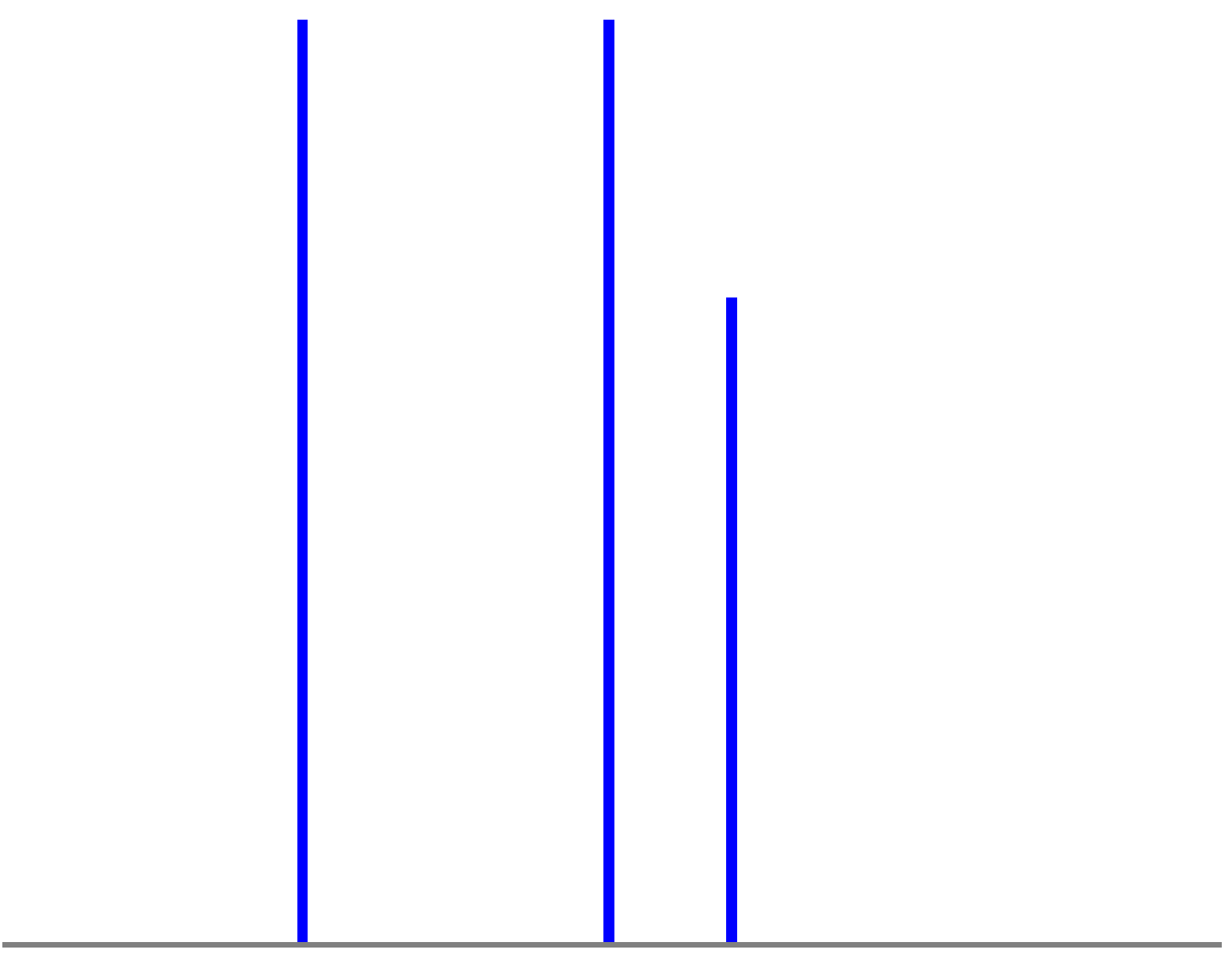}
\label{fig:example_a}
}
\subfigure[]{
\includegraphics[width=5.5cm,height=3cm]{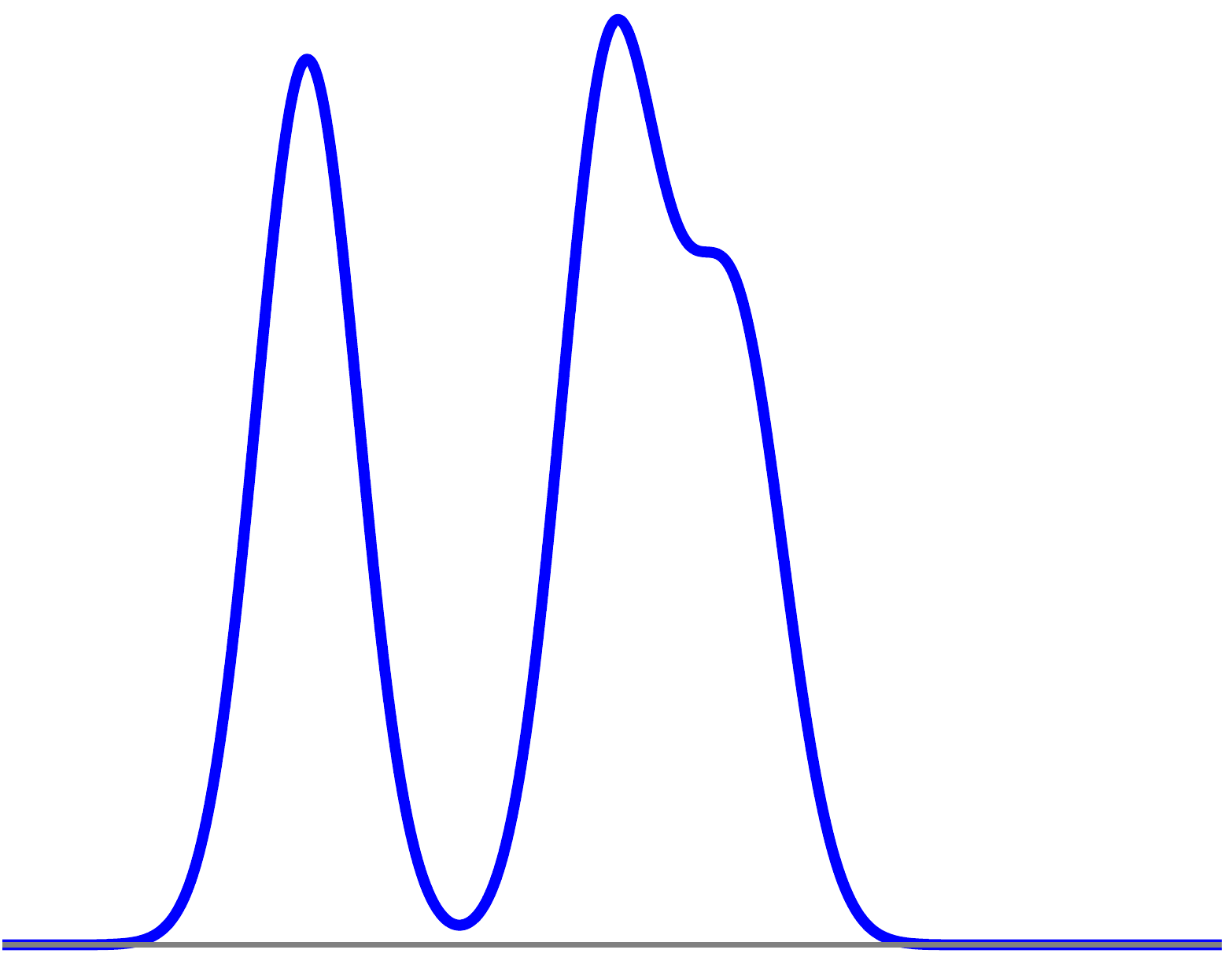}
}
\caption{Schematic illustration of the super-resolution problem in the
  spatial domain. (a) Highly resolved signal. (b) Low-pass version of
  signal in (a).  Super-resolution is about recovering the signal in
  (a) by deconvolving the data in (b).}
\label{fig:example}
\end{figure}
\begin{figure}
\centering
\subfigure[]{
\includegraphics[width=5.5cm]{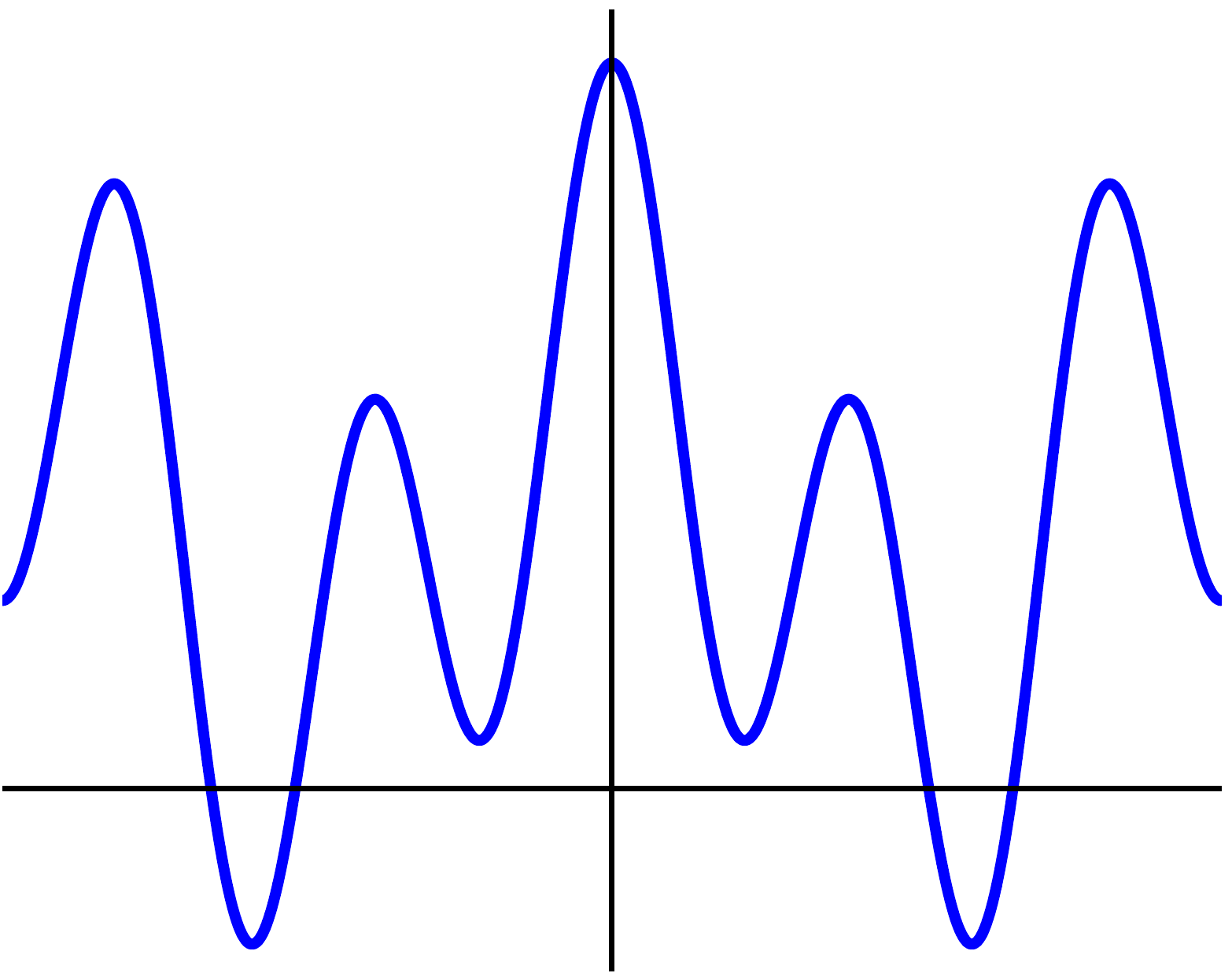}
}
\subfigure[]{
\includegraphics[width=5.5cm]{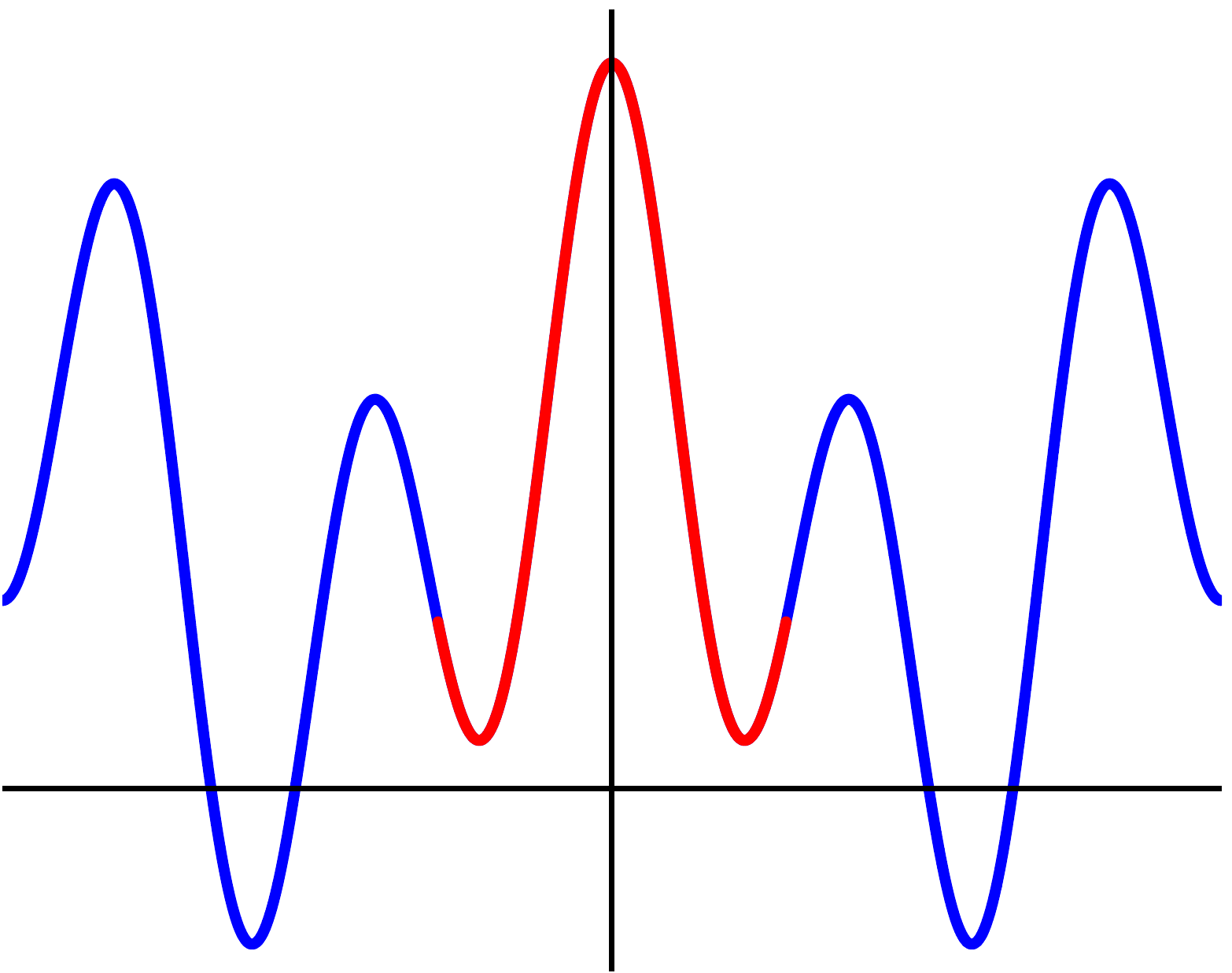}
}
\caption{Schematic illustration of the super-resolution problem in the
  frequency domain. (a) Real part of the Fourier transform in Figure
  \ref{fig:example_a}. (b) In red, portion of the observed
  spectrum. Super-resolution is about extrapolating the red fragment
  to recover the whole curve.}
\label{fig:example_spectrum}
\end{figure}
The daunting and ill-posed super-resolution problem then consists in
recovering the fine-scale or, equivalently, the high-frequency
features of $x$ even though they have been killed by the measurement
process. This is schematically represented in Figure
\ref{fig:example}, which shows a highly resolved signal together with
a low-resolution of the same signal obtained by convolution with a
point-spread function.  Super-resolution aims at recovering the fine
scale structure on the left from coarse scale features on the
right. Viewed in the frequency domain, super-resolution is of course
the problem of extrapolating the high-end and missing part of the
spectrum from the low-end part, as seen in Figure
\ref{fig:example_spectrum}. For reference, this is very different from
a typical compressed sensing problem \cite{candesFreq} in which we
wish to interpolate---and not extrapolate---the spectrum.

This paper develops a mathematical theory of super-resolution, which
is not developed at all despite the abundance of empirical work in
this area, see Section \ref{subsec:comparison} for some
references. Our theory takes multiple forms but a simple and appealing
incarnation is as follows. Suppose we wish to super-resolve a
spike-train signal as in Figure \ref{fig:example}, namely, a
superposition of pointwise events. Our main result is that we can
recover such a signal exactly from low-frequency samples by tractable
convex optimization in which one simply minimizes the continuous
analog to the $\ell_1$ norm for discrete signals subject to data
constraints. This holds as long as the spacing between the spikes is
on the order of the resolution limit. Furthermore, the theory shows
that this procedure is robust in the sense that it degrades smoothly
when the measurements are contaminated with noise. In fact, we shall
quantify quite precisely the error one can expect as a function of the
size of the input noise and of the resolution one wishes to achieve.\\

% This paper develops a mathematical theory of super-resolution. Suppose
% that we have a superposition of pointwise events similar to the
% example in Figure \ref{fig:example} resolved up to a certain
% resolution $\lambda$. Our main result is that it is possible to
% recover this signal exactly by minimizing a convex functional, which
% can be viewed as a continuous analogue of the $\ell_1$ norm, over the
% space of complex measures as long as the separation between the events
% is at least $\optvalue \lambda$. Moreover, our theory shows that this
% procedure is robust in the presence of perturbations.

\subsection{Models and methods}

For concreteness, consider a continuous-time model in which the signal
of interest is a weighted superposition of spikes
\begin{equation}
  \label{eq:model}
  x = \sum_j a_j \delta_{t_j}, 
\end{equation}
where $\{t_j\}$ are locations in $[0,1]$ and $\delta_{\tau}$ is a
Dirac measure at $\tau$. The amplitudes $a_j$ may be complex
valued. Expressed differently, the signal $x$ is an atomic measure on
the unit interval putting complex mass at time points $t_1$, $t_2$,
$t_3$ and so on. The information we have available about $x$ is a
sample of the lower end of its spectrum in the form of the lowest
$2\fc + 1$ Fourier series coefficients ($\fc$ is an integer):
% We consider a continuous model in which the signal is a complex
% measure $\mu$ on $\mathcal{B}\brac{\mathbb{T}}$, the Borel
% $\sigma$-algebra over the unit circle $\mathbb{T}$, that consists of
% a weighted sum of Dirac measures:
% \begin{align*}
%  \text{d}\mu = \sum_{j=1}^{k}a_j\delta\brac{t-t_j}\text{d}t \text{,}
% \end{align*}
% where $c \in \C^k$. 
\begin{equation}
  \label{eq:fourier}
  y(k) = \int_0^1 e^{-i2\pi kt} x(\text{d}t)  = \sum_j a_j e^{-i2\pi k
    t_j}, \quad k \in \Z, \, \abs{k}\leq \fc. 
\end{equation}
For simplicity, we shall use matrix notations to relate the data $y$
and the object $x$ and will write \eqref{eq:fourier} as $y =
\mathcal{F}_{n} \, x$ where $\mathcal{F}_{n}$ is the linear map
collecting the lowest $n = 2\fc + 1$ frequency coefficients.  It is
important to bear in mind that we have chosen this model mainly for
ease of exposition. Our techniques can be adapted to settings where
the measurements are modeled differently, e.~g.~by sampling the
convolution of the signal with different low-pass kernels.  The
important element is that just as before, the frequency cut-off
induces a resolution limit inversely proportional to $f_c$; below we
set $\lambda_c = 1/\fc$ for convenience.

To recover $x$ from low-pass data we shall find, among all measures
fitting the observations, that with lowest total variation.  The total
variation of a complex measure (see Section \ref{sec:tv} in the
Appendix for a rigorous definition) can be interpreted as being the
continuous analog to the $\ell_1$ norm for discrete signals. In fact,
with $x$ as in \eqref{eq:model}, $\normTV{x}$ is equal to the $\ell_1$
norm of the amplitudes $\normOne{a} = \sum_j |a_j|$. Hence, we propose
solving the convex program
\begin{align}
\label{TVnormMin}
\min_{\tilde x}\normTV{\tilde x} \quad \text{subject to} \quad
\mathcal{F}_{n} \, \tilde x = y, 
\end{align}
where the minimization is carried out over the set of all finite
complex measures $\tilde x$ supported on $[0,1]$.  Our first result
shows that if the spikes or atoms are sufficiently separated, at least
$\optvalue \, \lambda_c$ apart, the solution to this convex program is
exact. 

\begin{definition}[Minimum separation] Let $\mathbb{T}$ be the circle
  obtained by identifying the endpoints on $[0,1]$ and $\mathbb{T}^d$
  the $d$-dimensional torus. For a family of points $T \subset
  \mathbb{T}^d$, the minimum separation is defined as the closest
  distance between any two elements from $T$,
  \begin{equation}
    \label{eq:min-distance-def}
    \Delta(T) = \inf_{(t, t') \in T \, : \, t \neq t'} \, \, |t - t'|, 
  \end{equation}
  where $|t - t'|$ is the $\ell_\infty$ distance (maximum deviation in
  any coordinate). To be clear, this is the wrap-around distance so
  that for $d = 1$, the distance between $t = 0$ and $t' = 3/4$ is
  equal to $1/4$.
\end{definition}
\begin{theorem}
  \label{theorem:noiseless} Let $T = \{t_j\}$ be the support of
  $x$. If the minimum distance obeys 
\begin{equation}
\label{eq:min-dist}
\Delta(T)  \geq {\optvalue}\, /{\fc} := \optvalue \, \lambda_c, 
\end{equation}
then $x$ is the unique solution to \eqref{TVnormMin}. This holds with
the proviso that $f_c \geq \minm$. If $x$ is known to be real-valued,
then the minimum gap can be lowered to $1.87 \, \lambda_c$.
\end{theorem}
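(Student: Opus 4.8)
The plan is to prove exact recovery by exhibiting a \emph{dual certificate}. By Lagrangian duality for the total-variation program \eqref{TVnormMin}, it suffices to produce a low-frequency trigonometric polynomial
\[
q(t) = \sum_{|k|\le \fc} c_k\, e^{i2\pi k t},
\]
that is, a function in the range of $\mathcal{F}_n^{\adj}$, such that $q(t_j) = \operatorname{sign}(a_j)$ for every $t_j \in T$ and $|q(t)| < 1$ for every $t \in \mathbb{T}\setminus T$. Indeed, such a $q$ forces any feasible $\tilde x$ with $\normTV{\tilde x}\le\normTV{x}$ to be supported on $T$ with the prescribed signs; and since the separation hypothesis makes $|T|\le \fc/2 < n$, the linear map $\mathcal{F}_n$ restricted to measures on $T$ is injective (a Vandermonde argument), which then pins down $\tilde x = x$. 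So the entire content of the theorem is the construction of $q$.

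To build $q$ I would interpolate the sign pattern with a well-localized low-pass kernel. Take $K$ to be (a dilate of) the squared Fej\'er kernel: it is nonnegative, concentrated near the origin, has rapidly decaying first and second derivatives, and is a trigonometric polynomial of degree $\fc$. I look for
\[
q(t) = \sum_j \alpha_j\, K(t-t_j) + \sum_j \beta_j\, K'(t-t_j),
\]
and fix the $2|T|$ coefficients $\{\alpha_j\},\{\beta_j\}$ by imposing $q(t_j)=\operatorname{sign}(a_j)$ and $q'(t_j)=0$. The derivative constraints are essential: $|q|$ must attain its global maximum $1$ at each $t_j$, so each $t_j$ has to be a critical point of $|q|^2$.

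The first technical step is to show this $2|T|\times 2|T|$ system is invertible with a well-behaved solution — concretely that $\alpha_j \approx \operatorname{sign}(a_j)$ and $\beta_j \approx 0$ with small, explicitly controlled errors. Writing the system in block form, with blocks assembled from the values $K(t_j-t_\ell)$, $K'(t_j-t_\ell)$, $K''(t_j-t_\ell)$, the minimum-separation hypothesis $\Delta(T)\ge 2\lambda_c$ makes the off-diagonal mass small, so the matrix is a small perturbation of a block-diagonal one; a Schur-complement / Neumann-series estimate gives invertibility and the coefficient bounds. This reduces to summing tails $\sum_{\ell\ge 1}|K^{(m)}(\ell\Delta)|$ over separated points, which is routine once the decay of $K,K',K''$ is recorded.

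The crux, and the step I expect to be the main obstacle, is the strict bound $|q(t)| < 1$ off the support. I would split $\mathbb{T}$ into short intervals around the $t_j$'s and the complementary far region. On the far region, $1-|q(t)|$ is bounded directly from the kernel tails and the coefficient estimates above. On each near interval one argues at second order: $|q(t_j)|^2=1$, $(|q|^2)'(t_j)=0$, and $(|q|^2)''(t)<0$ throughout the interval, the last inequality again reduced to kernel-derivative bounds. Getting both regimes to close simultaneously is exactly what forces the separation constant to be $\optvalue$ rather than the conjectured $1$, and what necessitates the mild requirement $\fc \ge \minm$: the estimates are quantitative and must be checked with explicit numerical constants, with slightly sharper bookkeeping yielding the improved $1.87\,\lambda_c$ threshold when $x$ is real-valued. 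The near-region concavity estimate should be the most delicate point, since there both $q$ and $q'$ are pinned and one is fighting for a \emph{strict} inequality.
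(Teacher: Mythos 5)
Your proposal follows essentially the same path as the paper: the same dual-certificate sufficiency principle, the same choice of kernel (the square of the Fej\'er kernel), the same interpolation scheme using $K$ and $K'$ with $q(t_j)=\operatorname{sign}(a_j)$, $q'(t_j)=0$, the same Neumann-series/Schur-complement estimate for the $2|T|\times 2|T|$ system, and the same near/far split (concavity of $|q|$ near each $t_j$, direct tail bounds away from $T$) to establish $|q|<1$ off the support. The only cosmetic difference is in how uniqueness is closed out: you invoke Vandermonde injectivity of $\mathcal{F}_n$ on measures supported on $T$ after deducing the support, while the paper (Proposition A.1) runs a measure-theoretic argument directly---Lebesgue decomposition of $h=\hat x - x$, polar decomposition of $h_T$, and a strict inequality $\normTV{h_T}<\normTV{h_{T^c}}$ forcing $h=0$---which avoids a separate injectivity step; both routes are valid given the dual certificate.
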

We find this result particularly unexpected. The total-variation norm
makes no real assumption about the structure of the signal. Yet, not
knowing that there are any spikes, let alone how many there are,
total-variation minimization locates the position of those spikes with
{\em infinite precision}!  Even if we knew that \eqref{TVnormMin}
returned a spike train, there is no reason to expect that the
locations of the spikes would be infinitely accurate from coarse scale
information only. In fact, one would probably expect the fitted
locations to deviate at least a little from the truth. This is not
what happens.

% \begin{enumerate}
% \item To be sure, the total-variation norm makes no real assumption
%   about the structure of the signal. In particular, it does not assume
%   that the measure is finitely supported. Yet it returns a finitely
%   supported measure. If everything was real-valued, we would expect a
%   finite support from linear programming theory. However, we have
%   complex-valued data and, therefore, \eqref{TVnormMin} is not a
%   linear program. \ejc{We have to see whether we want to keep this
%     bullet point.} \cfg{I think that it might be better to just stress that the convex program achieves infinite precision.}

% \item Further, not knowing that there are any spikes, let alone how
%   many there are, total-variation minimization locates the position of
%   those spikes with {\em infinite precision}!  Even if we knew that
%   \eqref{TVnormMin} returned a spike train, there is no reason to
%   expect that the locations of the spikes would be infinitely accurate
%   from coarse scale information only. In fact, one would probably
%   expect the fitted locations to deviate at least a little from the
%   truth. This is not what happens.
% \end{enumerate}

The theorem does not depend on the amplitudes and applies to
situations where we have both very large and very small spikes. The
information we have about $x$ is equivalent to observing the
projection of $x$ onto its low-frequency components, i.e.~the
constraint in \eqref{TVnormMin} is the same as $\mathcal{P}_n \tilde x
= \mathcal{P}_n x$, where $\mathcal{P}_n = \mathcal{F}_n^*
\mathcal{F}_n$. As is well known, this projection is the convolution
with the Dirichlet kernel, which has slowly decaying side lobes.
Hence, the theorem asserts that total-variation minimization will pull
out the small spikes even though they may be completely buried in the
side lobes of the large ones as shown in Figure
\ref{fig:dirichlet}. To be sure, by looking at the curves in Figure
\ref{fig:dirichlet}, it seems a priori impossible to tell how many
spikes there are or roughly where they might be.
\begin{figure}
\centering
\includegraphics[width=12cm,height=4.5cm]{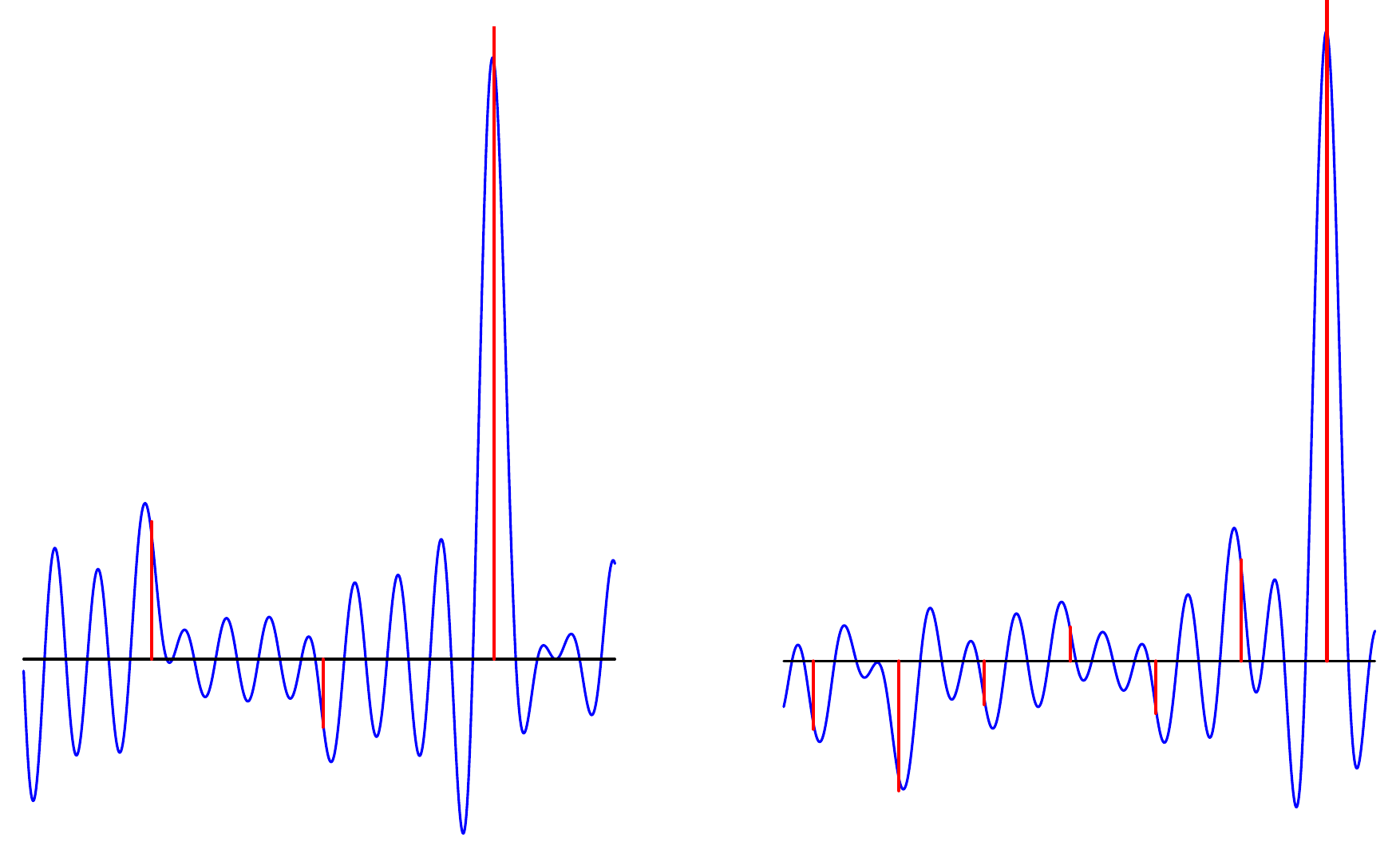}
% \quad 
%  \includegraphics[width=7cm]{}
\caption{Low-frequency projection of a signal corresponding to a
  signal with three (left) and seven (right) spikes. The blue curve
  represents the available data while the red ticks represent the
  unknown locations and relative amplitudes of the spikes. The
  locations of the spikes obey the condition of Theorem
  \ref{theorem:noiseless}.  Yet, by looking at the curves, it seems a
  priori impossible to tell how many spikes there are or roughly where
  they might be. }
\label{fig:dirichlet}
\end{figure}

An interesting aspect of this theorem is that it cannot really be
tested numerically. Indeed, one would need a numerical solver with
infinite precision to check that total-variation minimization truly
puts the spikes at exactly the right locations.  Although this is of
course not available, Section \ref{sec:sdp} shows how to solve the
minimum total-variation problem \eqref{TVnormMin} by using ideas from
semidefinite programming and allowing recovery of the support with
very high precision. Also, we demonstrate through numerical
simulations in Section \ref{sec:numerical} that Theorem
\ref{theorem:noiseless} is fairly tight in the sense that a necessary
condition is a separation of at least $\lambda_c = 1/f_c$.

%This is of course not available. With a finite precision solver, one might only be able to
%test a finite version of this theorem where the locations of the
%spikes are restricted to a finite grid. 
%Bearing this in mind, we will
%demonstrate through numerical simulations that Theorem
%\ref{theorem:noiseless} is fairly tight in the sense that a necessary
%condition is a separation of at least $\lambda_c = 1/f_c$.

Viewed differently, one can ask: how many spikes can be recovered from
$n = 2\fc + 1$ low-frequency samples? The answer given by Theorem
\ref{theorem:noiseless} is simple. At least $n/\optvaluetimestwo$
provided we have the minimum separation discussed above.  A classical
argument shows that any method whatsoever would at least need two
samples per unknown spike so that the number of spikes cannot exceed
half the number of samples, i.~e.~$n/2$. This is another way of
showing that the theorem is reasonably tight.

\subsection{Super-resolution in higher dimensions}

Our results extend to higher dimensions and reveal the same dependence
between the minimum separation and the measurement resolution as in
one dimension. For concreteness, we discuss the $2$-dimensional
setting and emphasize that the situation in $d$ dimensions is
similar. Here, we have a measure
\[
  x = \sum_j a_j \delta_{t_j}, 
\]
as before but in which the $t_j \in [0,1]^2$. We are given information
about $x$ in the form of low-frequency samples of the form 
\begin{equation}
  \label{eq:2DFT}
  y(k) = \int_{[0,1]^2} e^{-i2\pi \<k, t\>} x(\text{d}t) = \sum_j a_j
  e^{-i2\pi \<k, t_j\>}, \quad k = (k_1, k_2) \in \Z^2, \, \abs{k_1}, \abs{k_2}
  \leq \fc.
\end{equation}
This again introduces a physical resolution of about $\lambda_c =
1/\fc$. In this context, we may think of our problem as imaging point
sources in the 2D plane---such as idealized stars in the sky---with an
optical device with resolution about $\lambda_c$---such as a
diffraction limited telescope. Our next result states that it is
possible to locate the point sources without any error whatsoever if
they are separated by a distance of $\optvaluetwoD \, \lambda_c$
simply by minimizing the total variation.
\begin{theorem}
\label{theorem:2D}
Let $T = \{t_j\} \subset [0,1]^2$ identified with $\mathbb{T}^2$ be
the support of $x$ obeying the separation condition\footnote{Recall
  that distance is measured in $\ell_\infty$.}
\begin{equation}
  \Delta(T) \geq {\optvaluetwoD}\, /{\fc} = {\optvaluetwoD}\, \lambda_c.  \label{min_dist_condition_2D}
\end{equation}
Then if $x$ is real valued, it is the unique minimum total-variation
solution among all real objects obeying the data constraints
\eqref{eq:2DFT}. Hence, the recovery is exact. For complex measures,
the same statement holds but with a slightly different constant.
\end{theorem}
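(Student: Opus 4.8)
The approach is the classical one for this kind of result: reduce exact (and unique) recovery to the existence of a \emph{dual certificate}, and then build one by interpolation with a localized kernel. By Lagrangian duality for the convex program \eqref{TVnormMin} --- exactly as in the one-dimensional situation behind Theorem~\ref{theorem:noiseless} --- the measure $x=\sum_j a_j\delta_{t_j}$, with sign pattern $v_j:=a_j/\lvert a_j\rvert$, is the \emph{unique} solution provided there is a bivariate trigonometric polynomial of degree at most $\fc$ in each variable, $q(t)=\sum_{\lvert k_1\rvert,\lvert k_2\rvert\le\fc} c_k\,e^{i2\pi\langle k,t\rangle}$ (equivalently $q\in\operatorname{range}\mathcal{F}_n^{*}$), with $q(t_j)=v_j$ for every $t_j\in T$ and $\lvert q(t)\rvert<1$ for every $t\in\mathbb{T}^2\setminus T$. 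For a real measure one instead asks for a \emph{real} $q$ with $q(t_j)=\operatorname{sign}(a_j)\in\{-1,+1\}$ and $\lvert q\rvert<1$ off $T$; since the minimization is then restricted to real objects this weaker requirement suffices, and it is this extra freedom that yields the smaller admissible separation. So the theorem reduces to constructing such a $q$ under \eqref{min_dist_condition_2D}.

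For the construction, let $K$ be the univariate squared Fej\'er kernel --- a nonnegative trigonometric polynomial of degree $\fc$ with $K(0)=1$, $K'(0)=0$, $K''(0)<0$, and rapidly decaying side lobes --- and put $\bar K(t)=K(t_1)K(t_2)$, a bivariate polynomial of degree $\fc$ in each coordinate. I would look for
\[
 q(t)=\sum_j \alpha_j\,\bar K(t-t_j)+\sum_j \langle\beta_j,\nabla\bar K(t-t_j)\rangle ,\qquad \alpha_j\in\mathbb{C},\ \beta_j\in\mathbb{C}^2,
\]
and impose the $3\lvert T\rvert$ equations $q(t_j)=v_j$ and $\nabla q(t_j)=0$ for all $j$; the gradient conditions force each $t_j$ to be a critical point of $q$, hence a candidate strict local maximum of $\lvert q\rvert$. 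In matrix form the coefficients solve a linear system whose matrix is the identity plus a perturbation with entries given by $\bar K$, $\nabla\bar K$, $\nabla^2\bar K$ evaluated at the differences $t_j-t_k$, $j\ne k$. Because the separation hypothesis forces $\lVert t_j-t_k\rVert_\infty\ge \optvaluetwoD\,\lambda_c$, summing the kernel decay bounds over the two-dimensional lattice of admissible neighbor positions shows the perturbation has norm $<1$, so the system is invertible and moreover $\lvert\alpha_j-v_j\rvert$ and $\lvert\beta_j\rvert$ are small; this last estimate also needs $\fc$ large, which is where $\fc\ge\minmTwoD$ enters (the exact degree-$\fc$ kernel must be close enough to its limiting profile for the numerical tail sums to be controlled by clean constants).

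It then remains to prove $\lvert q(t)\rvert<1$ on $\mathbb{T}^2\setminus T$. I would split the torus into small $\ell_\infty$-balls around the $t_j$ and the complementary far region. On the far region one bounds $\lvert q(t)\rvert$ directly by the triangle inequality, using the smallness of the coefficients and the decay of $\bar K,\nabla\bar K$ away from their peaks; this is again a finite lattice sum, shown numerically to be strictly below $1$ once $\fc\ge\minmTwoD$. Near a fixed $t_j$, one uses the interpolation data $q(t_j)=v_j$, $\nabla q(t_j)=0$ together with a quantitative bound showing that the Hessian of $t\mapsto\lvert q(t)\rvert^2$ (equivalently of $\operatorname{Re}(\bar v_j q)$) is negative definite at $t_j$ while third derivatives of $q$ are uniformly bounded on the ball; a second-order Taylor expansion then gives $\lvert q(t)\rvert<1$ for $t\ne t_j$ in the ball. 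The negative definiteness comes from the term $\alpha_j K''(0)\,I$ (with $\alpha_j\approx v_j$, $K''(0)<0$) dominating all other contributions via the decay estimates. In the real case one runs the identical argument with the Hessian of $q$ itself; the complex case is the same with the slightly larger constant stated.

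The main obstacle, compared with $d=1$, is entirely quantitative. In two dimensions the number of spikes that can lie within a given distance of a fixed $t_j$ grows quadratically, so the interaction sums decay more slowly, and one must (i) tune the kernel and the radius of the near-balls and (ii) carry out careful bookkeeping of one- and two-dimensional sums such as $\sum_{\ell\in\mathbb{Z}^2}\bar K(\ell\Delta)$, $\sum_\ell\lVert\nabla\bar K(\ell\Delta)\rVert$, $\sum_\ell\lVert\nabla^2\bar K(\ell\Delta)\rVert$, plus the error from replacing the degree-$\fc$ kernel by its limit, so as to certify that both the ``perturbation-of-identity'' bound and the far-field bound fall strictly below $1$ exactly when $\Delta\ge\optvaluetwoD\,\lambda_c$ and $\fc\ge\minmTwoD$. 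Everything else --- duality, the algebra of the interpolation system, the Taylor argument --- is structurally identical to the one-dimensional proof.
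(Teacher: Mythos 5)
Your proposal follows essentially the same route as the paper's proof: a dual-certificate reduction, the tensorized squared Fej\'er kernel $\bar K(t)=K(t_1)K(t_2)$, interpolation of values and gradients at the spikes, invertibility of the interpolation system via a perturbation-of-identity bound, a local negative-definite Hessian argument near each $t_j$, and a far-field triangle-inequality bound using 2D lattice sums (handled in the paper via Lemma~\ref{lem:simple}). The only cosmetic difference is that the paper proves the Hessian of $q$ (real case) is negative definite on the whole near-ball and separately checks $q>-1$ there, rather than invoking a third-derivative/Taylor remainder bound; your $|q|^2$-Taylor phrasing is an equivalent way to package the same estimates.
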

Whereas we have tried to optimize the constant in one
dimension\footnote{This is despite the fact that the authors have a
  proof---not presented here---of a version of Theorem
  \ref{theorem:noiseless} with a minimum separation at least equal to
  $1.85 \, \lambda_c$.}, we have not really attempted to do so here in order to keep the proof reasonably short and simple. Hence, this theorem is
subject to improvement.

Theorem \ref{theorem:2D} is proved for real-valued measures in Section
\ref{sec:proof_2D} of the Appendix.  However, the proof techniques can
be applied to higher dimensions and complex measures almost
directly. In details, suppose we observe the discrete Fourier
coefficients of a $d$-dimensional object at $k = (k_1, \ldots, k_d)
\in \Z^d$ corresponding to low frequencies $0 \le |k_1|, \ldots, |k_d|
\le \fc$. Then the minimum total-variation solution is exact provided
that the minimum distance obeys $\Delta(T) \ge c_d \, \lambda_c$,
where $c_d$ is some positive numerical constant depending only on the
dimension. Finally, as the proof makes clear, extensions to other
settings, in which one observes Fourier coefficients if and only if
the $\ell_2$ norm of $k$ is less or equal to a frequency cut-off, are
straightforward.

\subsection{Discrete super-resolution}
\label{sec:discrete-SR}

Our continuous theory immediately implies analogous results for finite
signals. Suppose we wish to recover a discrete signal $x \in
\mathbb{C}^N$ from low-frequency data. Just as before, we could
imagine collecting low-frequency samples of the form
\begin{equation}
\label{eq:discreteFT}
y_k = \sum_{t = 0}^{N-1} x_t e^{-i2\pi k t/N}, \quad |k| \le \fc;
\end{equation}
the connection with the previous sections is obvious since $x$ might
be interpreted as samples of a discrete signal on a grid $\{t/N\}$
with $t = 0,1, \ldots, N-1$. In fact, the continuous-time setting is
the limit of infinite resolution in which $N$ tends to infinity while
the number of samples remains constant ($\fc$ fixed). Instead, we can
choose to study the regime in which the ratio between the actual
resolution of the signal $1/N$ and the resolution of the data defined
as $1/\fc$ is constant. This gives the corollary below.
\begin{corollary}
\label{cor:discrete}
Let $T \subset \{0,1, \ldots, N-1\}$ be the support of $\{x_t\}_{t =
  0}^{N-1}$ obeying
\begin{equation}
  \min_{t, t' \in T : t \neq t'} \frac{1}{N} \abs{t-t'} \geq {\optvalue}\, \lambda_c = {\optvalue}\, /{\fc}. % \quad \Leftrightarrow 
% \quad \frac{1}{N} \Delta(T) \ge {\optvalue}/{\fc}.
\end{equation} 
Then the solution to
\begin{equation}
\label{l1problem}
\min \normOne{\tilde x} \quad \text{subject to} \quad F_n \tilde x = y 
\end{equation}
in which $F_n$ is the partial Fourier matrix in \eqref{eq:discreteFT}
is exact.
\end{corollary}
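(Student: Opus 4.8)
The plan is to derive the corollary directly from Theorem~\ref{theorem:noiseless} by lifting the finite-dimensional problem into the continuum. To a vector $\tilde x \in \mathbb{C}^N$ I would associate the atomic measure
\[
\mu_{\tilde x} \;=\; \sum_{t=0}^{N-1} \tilde x_t \, \delta_{t/N}
\]
supported on the grid $\{t/N : t = 0,\dots,N-1\} \subset \mathbb{T}$. Two elementary identities make this correspondence do all the work. First, for every $k \in \Z$,
\[
\int_0^1 e^{-i2\pi k s}\, \mu_{\tilde x}(\mathrm{d}s) \;=\; \sum_{t=0}^{N-1} \tilde x_t \, e^{-i2\pi k t/N},
\]
so that $\mathcal{F}_n \mu_{\tilde x} = F_n \tilde x$; in particular the data constraint $F_n \tilde x = y$ is equivalent to $\mathcal{F}_n \mu_{\tilde x} = y = \mathcal{F}_n \mu_x$. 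Second, since the atoms $\delta_{t/N}$ sit at distinct points, $\normTV{\mu_{\tilde x}} = \sum_{t} |\tilde x_t| = \normOne{\tilde x}$.

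Next I would apply Theorem~\ref{theorem:noiseless} to $\mu_x$. Its support is $T/N := \{t/N : t \in T\}$, whose minimum separation (the wrap-around $\ell_\infty$ distance on $\mathbb{T}$) equals $\tfrac1N$ times the wrap-around minimum separation of $T$ inside $\{0,\dots,N-1\}$, i.e.\ exactly $\min_{t\neq t'\in T}\tfrac1N|t-t'| \ge 2\lambda_c$ by hypothesis. Since $\fc \ge \minm$, Theorem~\ref{theorem:noiseless} tells us that $\mu_x$ is the \emph{unique} solution of $\min_{\tilde\mu}\normTV{\tilde\mu}$ subject to $\mathcal{F}_n\tilde\mu = y$, where the minimization ranges over all finite complex measures on $[0,1]$.

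It then remains to transfer uniqueness back to the discrete problem. Let $\tilde x \in \mathbb{C}^N$ be any feasible point of \eqref{l1problem} with $\tilde x \neq x$. Then $\mu_{\tilde x}$ is a finite complex measure satisfying $\mathcal{F}_n \mu_{\tilde x} = y$, and $\mu_{\tilde x} \neq \mu_x$ because the two measures assign different masses to at least one grid point. By the uniqueness just established, $\normTV{\mu_{\tilde x}} > \normTV{\mu_x}$, i.e.\ $\normOne{\tilde x} > \normOne{x}$. Hence $x$ is the unique minimizer of \eqref{l1problem}.

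There is no genuinely hard step here: the argument is a change of representation plus a one-line uniqueness transfer, and uniqueness over the larger class of all complex measures automatically forces uniqueness over the subclass of grid-supported measures. The only points I would make sure to state explicitly are (i) the identification $\mathcal{F}_n \mu_{\tilde x} = F_n \tilde x$ — that sampling the continuous Fourier coefficients of a grid-supported measure at $|k|\le \fc$ reproduces the discrete partial Fourier transform \eqref{eq:discreteFT} — and (ii) that the rescaling $t \mapsto t/N$ carries the discrete separation hypothesis precisely onto the hypothesis \eqref{eq:min-dist} of Theorem~\ref{theorem:noiseless}, with the wrap-around convention; the proviso $\fc \ge \minm$ is then inherited verbatim.
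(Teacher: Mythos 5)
Your proof is correct and takes essentially the same route the paper intends: the paper does not spell out a separate argument for Corollary~\ref{cor:discrete} but simply observes that a discrete signal may be interpreted as an atomic measure on the grid $\{t/N\}$, which is exactly the lifting $\tilde x \mapsto \mu_{\tilde x}$ you make explicit, together with the identifications $\mathcal{F}_n\mu_{\tilde x}=F_n\tilde x$ and $\normTV{\mu_{\tilde x}}=\normOne{\tilde x}$ and the uniqueness transfer from Theorem~\ref{theorem:noiseless}.
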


\subsection{The super-resolution factor}
\label{sec:SRF}

In the discrete framework, we wish to resolve a signal on a fine grid
with spacing $1/N$. However, we only observe the lowest $n = 2 f_c +
1$ Fourier coefficients so that in principle, one can only hope to
recover the signal on a coarser grid with spacing only $1/n$ as shown
in Figure \ref{fig:Nyquistgrids}. Hence, the factor $N/n$, or
equivalently, the ratio between the spacings in the coarse and fine
grids, can be interpreted as a super-resolution factor (SRF). Below,
we set
\begin{equation}
  \label{eq:SRF}
  \srf = N/n \approx N/2f_c;
\end{equation}
when the SRF is equal to 5 as in the figure, we are looking for a
signal at a resolution 5 times higher than what is stricto senso
permissible. One can then recast Corollary \ref{cor:discrete} as
follows: if the nonzero components of $\{x_t\}_{t = 0}^{N-1}$ are
separated by at least $\optvaluetimestwo \times \srf$, perfect
super-resolution via $\ell_1$ minimization occurs.
\begin{figure}[ht]
\centering
\includegraphics[width=7cm]{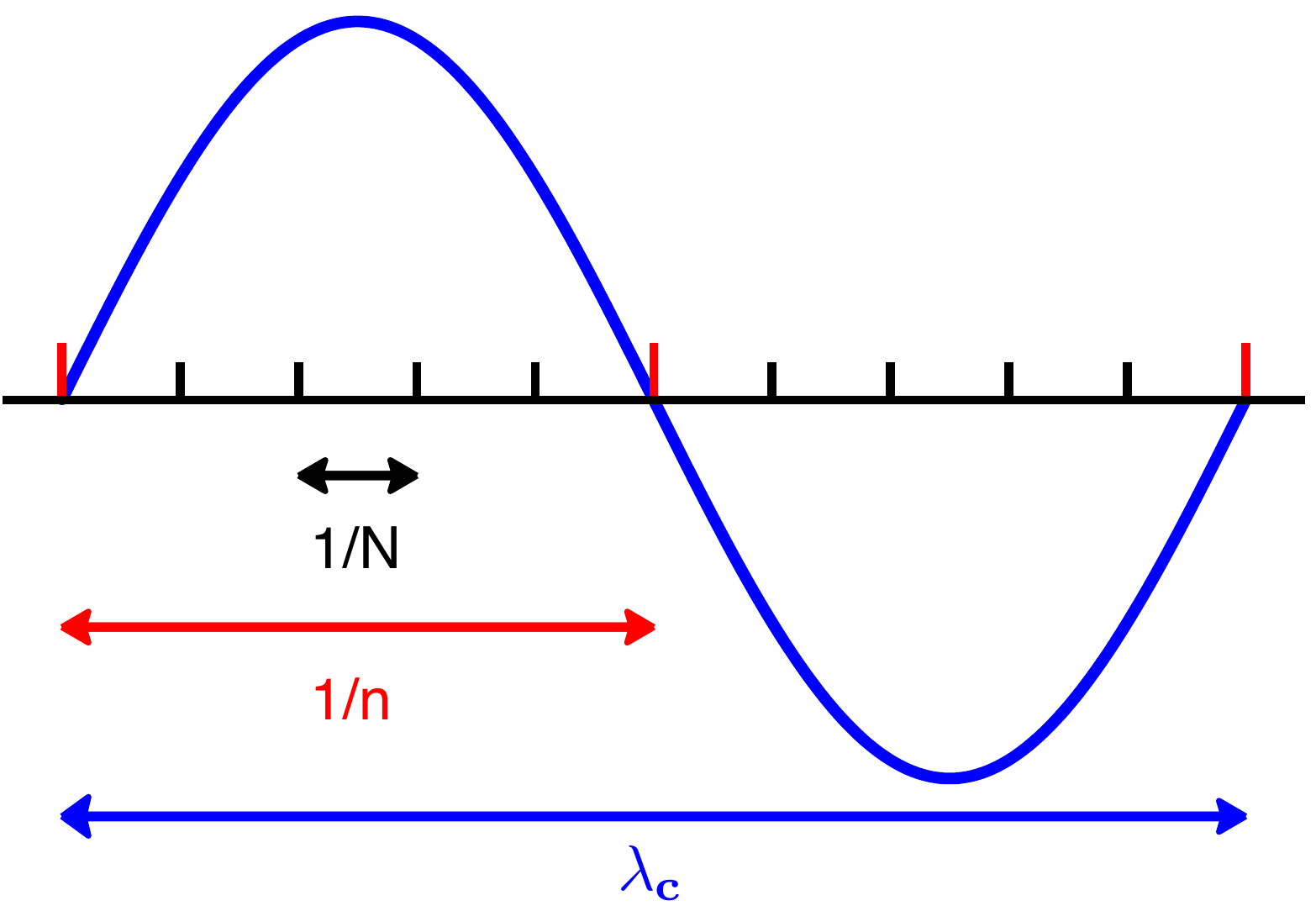}
\caption{Fine grid with spacing $1/N$.  We only observe frequencies
  between $-f_c$ and $f_c$, $f_c/N \approx \frac12 \srf^{-1}$, so that
  the highest frequency sine wave available has wavelength $1/f_c =
  \lambda_c$. These data only allow a Nyquist sampling rate of
  $\lambda_c/2 \approx 1/n$. In this sense, we can interpret the
  super-resolution factor $N/n$ as the ratio between these two
  resolutions.}
% \approx 2/n$.  The coarse and fine grids with spacing $1/n$ and
% $1/N$. The SRF is the ratio $N/n$ between these two resolutions. The
% plot also indicates the highest frequency sine wave available; by
% definition, its wavelength is $1/f_c = \lambda_c \approx 2/n$.}
\label{fig:Nyquistgrids}
\end{figure}

The reason for introducing the SRF is that with inexact data, we
obviously cannot hope for infinite resolution. Indeed, noise will
ultimately limit the resolution one can ever hope to achieve and,
therefore, the question of interest is to study the accuracy one might
expect from a practical super-resolution procedure as a function of
both the noise level and the SRF.

\subsection{Stability}
\label{subsec:stability}

To discuss the robustness of super-resolution methods vis a vis noise,
we examine in this paper the discrete setting of Section
\ref{sec:discrete-SR}. In this setup, we could certainly imagine
studying a variety of deterministic and stochastic noise models, and a
variety of metrics in which to measure the size of the error. For
simplicity, we study a deterministic scenario in which the projection
of the noise onto the signal space has bounded $\ell_1$ norm but is
otherwise arbitrary and can be adversarial. The observations are
consequently of the form
\begin{equation}
\label{noisemodel_l1}
y = F_n x + w, \quad \frac{1}{N} \, \normOne{F_n^{\ast}w} \leq  \delta
\end{equation}
for some $\delta \ge 0$, where $F_n$ is as before. Letting $P_n$ be
the orthogonal projection of a signal onto the first $n$ Fourier
modes, $P_n = \frac{1}{N} F_n^* F_n$, we can view
\eqref{noisemodel_l1} as an input noise model since with $w = F_n z$,
we have
\[
y = F_n (x + z), \quad \normOne{z} \le \delta, \, z = P_n z. 
\]
Another way to write this model with arbitrary input noise $z \in
\C^N$ is 
\[
y = F_n(x + z), \quad \normOne{P_n z} \le \delta
\]
since the high-frequency part of $z$ is filtered out by the
measurement process. Finally, with $s = {N}^{-1} F_n^* y$,
\eqref{noisemodel_l1} is equivalent to
\begin{equation}
  \label{eq:model-time}
  s = P_n x + P_n z, \quad \normOne{P_n z} \le \delta.
\end{equation}
In words, we observe a low-pass version of the signal corrupted with
an additive low-pass error whose $\ell_1$ norm is at most $\delta$.
In the case where $n = N$, $P_n = I$, and our model becomes
\[
s = x + z, \quad \|z\|_1 \le \delta.
\]
In this case, one cannot hope for a reconstruction $\hat x$ with an
error in the $\ell_1$ norm less than the noise level $\delta$. We now
wish to understand how quickly the recovery error deteriorates as the
super-resolution factor increases.

We propose studying the relaxed version of the noiseless problem
\eqref{l1problem}
\begin{equation}
\label{l1problem_relaxed}
\min_{\tilde x} \,  \normOne{\tilde x} 
\quad \text{subject to} \quad \normOne{P_n \tilde x - s} \leq \delta. 
\end{equation}
We show that this recovers $x$ with a precision inversely proportional to
$\delta$ and to the square of the
super-resolution factor.
\begin{theorem}
\label{theorem:discrete_noisy}
Assume that $x$ obeys the separation condition
\eqref{eq:min-dist}. Then with the noise model
\eqref{eq:model-time}, the solution $\hat{x}$ to
\eqref{l1problem_relaxed} obeys
\begin{equation}
\label{eq:discrete-noisy}
% \normOne{x-\hat{x}} \leq 1534 \eta^2 \delta.   \text{.}
\normOne{\hat x- x} \leq C_0 \, \text{\em SRF}^2 \, \delta,  
\end{equation}
for some positive constant $C_0$. 
\end{theorem}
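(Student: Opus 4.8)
The plan is to build the stability estimate on top of the noiseless machinery, specifically on the existence of a good dual certificate. Recall that the proof of Theorem \ref{theorem:noiseless} (and its discrete corollary) produces, under the separation condition, a low-frequency trigonometric polynomial $q = P_n q$ with $q(t_j) = \sgn(a_j)$ on $T$ and $|q(t)| < 1$ off $T$; in fact one gets quantitative control, namely $|q(t)| \le 1 - c_1 (N/n)^{-2} \, d(t,T)^2$ for points near $T$ and $|q(t)| \le 1 - c_2$ for points far from $T$, where $d(\cdot,T)$ is the wrap-around distance measured in grid units. The factor $(N/n)^{-2} = \srf^{-2}$ is the crucial one: it reflects the curvature of the Fejér-type kernel used to build the certificate and is exactly what will propagate into the final bound. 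First I would write $\hat x = x + h$ where $h$ is the error; feasibility of $x$ in \eqref{l1problem_relaxed} together with $\normOne{\hat x} \le \normOne{x}$ gives $\normOne{x+h} \le \normOne{x}$ and $\normOne{P_n h} \le 2\delta$ (since both $\hat x$ and $x$ are within $\delta$ of $s$ in the $P_n$-filtered $\ell_1$ norm).

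The second step is the standard decomposition of $h$ relative to the support. Split $h = h_T + h_{T^c}$. Using $q$ as a certificate and the sign condition, one derives from $\normOne{x+h} \le \normOne{x}$ the inequality $\normOne{h_{T^c}} \le \operatorname{Re} \iprod{q}{h} + (\text{small terms})$, and since $q$ is low-frequency, $\iprod{q}{h} = \iprod{q}{P_n h}$, which is bounded by $\normInf{q}\,\normOne{P_n h} \le 2\delta$. More carefully, one wants to extract from the strict inequality $|q(t)|<1$ a bound showing that $h_{T^c}$ cannot place much mass near $T$: schematically, $\sum_{t \notin T} (1 - |q(t)|) |h(t)| \lesssim \delta$, which combined with the quadratic lower bound on $1-|q(t)|$ yields $\sum_{t \notin T} \min\{c_2,\, c_1 \srf^{-2} d(t,T)^2\} |h(t)| \lesssim \delta$. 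This is the heart of the localization: it says the error away from $T$ is concentrated very close to $T$, at scale $O(\srf)$ grid points, with total mass $O(\srf^2 \delta)$ once the worst case (mass sitting at distance $\sim \srf$ from $T$, where the weight is $\sim c_1$ a constant) is accounted for. Here is where I would need a companion interpolation construction: in addition to the scalar certificate $q$, build low-frequency polynomials $q_\ell$ with $q_\ell(t_j) = 0$ and $q_\ell'(t_j) = 0$ except a controlled spike near one $t_j$, to get a matching lower bound tying $\normOne{h_T}$ to $\normOne{h_{T^c}}$ and to $\normOne{P_n h}$; this is analogous to the auxiliary certificates used for super-resolution error bounds and is what converts "error is localized near $T$" into "$\normOne{h} \lesssim \srf^2 \delta$."

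The third step assembles the pieces: $\normOne{h} = \normOne{h_T} + \normOne{h_{T^c}}$, where $\normOne{h_{T^c}} \lesssim \srf^2 \delta$ from the localization estimate, and $\normOne{h_T}$ is controlled by combining $\normOne{P_n h} \le 2\delta$ with the fact that $P_n h_T$ is close to $P_n h - P_n h_{T^c}$ and $h_T$ is supported on the well-separated set $T$ where $P_n$ is boundedly invertible in the appropriate sense (a consequence of the separation condition — the relevant Gram matrix of Dirichlet kernels restricted to $T$ is well-conditioned). I would expect the main obstacle to be exactly this coupling term: controlling $\normOne{h_T}$ requires showing that the low-frequency data, restricted to the signal's own support, determines the on-support amplitudes stably, and doing so with the $\srf^2$ dependence rather than something worse. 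Concretely, the delicate point is that $h_{T^c}$ may have mass spread over $\sim \srf$ grid points adjacent to each $t_j$, so $P_n h_{T^c}$ interferes with $P_n h_T$ in a way that must be disentangled; bounding this interference is where the $\srf^2$ (as opposed to, say, $\srf^3$ or $\srf$) must be extracted carefully, and it is the step I would budget the most care for. Everything else — feasibility arithmetic, Hölder bounds against $q$, and the quadratic expansion of $1-|q|$ near $T$ — is routine given the certificates from the noiseless theory.
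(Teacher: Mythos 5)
Your main ingredients are correct: the quadratic off-support decay of the dual certificate (Lemma \ref{lemma:qbound_strict}: $|q(t)|\leq 1-0.3353\,\fc^2(t-\tau)^2$ near $\tau\in T$, which at the nearest off-$T$ grid point gives $|q_t|\leq \rho = 1-\Theta(\srf^{-2})$), the feasibility bound $\normOne{P_n h}\leq 2\delta$, the identity $\iprod{q}{h}=\iprod{q}{P_n h}$, and the resulting localization estimate $\sum_{t\notin T}(1-|q_t|)|h_t|\lesssim\delta$, from which $\normOne{P_{T^c}h}\lesssim \srf^2\delta$ follows. The gap is in your treatment of $\normOne{P_T h}$. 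You do \emph{not} need the companion interpolating polynomials $q_\ell$, nor the Gram-matrix conditioning of the Dirichlet kernels on $T$: the single certificate $q$ already gives $\normOne{P_T h}=\operatorname{Re}\iprod{q}{h}-\operatorname{Re}\iprod{P_{T^c}q}{P_{T^c}h}\leq 2\delta+\rho\normOne{P_{T^c}h}$, and plugging in $\normOne{P_{T^c}h}\leq 2\delta/(1-\rho)$ (which you obtain by combining this with the cone condition $\normOne{P_{T^c}h}\leq\normOne{P_T h}$ from $\ell_1$-minimality) yields $\normOne{h}\leq 4\delta/(1-\rho)=O(\srf^2\delta)$. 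The Gram-matrix fallback you propose is a dead end here: it controls $\ell_2$ quantities and, when converted to $\ell_1$, would introduce a $\sqrt{|T|}$ factor that does not appear in \eqref{eq:discrete-noisy}.

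It is also worth noting that the paper organizes the argument along a different decomposition. Rather than splitting $h$ by support into $P_T h + P_{T^c}h$, it splits $h$ by \emph{frequency} into $h_L=P_n h$ and $h_H=h-h_L$, packages the certificate bound as a strict null-space property (Lemma \ref{lemma:nullspace_prop}: $\normOne{P_T g}\leq\rho\normOne{P_{T^c}g}$ for every $g$ with $F_n g=0$, where $1-\rho=\alpha/\srf^2$), applies it to $h_H$, and combines with $\normOne{h_L}\leq 2\delta$ and $\ell_1$-minimality to get $\normOne{h}\leq 4\delta/(1-\rho)$. This is cleaner because $\iprod{q}{h_H}=0$ holds \emph{exactly}, so the null-space property is used without any slack, whereas your route has to carry the $2\delta$ correction through every step. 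Both routes arrive at the same bound and draw the $\srf^2$ from the same source; but if you keep your support-based decomposition, the fix is simply to reuse the inequality you already derived rather than reaching for new certificates.
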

This theorem, which shows the simple dependence upon the
super-resolution factor and the noise level, is proved in Section
\ref{sec:stability}. Clearly, plugging in $\delta = 0$ in
\eqref{eq:discrete-noisy} gives Corollary \ref{cor:discrete}.

Versions of Theorem \ref{theorem:discrete_noisy} hold in the
continuous setting as well, where the locations of the spikes are not
assumed to lie on a given fine grid but can take on a continuum of
values. The arguments are more involved than those needed to establish
\eqref{eq:discrete-noisy} and we leave a detailed study to a future
paper.

\subsection{Sparsity and stability} 
\label{sec:slepian_intro}

Researchers in the field know that super-resolution under sparsity
constraints alone is hopelessly ill posed. In fact, without a minimum
distance condition, the support of sparse signals can be very
clustered, and clustered signals can be nearly completely annihilated
by the low-pass sensing mechanism. The extreme ill-posedness can be
understood by means of the seminal work of Slepian
\cite{slepian_discrete} on discrete prolate spheroidal sequences. This
is surveyed in Section \ref{sec:slepian} but we give here a concrete
example to drive this point home.

To keep things simple, we consider the `analog version' of
\eqref{eq:model-time} in which we observe 
\[
s = \mathcal{P}_W (x + z);
\]
$\mathcal{P}_W(x)$ is computed by taking the discrete-time Fourier
transform $y(\omega) = \sum_{t \in \Z} x_t e^{-i2\pi \omega t}$, $\omega \in
[-1/2,1/2],$ and discarding all `analog' frequencies outside of the
band $[-W,W]$.  If we set
\[
2W = n/N = 1/\srf,
\]
we essentially have $\mathcal{P}_W = P_n$ where the equality is true
in the limit where $N \goto \infty$ (technically, $\mathcal{P}_W$ is
the convolution with the sinc kernel while $P_n$ uses the Dirichlet
kernel). Set a mild level of super-resolution to fix ideas, 
\[
\srf = 4. 
\]
Now the work of Slepian shows that there is a $k$-sparse signal
supported on $[0, \ldots, k-1]$ obeying
\begin{equation}
  \label{eq:eigenvalue_slepian}
  %\mathcal{P}_W x = \lambda \, x, \quad \lambda \approx 17.8 \, \sqrt{k}
  %\, e^{-3.23 k}.
\mathcal{P}_W x = \lambda \, x, \quad \lambda \approx 5.22 \, \sqrt{k+1}
  \, e^{-3.23 \brac{k+1}}.
\end{equation}
For $k = 48$,
\begin{equation}
\label{eq:ten_minus_60}
%\lambda \le 6 \times 10^{-66}. 
\lambda \le 7 \times 10^{-68}. 
\end{equation}
Even knowing the support ahead of time, how are we going to recover
such signals from noisy measurements? Even for a very mild
super-resolution factor of just $\srf = 1.05$ (we only seek to extend
the spectrum by 5\%), \eqref{eq:eigenvalue_slepian} becomes
\begin{equation}
  \label{eq:eigenvalue_slepian_1p05}
  \mathcal{P}_W x = \lambda \, x, \quad \lambda \approx 3.87 \, \sqrt{k+1}
  \, e^{-0.15 \brac{k+1}},
\end{equation}
which implies that there exists a unit-norm signal with at most $256$
consecutive nonzero entries such that $\normTwo{\mathcal{P}_W x} \leq
1.2 \times 10^{-15}$. Of course, as the super-resolution factor
increases, the ill-posedness gets worse. For large values of $\srf$,
there is $x$ obeying \eqref{eq:eigenvalue_slepian} with
\begin{equation}
\label{eq:slepian_largeSRF}
\log \lambda \approx - (0.4831 + 2\log(\srf))k.  
\end{equation}
It is important to emphasize that this is not a worst case
analysis. In fact, with $k = 48$ and $\srf = 4$, Slepian shows that
there is a large dimensional subspace of signals supported on $\C^k$
spanned by orthonormal eigenvectors with eigenvalues of magnitudes
nearly as small as \eqref{eq:ten_minus_60}. 

\subsection{Comparison with related work}
\label{subsec:comparison}

The use of $\ell_1$ minimization for the recovery of sparse spike
trains from noisy bandlimited measurements has a long history and was
proposed in the 1980s by researchers in seismic prospecting
\cite{claerbout,levy,santosa}. For finite signals and under the rather
restrictive assumption that the signal is real valued and nonnegative,
\cite{fuchs_positive} and \cite{donoho_positive} prove that $k$ spikes
can be recovered from $2k+1$ Fourier coefficients by this
method. 
% \ejc{Does it work in 2D?} 
The work \cite{supportPursuit} extends this result to the continuous
setting by using total-variation minimization. In contrast, our
results require a minimum distance between spikes but allow for
arbitrary complex amplitudes, which is crucial in applications. The
only theoretical guarantee we are aware of concerning the recovery of
spike trains with general amplitudes is very recent and due to Kahane
\cite{kahane_superresolution}. Kahane offers variations on compressive
sensing results in \cite{candesFreq} and studies the reconstruction of
a function with lacunary Fourier series coefficients from its values
in a small contiguous interval, a setting that is equivalent to that
of Corollary \ref{cor:discrete} when the size $N$ of the fine grid
tends to infinity. With our notation, whereas we require a minimum
distance equal to $\optvaluetimestwo \times \srf$, this work shows
that a minimum distance of $10 \times \srf \sqrt{\log \srf}$ is
sufficient for exact recovery.  Although the log factor might seem
unimportant at first glance, it in fact precludes extending Kahane's
result to the continuous setting of Theorem
\ref{theorem:noiseless}. Indeed, by letting the resolution factor tend
to infinity so as to approach the continuous setting, the spacing between
consecutive spikes would need to tend to infinity as well.

As to results regarding the robustness of super-resolution in the
presence of noise, Donoho \cite{donohoSuperres} studies the modulus of
continuity of the recovery of a signed measure on a discrete lattice
from its spectrum on the interval $\sqbr{-f_c,f_c}$, a setting which
is also equivalent to that of Corollary \ref{cor:discrete} when the
size $N$ of the fine grid tends to infinity.  More precisely, if the
support of the measure is constrained to contain at most $\ell$
elements in any interval of length $ 2/(\ell \, f_c)$, then the
modulus of continuity is of order $O\brac{\srf^{2\ell+1}}$ as $\srf$
grows to infinity (note that for $\ell = 1$ the constraint reduces to
a minimum distance condition between spikes, which is comparable to
the separation condition \eqref{eq:min-dist}). This means that if the
$\ell_2$ norm of the difference between the measurements generated by
two signals satisfying the support constraints is known to be at most
$\delta$, then the $\ell_2$ norm of the difference between the signals
may be of order $O\brac{\srf^{2\ell+1}\, \delta}$. This result
suggests that, in principle, the super-resolution of spread-out
signals is not hopelessly ill-conditioned. Having said this, it does
not propose any practical recovery algorithm (a brute-force search for
sparse measures obeying the low-frequency constraints would be 
computationally intractable).

Finally, we would like to mention an alternative approach to the
super-resolution of pointwise events from coarse scale
data. Leveraging ideas related to error correction codes and spectral
estimation, \cite{fri} shows that it is possible to recover trains of
Dirac distributions from low-pass measurements at their \textit{rate
  of innovation} (in essence, the density of spikes per unit of
time). This problem, however, is extraordinarily ill posed without a
minimum separation assumption as explained in Sections
\ref{sec:slepian_intro} and \ref{sec:slepian}.  Moreover, the proposed
reconstruction algorithm in \cite{fri} needs to know the number of
events ahead of time, and relies on polynomial root finding. As a
result, it is highly unstable in the presence of noise as discussed in
\cite{tan_stochastic_fri}, and in the presence of approximate
sparsity. Algebraic techniques have also been applied to the location
of singularities in the reconstruction of piecewise polynomial
functions from a finite number of Fourier coefficients (see
\cite{banerjee_algebraic,batenkov_algebraic,eckhoff_algebraic} and
references therein). The theoretical analysis of these methods proves
their accuracy up to a certain limit related to the number of
measurements. Corollary \ref{cor:piecewise} takes a different
approach, guaranteeing perfect localization if there is a minimum
separation between the singularities.

\subsection{Connections to sparse recovery literature}
\label{subsec:sparse_recovery}

Theorem \ref{theorem:noiseless} and Corollary \ref{cor:discrete} can
be interpreted in the framework of sparse signal recovery. For
instance, by swapping time and frequency, Corollary \ref{cor:discrete}
asserts that one can recover a sparse superposition of tones with
arbitrary frequencies from $n$ time samples of the form 
\[
y_t = \sum_{j = 0}^{N-1} x_j e^{-i2\pi t \omega_j}, \quad t = 0, 1, \ldots, n-1
\]
where the frequencies are of the form $\omega_j = j/N$. Since the
spacing between consecutive frequencies is not $1/n$ but $1/N$, we may
have a massively oversampled discrete Fourier transform, where the
oversampling ratio is equal to the super-resolution factor.
% That is, we
% may have a massively oversampled discrete Fourier transform---the
% oversampling is exactly the super-resolution factor---since the
% spacing between consecutive frequencies is not $1/n$ but $1/N$. 
In this context, a sufficient condition for perfectly super-resolving
these tones is a minimum separation of $\optvaluetimestwo/n$. In
addition, Theorem \ref{theorem:noiseless} extends this to continuum
dictionaries where tones $\omega_j$ can take on arbitrary real values.

% implies that $\ell_1$ minimization allows to
% determine the coefficients of a signal consisting of up to
% $O\brac{\fc}$ discrete sinusoids in an oversampled discrete Fourier
% transform (DFT) dictionary $F$ of dimensions $2f_c+1$ by $N$, as long
% the sinusoids are not too correlated. This has the following geometric
% interpretation: if $P$ is the convex hull of the columns in the
% dictionary, the vertices corresponding to atoms that are at least
% $\optvalue \eta$ apart in the dictionary \textit{always} span a face
% of the polytope $P$ \cite{donoho_neighborlypolytopes}.

In the literature, there are several conditions that guarantee perfect
signal recovery by $\ell_1$ minimization.  The results obtained from
their application to our problem are, however, very weak. 
\begin{itemize}
\item The matrix with normalized columns $f_j = \{e^{-i2\pi t
    \omega_j}/\sqrt{n}\}_{t = 0}^{n-1}$ does not obey the restricted
  isometry property \cite{candes_decoding} since a submatrix composed
  of a very small number of contiguous columns is already very close
  to singular, see \cite{slepian_discrete} and Section \ref{sec:slepian} for related claims.  For
  example, with $N=512$ and a modest $\srf$ equal to 4, the smallest
  singular value of submatrices formed by eight consecutive columns is
  $3.32 \; 10^{-5}$.
\item Applying the discrete uncertainty principle proved in
  \cite{uncertainty_principles}, we obtain that recovery by $\ell_1$
  minimization succeeds as long as
\[
2\abs{T} (N - n) < N. 
\]
If $n < N/2$, i.e.~$\srf > 2$, this says that $|T|$ must be zero. In
other words, to recover one spike, we would need at least half of the
Fourier samples.
\item Other guarantees based on the coherence of the dictionary yield
  similar results. A popular condition \cite{just_relax} requires that
\begin{equation}
\label{eq:inc}
|T| < \frac12 (M^{-1} + 1), 
\end{equation}
where $M$ is the coherence of the system defined as $\max_{i \neq j}
|\<f_i, f_j\>|$. When $N = 1024$ and $\srf = 4$, $M \approx 0.9003$ so
that this becomes $|T| \le 1.055$, and we can only hope to recover
one spike.
\end{itemize}

There are slightly improved versions of \eqref{eq:inc}. In
\cite{dossal_thesis}, Dossal studies the deconvolution of spikes by
$\ell_1$ minimization. This work introduces the weak exact recovery
condition (WERC) defined as % which depends on the support $T$ of the signal and
% is inspired by similar conditions in \cite{just_relax}:
\[
\text{WERC}\brac{T}= \frac{\beta\brac{T}}{1-\alpha\brac{T}}, 
\]
where 
\[
\alpha(T) = \sup_{i \in T} \sum_{j\in T/\keys{i}} \abs{\PROD{f_i}{f_j}},
\quad \beta(T) = \sup_{i \in T^c} \sum_{j\in T}
\abs{\PROD{f_i}{f_j}}.
\]
The condition $\text{WERC}\brac{T}<1$ guarantees exact recovery.
Considering three spikes and using Taylor expansions to bound the sine
function, the minimum distance needed to ensure that
$\text{WERC}\brac{T}<1$ may be lower bounded by
$24\srf^3/\pi^3-2\srf$. This is achieved by considering three spikes
at $\omega \in \{0, \pm \Delta\}$, where $\Delta = (k+1/2)/n$ for some
integer $k$; we omit the details.  If $N=20,000$ and the number of
measurements is $1,000$, this allows for the recovery of at most $3$
spikes, whereas Corollary \ref{cor:discrete} implies that it is
possible to reconstruct at least $n/4 = 250$. Furthermore, the cubic
dependence on the super-resolution factor means that if we fix the
number of measurements and let $N \rightarrow \infty$, which is
equivalent to the continuous setting of Theorem
\ref{theorem:noiseless}, the separation needed becomes infinite and we
cannot guarantee the recovery of even two spikes.

Finally, we would also like to mention some very recent work on sparse
recovery in highly coherent frames by modified greedy compressed
sensing algorithms \cite{spectral_cs,grids_fannjiang}. Interestingly,
these approaches explicitly enforce conditions on the recovered
signals that are similar in spirit to our minimum distance
condition. As opposed to $\ell_1$-norm minimization, such greedy
techniques may be severely affected by large dynamic ranges (see
\cite{grids_fannjiang}) because of the phenomenon illustrated in
Figure \ref{fig:dirichlet}. Understanding under what conditions their
performance may be comparable to that of convex programming methods is
an interesting research direction.

\subsection{Extensions}

% \cfg{I am not sure how we would be able to reconstruct a piecewise
%   linear function. We would know the value of the changes in slope
%   between linear intervals and also its total mean, but not the
%   actual values of the slopes in each of the linear
%   intervals. Perhaps it is this that can be estimated by using the
%   Gegenbauer polynomials.}
Our results and techniques can be extended to super-resolve many other
types of signals. We just outline such a possible extension. 
% Suppose $x : [0,1] \rightarrow \C$ is a periodic and piecewise
% smooth function
Suppose $x : [0,1] \rightarrow \C$ is a periodic piecewise smooth
function with period 1, defined by
\[
x(t) =\sum_{t_j\in T} \mathbf{1}_{\brac{t_{j-1},t_j}} p_j(t); 
\]
on each time interval $(t_{j-1}, t_j)$, $x$ is polynomial of degree
$\ell$. For $\ell = 0$, we have a piecewise constant function, for
$\ell = 1$, a piecewise linear function and so on.  Also suppose $x$
is globally $\ell-1$ times continuously differentiable (as for
splines). We observe
\[
y_k = \int_{[0,1]} x(t) \, e^{-i2\pi kt} \text{d}t \; , \; \abs{k}\leq
\fc.
\] 
%If the breakpoints were known, \cite{gottlieb_piecewise} shows that it
%is possible to recover the signal with accuracy that is exponential in
%$n = 2 \fc + 1$ by exploiting a series expansion in orthogonal
%Gegenbauer polynomials. Hence, recovering $x$ reduces to estimating
%the discontinuity points. 
The $(\ell+1)$th derivative of $x$ (in the sense of distributions)
denoted by $x^{(\ell+1)}$ is an atomic measure supported on $T$ and
equal to
\[
x^{(\ell+1)} = \sum_j a_j \delta_{t_j}, \quad a_j =
p^{(\ell)}_{j+1}(t_j) - p^{(\ell)}_{j}(t_j). 
\]
Hence, we can imagine recovering $x^{(\ell+1)}$ by solving 
\begin{equation}
  \label{eq:minTVext}
  \min \|\tilde x^{(\ell+1)}\|_{\text{TV}} \quad \text{subject to} \quad F_n \tilde x = y.   
\end{equation}
% can be interpreted as being equal to the following measure supported
% on $T$:
% \begin{align*}
% \frac{d^Pf}{dt^P}\brac{t}=\sum_{j\in T} \brac{\frac{d^Pp_{j+1}}{dt^P}\brac{t_j} - \frac{d^Pp_{j}}{dt^P} \brac{t_j}}\delta\brac{t-t_j} \text{d}t\text{.}
% \end{align*}
Standard Fourier analysis gives that the $k$th Fourier coefficient of
this measure is given by 
\begin{equation}
\label{eq:newdata}
y^{(\ell+1)}_k = (i 2 \pi k)^{\ell+1} \, y_k, \quad k \neq 0.
\end{equation}
Hence, we observe the Fourier coefficients of $x^{(\ell+1)}$ except
that corresponding to $k = 0$, which must vanish since the periodicity
implies $\int_{0}^1 x^{(\ell+1)}(\text{d}t) = 0 = \int_{0}^1
x^{(j)}(t) \text{d}t$, $1 \le j \le \ell$. Hence, it follows from
Theorem \ref{theorem:noiseless} that \eqref{eq:minTVext} recovers
$x^{(\ell+1)}$ exactly as long as the discontinuity points are at
least $2\lambda_c$ apart.
% Note that we cannot obtain the Fourier coefficient at the origin in
% this way.  If we set it to zero, we are implicitly assuming that
% $x^{(\ell+1)}$ has zero mean. However, since we are just interested
% in determining $T$, this is of no consequence. Theorem
% \ref{theorem:noiseless} implies that we can determine the
% discontinuity points with arbitrary precision, as long as they are
% relatively spread.
% \cfg{However, since we are considering an $\ell-1$ continuously
%   differentiable function defined on $\mathbb{T}$, the mean value of
%   its $(\ell+1)$th derivative must be zero, so this is not a
%   problem. Theorem \ref{theorem:noiseless} implies that we can
%   determine $x^{(\ell+1)}$ with arbitrary precision, as long as its
%   support is relatively spread. } 
Because $x$ is $\ell-1$ times continuously differentiable and
periodic, $x^{(\ell+1)}$ determines $x$ up to a shift in function
value, equal to its mean. However, we can read the mean value of $x$
off $y_0 = \int_0^1 x(t) \text{d}t$ and, therefore,
\eqref{eq:minTVext} achieves perfect recovery.
\begin{corollary}
\label{cor:piecewise}
%If $T = \{t_j\}$ obeys \eqref{eq:min-dist}, $x^{(\ell+1)}$ can be
%uniquely determined exactly by minimizing the total variation
%subject to the linear constraints \eqref{eq:newdata}.
If $T = \{t_j\}$ obeys \eqref{eq:min-dist}, $x$ is determined exactly
from $y$ by solving \eqref{eq:minTVext}.
\end{corollary}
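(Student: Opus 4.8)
The plan is to reduce the recovery of the piecewise-polynomial function $x$ to the recovery of its $(\ell+1)$st derivative, which is an atomic measure to which Theorem \ref{theorem:noiseless} applies directly. Concretely, I would argue as follows. First, observe that $x^{(\ell+1)} = \sum_j a_j \delta_{t_j}$ with $a_j = p^{(\ell)}_{j+1}(t_j) - p^{(\ell)}_{j}(t_j)$ is supported on $T$, and that, since $T$ obeys the separation condition \eqref{eq:min-dist}, Theorem \ref{theorem:noiseless} guarantees that $x^{(\ell+1)}$ is the \emph{unique} total-variation minimizer among all finite complex measures consistent with the Fourier data $\{y^{(\ell+1)}_k\}_{|k|\le \fc}$. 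The key algebraic identity is \eqref{eq:newdata}: for $k \neq 0$ the datum $y^{(\ell+1)}_k = (i2\pi k)^{\ell+1} y_k$ is an invertible linear function of $y_k$, so knowing $\{y_k\}_{1\le|k|\le \fc}$ is equivalent to knowing $\{y^{(\ell+1)}_k\}_{1\le|k|\le \fc}$; and the missing mode $k=0$ of $x^{(\ell+1)}$ is forced to vanish by periodicity together with the global $(\ell-1)$-fold differentiability (which makes $\int_0^1 x^{(j)} = 0$ for $1\le j\le \ell$), so in fact the full low-frequency spectrum of $x^{(\ell+1)}$ is determined by $y$.

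Next I would make precise the claim that $x^{(\ell+1)}$ together with $y_0$ determines $x$. Since $x$ is periodic and $\ell-1$ times continuously differentiable, integrating $x^{(\ell+1)}$ successively $\ell+1$ times—with the integration constants pinned down at each stage by periodicity of the lower-order derivatives—recovers $x$ up to an additive constant; that constant is its mean $\int_0^1 x(t)\,\mathrm{d}t = y_0$, which is observed. One clean way to phrase this is on the Fourier side: for $k\neq 0$, $\hat x(k) = y^{(\ell+1)}_k/(i2\pi k)^{\ell+1}$ is a fixed linear function of the data, and $\hat x(0) = y_0$, so $x$ is uniquely pinned down by $y$.

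Finally I would package the logic for the optimization problem \eqref{eq:minTVext}. Any feasible $\tilde x$ for \eqref{eq:minTVext} shares the data $F_n \tilde x = y$; hence $\tilde x^{(\ell+1)}$ is feasible for the derivative-level total-variation problem with data $\{y^{(\ell+1)}_k\}$ (the $k=0$ mode again vanishing by periodicity of $\tilde x$ and its derivatives). By the uniqueness half of Theorem \ref{theorem:noiseless}, minimizing $\|\tilde x^{(\ell+1)}\|_{\mathrm{TV}}$ forces $\tilde x^{(\ell+1)} = x^{(\ell+1)}$; combined with $\widehat{\tilde x}(0) = y_0 = \hat x(0)$ this gives $\tilde x = x$. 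I expect the main obstacle to be purely technical rather than conceptual: one must be careful about the function space over which \eqref{eq:minTVext} is posed—ensuring that the class of admissible $\tilde x$ (periodic, globally $C^{\ell-1}$, piecewise polynomial of degree $\ell$, or at least a class for which $\tilde x^{(\ell+1)}$ is a well-defined finite measure) is rich enough that the reduction is valid and that the $k=0$ constraint on the derivative really is automatic, while not so rich that $x^{(\ell+1)}$ fails to be the minimizer. Handling the distributional derivatives and the periodic boundary terms rigorously—verifying that no spurious mass appears at the wrap-around point and that the integration constants are genuinely determined—is where the care is needed; everything else follows from Theorem \ref{theorem:noiseless} and the elementary identity \eqref{eq:newdata}.
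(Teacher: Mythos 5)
Your proposal is correct and follows essentially the same route as the paper: pass to the $(\ell+1)$st distributional derivative, observe via \eqref{eq:newdata} and periodicity that the low-frequency data of $x^{(\ell+1)}$ (including the vanishing $k=0$ mode) is determined by $y$, invoke the uniqueness half of Theorem~\ref{theorem:noiseless} to pin down $x^{(\ell+1)}$, and then recover $x$ itself from $x^{(\ell+1)}$ and the mean $y_0$. You spell out the feasibility reduction and the final reconstruction step slightly more explicitly than the paper's informal discussion preceding the corollary, and your flagging of the admissible class for $\tilde x$ is a fair point, but the substance is the same.
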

Extensions to non-periodic functions, other types of discontinuities
and smoothness assumptions are straightforward.
%For instance, we can recover a piecewise constant (resp.~linear)
%exactly (resp.~pretty much exactly).  Since the coefficients of
%$x^{(\ell+1)}$ can be arbitrary, the signal can jump up and down.

\subsection{Organization of the paper}

The remainder of the paper is organized as follows. We prove our main
noiseless result in Section \ref{sec:dual_pol}. There, we introduce
our techniques which involve the construction of an interpolating
low-frequency polynomial. Section \ref{sec:stability} proves our
stability result and argues that sparsity constraints cannot be
sufficient to guarantee stable super-resolution. Section \ref{sec:sdp}
shows that \eqref{TVnormMin} can be cast as a finite semidefinite
program.
%Numerical simulations illustrating our theoretical results 
Numerical simulations providing a lower bound for the minimum distance
that guarantees exact recovery are presented in Section
\ref{sec:numerical}. We conclude the paper with a short discussion in
Section \ref{sec:discussion}.

\section{Noiseless Recovery}
\label{sec:dual_pol}

\newcommand{\Dmin}{\Delta_{\min}}

This section proves the noiseless recovery result, namely, Theorem
\ref{theorem:noiseless}. Here and below, we write $\Delta = \Delta(T)
\ge \Deltamin = \optvalue \, \lambda_c$. Also, we identify the
interval $[0,1)$ with the circle $\mathbb{T}$.

\subsection{Dual polynomials}

In the discrete setting, the compressed sensing literature has made
clear that the existence of a certain \textit{dual certificate}
guarantees that the $\ell_1$ solution is exact \cite{candesFreq}. In
the continuous setting, a sufficient condition for the success of the
total-variation solution is this: for any $v \in \C^{|T|}$ with $|v_j|
= 1$, there exists a low-frequency trigonometric polynomial
\begin{equation}
\label{eq:trig}
q(t) = \sum_{k = -\fc}^{\fc} c_k e^{i2\pi k t} 
\end{equation}
obeying the following properties 
\begin{equation}
  \label{eq:cond_q}
  \begin{cases} q(t_j) = v_j, & t_j \in T,\\
    |q(t)| < 1, & t \in \mathbb{T} \setminus T. 
\end{cases}
\end{equation}
This result follows from elementary measure theory and is included in
Section \ref{sec:tv} of the Appendix for completeness. Constructing a
bounded low-frequency polynomial interpolating the sign pattern of
certain signals becomes increasingly difficult if the minimum distance
separating the spikes is too small. This is illustrated in Figure
\ref{fig:dual_pol}, where we show that if spikes are very near, it
would become in general impossible to find an interpolating
low-frequency polynomial obeying \eqref{eq:cond_q}.
\begin{figure}
\centering
\subfigure[]{
\includegraphics[scale=0.4]{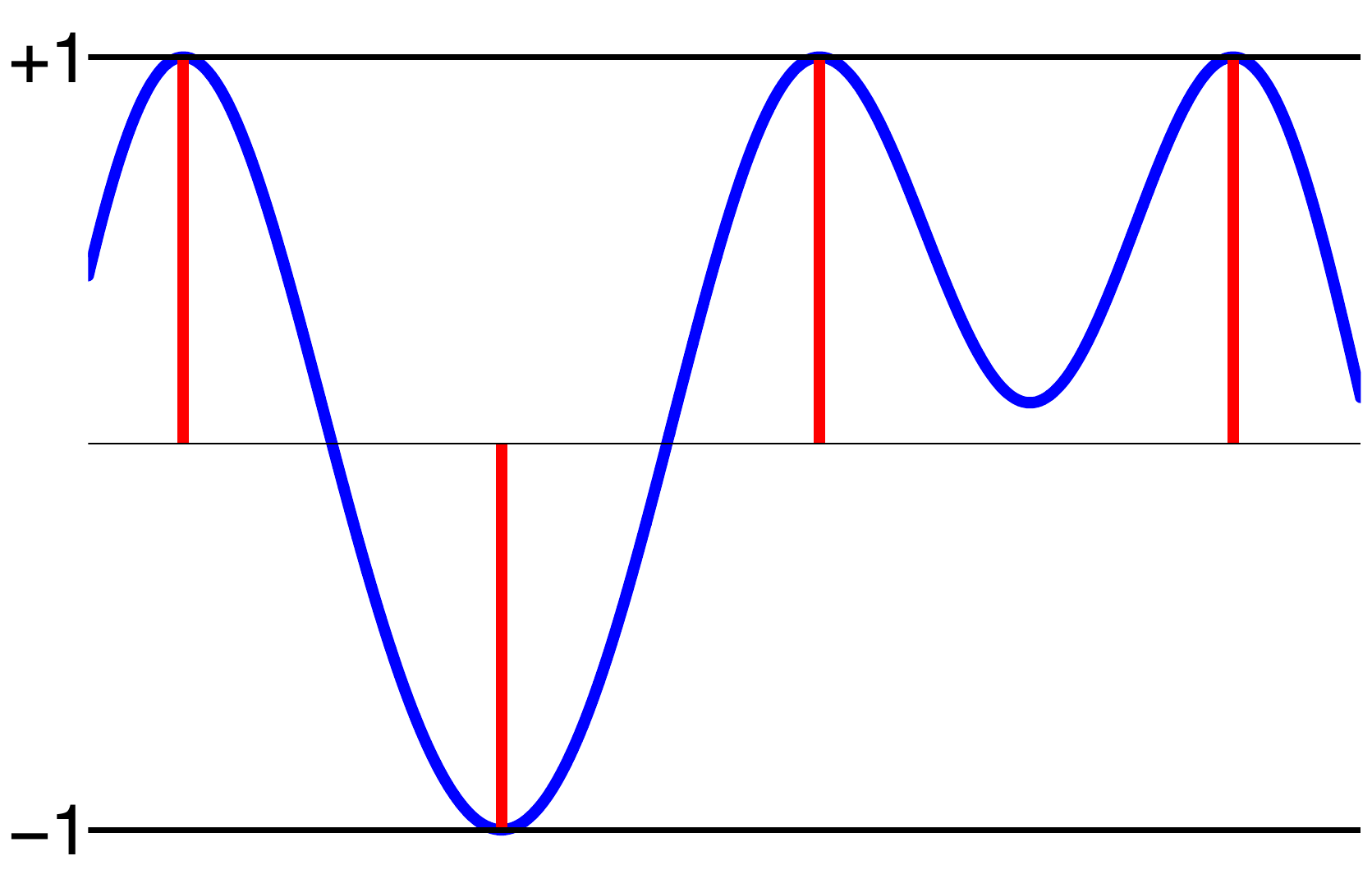}
\label{fig:dual_pol_a}
}
\subfigure[]{
\includegraphics[scale=0.4]{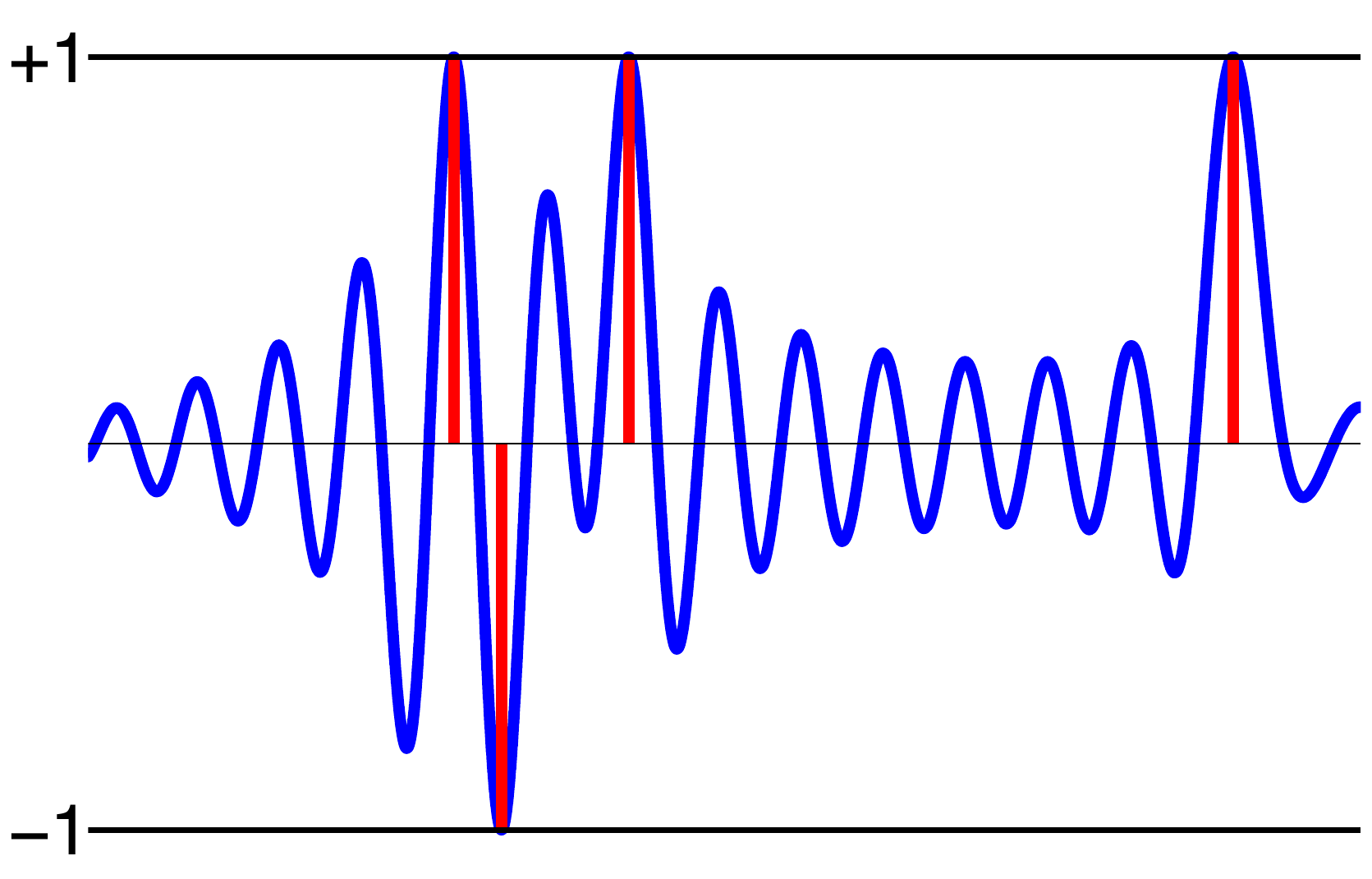}
\label{fig:dual_pol_b}
}
\caption{(a) Low-frequency polynomial interpolating a sign pattern in
  which the support is well separated, and obeying the off-support
  condition \eqref{eq:cond_q}. In (b), we see that if the spikes
  become very near, we would need a rapidly (high-frequency)
  interpolating polynomial in order to achieve \eqref{eq:cond_q}. This
  is the reason why there must be a minimum separation between
  consecutive spikes. }
% On the figure a low frequency polynomial (blue) interpolates a sign
% pattern (purple and red circles). 
% In \ref{fig:dual_pol_a} the support of the sign pattern is well
% spaced, which allows the polynomial to satisfy Condition \ref{cond1_q}
% (represented by the black lines). In \ref{fig:dual_pol_b} the elements
% of the support corresponding to the red circles are very near, which
% induces fluctuations in the interpolating polynomial that result in a
% violation of Condition \ref{cond1_q}.}
\label{fig:dual_pol}
\end{figure}

\subsection{Proof of Theorem \ref{theorem:noiseless}}

Theorem \ref{theorem:noiseless} is a direct consequence of the
proposition below, which establishes the existence of a valid dual
polynomial provided the elements in the support are sufficiently
spaced.
\begin{proposition}
\label{prop:continuous_dualcert}
Let $v \in \C^{|T|}$ be an arbitrary vector obeying $|v_j| = 1$.  Then
under the hypotheses of Theorem \ref{theorem:noiseless}, there exists
a low-frequency trigonometric polynomial \eqref{eq:trig} obeying
\eqref{eq:cond_q}.
% $\fc\geq \minm$ and $T=\keys{t_1,t_2,\dots}$ any
% set of points belonging to $\mathbb{T}$ that satisfy the minimum
% distance condition:
% \begin{equation}
% \abs{t_j-t_k} \geq \frac{\optvalue}{\fc} \quad t_j,t_k \in T \quad j \neq k \text{.} \label{min_dist_condition}
% \end{equation}
% For any vector $v \in \C^{\abs{T}}$ with entries of unit magnitude
% ($\abs{v_j}=1$ for all $j\in \keys{1,\dots,\abs{T}}$) there exists a
% function $q$ such that the support of its Fourier series
% coefficients is contained in the interval $\sqbr{-\fc,\fc}$ and that
% satisfies Conditions \eqref{cond2_q} and \eqref{cond1_q}.
\end{proposition}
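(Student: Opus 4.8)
The plan is to prove Proposition~\ref{prop:continuous_dualcert} by \emph{constructing} the interpolating polynomial $q$ explicitly through a kernel-based interpolation scheme, and then verifying the two requirements in \eqref{eq:cond_q} separately: the identities $q(t_j)=v_j$ by design, and the strict bound $|q(t)|<1$ off $T$ by a near/far case analysis. For the kernel I would take $K$ to be (a normalization of) the square of the Fej\'er kernel of degree about $\fc/2$, i.e.\ $K(t)=\big(\tfrac{\sin(\pi m t)}{m\sin(\pi t)}\big)^{4}$ with $m-1\approx \fc/2$, so that $K$ is a nonnegative even trigonometric polynomial of degree at most $\fc$ with $K(0)=1$. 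The features that matter are: (i) $K$ and its first two derivatives are sharply concentrated near $0$, with tail bounds of the schematic form $|K^{(\ell)}(t)|\lesssim \fc^{\ell}\,(\fc\,t)^{-4}$ for $|t|$ away from $0$; and (ii) $K''(0)\asymp -\fc^{2}$, i.e.\ $K$ has nondegenerate strictly negative curvature at its peak. The fourth power is chosen precisely so that the tail sums appearing below converge with room to spare.

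The ansatz is
\[
q(t)=\sum_{t_j\in T}\alpha_j\,K(t-t_j)+\sum_{t_j\in T}\beta_j\,K'(t-t_j),
\]
which is automatically of the form \eqref{eq:trig}; I then impose the $2|T|$ conditions $q(t_j)=v_j$ and $q'(t_j)=0$ for all $j$. The derivative conditions are what will force each $t_j$ to be a strict local maximum of $|q|$: since $|q(t_j)|=1$ they give $(|q|^{2})'(t_j)=2\,\mathrm{Re}\big(\overline{q(t_j)}\,q'(t_j)\big)=0$. This is a square linear system in $(\alpha,\beta)$ whose matrix, after the natural diagonal rescaling by powers of $\fc$, has the block form $\sMAT{D_0 & D_1\\ D_1 & D_2}$ with $(D_\ell)_{jk}$ a rescaled $K^{(\ell)}(t_j-t_k)$; at $j=k$ the $D_1$-entry vanishes (even kernel) while the $D_0,D_2$-entries are $\asymp\pm1$, so the ``diagonal'' of the system is invertible. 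Using $\Delta(T)\ge 2\lambda_c$, every off-diagonal row sum is bounded by $\sum_{n\ge1}C\,(\fc\,n\Delta)^{-4}\le C\sum_{n\ge1}(2n)^{-4}$, which is small; hence the rescaled matrix is a small $\|\cdot\|_{\infty\to\infty}$ perturbation of the identity, so it is invertible and its solution obeys $\normInf{\alpha}\le 1+\varepsilon$ and $\normInf{\beta}\lesssim \lambda_c\,\varepsilon$ for a small $\varepsilon$ governed by the separation constant and the proviso $\fc\ge\minm$. This is exactly where the constant $\optvalue$ and the requirement $f_c\ge\minm$ enter.

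It then remains to show $|q(t)|<1$ for $t\notin T$; split $\mathbb{T}=\big(\bigcup_j I_j\big)\cup F$, where $I_j$ is a small arc around $t_j$ (chosen inside the region where $K(\cdot-t_j)$ is strictly concave) and $F$ is the complement. On $F$, estimate directly $|q(t)|\le \normInf{\alpha}\sum_j|K(t-t_j)|+\normInf{\beta}\sum_j|K'(t-t_j)|$, using exact kernel values for $t$ at moderate distance from the nearest $t_j$ and tail sums for $t$ far away, and check the total stays strictly below $1$. On $I_j$, show $t_j$ is a strict local maximum of $|q|^{2}$ with value $1$: since $|q(t_j)|^{2}=1$ and $(|q|^{2})'(t_j)=0$ by construction, it suffices to prove $(|q|^{2})''(t)<0$ throughout $I_j$. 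Writing $(|q|^{2})''=2\,\mathrm{Re}(\bar q\,q'')+2|q'|^{2}$, one controls $|q-v_jK(\cdot-t_j)|$, $|q'|$ and the perturbation of $q''$ on $I_j$ via the rescaled coefficients, and observes that the leading term $2\,\mathrm{Re}\big(\overline{v_j}\,v_j\,K''(\cdot-t_j)\big)=2K''(\cdot-t_j)$ is negative and dominates: the error terms are $O(\varepsilon\fc^{2})$ and the $2|q'|^{2}$ term is negligible because $q'(t_j)=0$ forces $|q'|$ to stay small on $I_j$.

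I expect the off-support bound, and within it the uniform strict inequality $(|q|^{2})''<0$ on the near arcs $I_j$, to be the main obstacle: it requires simultaneous, \emph{quantitative} control of $q,q',q''$ on $I_j$ through the interpolation coefficients, and the negative curvature of $K$ at $0$ must beat every perturbative contribution with \emph{explicit} constants, while on $F$ one needs estimates on $K$ that are sharp near $\partial I_j$ and not merely crude tails. It is precisely this bookkeeping that pins the separation constant at $\optvalue$ and forces $\fc\ge\minm$. For the real-valued sharpening to $1.87\,\lambda_c$, I would rerun the same construction but exploit that $v_j$ and $q$ are real, so that $|q|^{2}=q^{2}$ and one needs only $q''<0$ near $T$ and $|q|<1$ on $F$; this genuinely one-dimensional estimate frees up enough slack to push the constant below $\optvalue$.
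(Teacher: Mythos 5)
Your proposal matches the paper's argument essentially step for step: the same squared-Fej\'er kernel $K$, the same ansatz $q=\sum_j \alpha_j K(\cdot-t_j)+\beta_j K'(\cdot-t_j)$ with the conditions $q(t_j)=v_j$, $q'(t_j)=0$, the same invertibility argument via the interpolation matrix being an $\|\cdot\|_{\infty\to\infty}$-small perturbation of (a diagonal rescaling of) the identity, the same near/far split with strict negative curvature of $|q|$ on the near arcs and direct kernel-sum bounds on the far region, and the same observation for the real-valued sharpening. The only cosmetic deviation is that you rescale the block system by powers of $\fc$ where the paper works with the unrescaled blocks and a Schur complement of $D_2$; this does not change the substance of the argument.
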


The remainder of this section proves this proposition. Our method
consists in interpolating $v$ on $T$ with a low-frequency kernel and
correcting the interpolation to ensure that the derivative of the dual
polynomial is zero on $T$. The kernel we employ is 
\begin{equation} 
  K(t) =
   \left[\frac{\sin \brac{\brac{\frac{\fc}{2}+1} \pi t}}{\brac{\frac{\fc}{2}+1}\sin \brac{\pi t}}\right]^4, \quad 0 < t < 1,  
\label{def:kernel}
\end{equation}
and $K(0) = 1$. If $f_c$ is even, $K(t)$ is the square of the Fej\'er
kernel which is a trigonometric polynomial with frequencies obeying
$|k| \le f_c/2$. As a consequence, $K$ is of the form
\eqref{eq:trig}. The careful reader might remark that the choice of
the interpolation kernel seems somewhat arbitrary. In fact, one could
also use the Fej\'er kernel or any other power of the Fej\'er kernel
using almost identical proof techniques. We have found that the second
power nicely balances the trade-off between localization in time and
in frequency, and thus yields a good constant.
% however, this would yield a
% slightly greater constant in the expression of the minimum
% distance. In this sense, the second power seems to achieve a good
% trade-off \ejc{between what and what?}. 

To construct the dual polynomial, we interpolate $v$ with both $K$ and
its derivative $K'$, 
\begin{equation}
  q(t)  = \sum_{t_j \in T} \alpha_j K(t-t_j) + \beta_j K'(t-t_j), 
  \label{def:dual_cert}
\end{equation}
where $\alpha, \beta \in \C^{\abs{T}}$ are coefficient sequences. The
polynomial $q$ is as in \eqref{eq:trig} and in order to obey $q(t_k) =
v_k$, we impose
\begin{equation}
  \label{eq:interp1}
q(t_k) = \sum_{t_j \in T} \alpha_j K\brac{t_k-t_j} + \beta_j
K'\brac{t_k-t_j} = v_k, \quad \forall t_k \in T,
\end{equation}
whereas in order to obey $|q(t)|<1$ for $t \in T^c$, we impose
$q'(t_k) = 0$,
\begin{equation}
  \label{eq:interp2}
q'(t_k) = \sum_{t_j \in T} \alpha_j K'\brac{t_k-t_j} + \beta_j
K^{\prime\prime}\brac{t_k-t_j} = 0, \quad \forall t_k \in T.
\end{equation}
As we will see, this implies that the magnitude of $q$ reaches a local
maximum at those points, which in turn can be used to show that
\eqref{eq:cond_q} holds.

% The proof of the proposition makes repeated use of the fact that the
% interpolation kernel and its derivatives decay rapidly away from the
% origin. The lemma below proved in Subsection
% \ref{proof:fejersq_bounds} of the Appendix quantifies this. 
% \begin{lemma}
% \label{lemma:fejersq_bounds}
% For $1/\fc = \lambda_c \leq t\leq 1/2$, we have 
% % \begin{align}
% % \abs{K\brac{t}}  \leq &\frac{1}{\fc^4t^4} \label{K_bound}\\
% % \abs{K'\brac{t}}  \leq & \frac{4\pi}{\fc^3t^4} \label{Krime_bound}\\
% % \abs{K''\brac{t}} \leq &\frac{18\pi^2}{\fc^2t^4} \label{K_2prime_bound}\\
% % \abs{K'''\brac{t}}  \leq &\frac{77\pi^3 }{mt^4}\text{.} \label{K_3prime_bound}
% % \end{align}
% \begin{align*}
%  \abs{K\brac{t}}  \leq \frac{1}{\brac{\fc+2}^4t^4}\text{,} \quad \quad
%  \abs{K'\brac{t}}  \leq  \frac{4\pi}{\brac{\fc+2}^3t^4}\text{,} \\
%  \abs{K''\brac{t}} \leq \frac{18\pi^2}{\brac{\fc+2}^2t^4}\text{,} \quad \quad
%  \abs{K'''\brac{t}}  \leq \frac{77\pi^3 }{\brac{\fc+2}t^4}\text{.}
%  \end{align*}
% \end{lemma}
The proof of Proposition \ref{prop:continuous_dualcert} consists of
three lemmas, which are the object of the following section. The
first one establishes that if the support is spread out, it is
possible to interpolate any sign pattern exactly.
\begin{lemma}
\label{lemma:fejersq_coeffs}
Under the hypotheses of Proposition \ref{prop:continuous_dualcert},
there exist coefficient vectors $\alpha$ and $\beta$ obeying 
\begin{equation}
 \label{bound_alpha}
\begin{aligned}
  \normInf{\alpha}  & \leq \alpha^{\infty} := 1 + 8.824 \; 10^{-3},\\
  \normInf{\beta}& \leq \beta^{\infty} := 3.294 \; 10^{-2} \,
  \lambda_c,
\end{aligned}
\end{equation}
such that \eqref{eq:interp1}--\eqref{eq:interp2} hold. Further, if
$v_1=1$,
\begin{equation}
 \label{bound_Re_alpha1}
\begin{aligned}
  \operatorname{Re}{\alpha_1} & \geq 1-8.824 \; 10^{-3},\\
  \abs{\operatorname{Im}{\alpha_1}} & \leq 8.824 \; 10^{-3}. 
\end{aligned}
\end{equation} 
\end{lemma}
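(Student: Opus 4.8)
The plan is to read the interpolation conditions \eqref{eq:interp1}--\eqref{eq:interp2} as a square linear system for the pair $(\alpha,\beta)\in\C^{|T|}\times\C^{|T|}$. Introduce the $|T|\times|T|$ matrices $(D_\ell)_{kj}=K^{(\ell)}(t_k-t_j)$ for $\ell=0,1,2$. Since $K$ is even, $D_0$ and $D_2$ are symmetric while $D_1$ is antisymmetric, and the diagonals are $K(0)=1$, $K'(0)=0$ and $K''(0)=-\kappa$ with $\kappa:=-K''(0)>0$ (the kernel \eqref{def:kernel} has a strict interior maximum at $0$). The conditions then read
\[
\begin{bmatrix}D_0 & D_1\\ D_1 & D_2\end{bmatrix}\begin{bmatrix}\alpha\\\beta\end{bmatrix}=\begin{bmatrix}v\\0\end{bmatrix},
\]
and, peeling off the diagonal blocks $I$ and $-\kappa I$, this is equivalent to the fixed-point identity $(\alpha,\beta)=(v,0)+E\,(\alpha,\beta)$ with
\[
E=\begin{bmatrix}I-D_0 & -D_1\\ \kappa^{-1}D_1 & \kappa^{-1}(D_2+\kappa I)\end{bmatrix}.
\]
So if $E$ is a contraction in a suitable norm, the system has a unique solution which is close to $(v,0)$, and the stated bounds drop out.

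The crux is quantitative. Fixing a row index $k$, I need sharp bounds on the three off-diagonal row sums $\sum_{j\neq k}|K(t_k-t_j)|$, $\lambda_c\sum_{j\neq k}|K'(t_k-t_j)|$ and $\kappa^{-1}\sum_{j\neq k}|K''(t_k-t_j)|$. Two ingredients feed this. First, since $K$ is the fourth power of a Dirichlet-type ratio, $K$ and its suitably scaled first two derivatives decay like $(f_c|t|)^{-4}$ on the circle once $\sin(\pi t)$ is bounded below away from $0$; this has to be done with care uniformly over $[\Delta,1/2]$, not just near the origin. Second, the separation hypothesis $|t_k-t_j|\geq\Delta\geq\Deltamin=\optvalue\,\lambda_c$ forces the remaining support points to sit at wrap-around distances at least $\Delta,2\Delta,3\Delta,\dots$ on each side of $t_k$, so each row sum is dominated by $2\sum_{\ell\geq1}\phi(\ell\Delta)$ for the relevant decay profile $\phi$. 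Splitting off the one or two nearest neighbours for a sharper individual estimate and bounding the rest by a monotone (integral or $\zeta(4)$-type) comparison, then inserting $\Delta f_c\geq\optvalue$ and using $f_c\geq\minm$ to replace $\sin(\pi t)$ by $\pi t$ in the tail with negligible error, yields explicit small numbers for all three sums.

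With those numbers in hand, rescaling $\beta$ by $\lambda_c^{-1}$ (equivalently by $\sqrt\kappa$) so the two blocks sit on the same scale gives $\|E\|\leq\varepsilon_0$ for an explicit $\varepsilon_0<1$ in the induced $\ell_\infty$ operator norm. Hence $I-E$ is invertible, the coefficient vectors exist and are unique, and writing $\alpha=v+\alpha'$ we get $(\alpha',\beta)=(I-E)^{-1}E\,(v,0)$, so both $\|\alpha'\|_\infty$ and $\lambda_c^{-1}\|\beta\|_\infty$ are at most $\varepsilon_0/(1-\varepsilon_0)\cdot\normInf{v}=\varepsilon_0/(1-\varepsilon_0)$. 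Carrying the numerology through produces $\normInf{\alpha}\leq 1+8.824\cdot 10^{-3}$ and $\normInf{\beta}\leq 3.294\cdot 10^{-2}\,\lambda_c$. The claim about $\alpha_1$ when $v_1=1$ is then immediate: $\alpha_1=1+\alpha_1'$ with $|\alpha_1'|\leq\|\alpha'\|_\infty\leq 8.824\cdot 10^{-3}$, whence $\operatorname{Re}\alpha_1\geq 1-8.824\cdot 10^{-3}$ and $|\operatorname{Im}\alpha_1|=|\operatorname{Im}\alpha_1'|\leq 8.824\cdot 10^{-3}$.

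The main obstacle is the middle step: the target constants are tight enough that the decay estimates for $K$, $K'$, $K''$ must be genuinely uniform over the circle, the nearest-neighbour contributions almost certainly have to be handled separately from the tail, and one has to track how the errors from $\sin(\pi t)\approx\pi t$ and from truncating the series propagate to the last reported digit. This is precisely where the hypotheses $f_c\geq\minm$ and the precise value $\optvalue$ of the separation constant are consumed; everything else is soft linear algebra.
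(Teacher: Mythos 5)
Your setup is correct and your diagnosis of the hard part (row-sum decay estimates on $K^{(\ell)}$ under the separation condition) matches the paper's Lemma~\ref{lemma:bound_sum_kernel}, but the linear-algebra step is where the proposal breaks: the Neumann-series argument on the full $2|T|\times 2|T|$ block matrix $E$ cannot reproduce the stated constants, and this is not a matter of tuning a rescaling. After any diagonal rescaling $\beta\mapsto w\beta$, the $\ell_\infty$ row-sum of the second block row of your $E$ is at least
\[
\kappa^{-1}\normInfInf{D_2+\kappa\Id},\qquad \kappa=|K''(0)|=\tfrac{\pi^2}{3}\fc(\fc+4),
\]
and the paper's own bound \eqref{Id_D2_bound} gives $\normInfInf{D_2+\kappa\Id}\le 1.053\,\fc^2$ while $\kappa\approx 3.29\,\fc^2$, so this ratio is about $0.32$ no matter how the two blocks are scaled against each other. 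Thus $\varepsilon_0=\normInfInf{E}\gtrsim 0.32$; the crude bound $\varepsilon_0/(1-\varepsilon_0)$ you invoke is then $\approx 0.5$, and even the sharper estimate $\normInfInf{(I-E)^{-1}}\,\normInf{E(v,0)}$ only brings it down to roughly $3.5\times 10^{-2}$ — a factor of four short of the claimed $8.824\times 10^{-3}$, and short enough to upset the numerology in the downstream Lemmas~\ref{lemma:concavity} and \ref{lemma:boundq}.

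What the paper does instead — and what your ``soft linear algebra'' elides — is take the Schur complement of $D_2$: invert $D_2$ first ($\normInfInf{D_2^{-1}}\le 0.4275/\fc^2$), then show $\normInfInf{\Id-(D_0-D_1D_2^{-1}D_1)}\le\normInfInf{\Id-D_0}+\normInfInf{D_1}^2\normInfInf{D_2^{-1}}\le 8.747\times10^{-3}$. The point is that $D_1D_2^{-1}D_1$ is $O(10^{-3})$ — the $D_1$ cross-terms are suppressed by the full $D_2^{-1}$, not merely by the scalar $\kappa^{-1}$ — and this quadratic cancellation is invisible to a one-shot contraction estimate on the block matrix. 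So while your plan does establish invertibility and the existence of $(\alpha,\beta)$, and correctly identifies that $\alpha_1$ stays close to $v_1$ once a bound on $\normInf{\alpha-v}$ is in hand, the specific constants in \eqref{bound_alpha}--\eqref{bound_Re_alpha1} genuinely require the Schur decomposition; the step you treat as routine is the crux.
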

To complete the proof, Lemmas \ref{lemma:concavity} and
\ref{lemma:boundq} show that $|q\brac{t}| < 1$.
% for $0 < |t - \tau| \le
% \Delta/2$.
% whenever $t_1=0$ and $v_1=1$ (here and below $\Deltamin
% = \Deltamin(T) \ge \optvalue/\fc$ for short).  The result can be
% extended to an arbitrary $v_1$ with modulus equal to one by noting
% that the lemmas imply $\abs{q\brac{t}/v_1}=\abs{q\brac{t}}<1$. In
% addition, the proof is directly applicable to $-\Delta_{\min}/2 \le t
% < 0$ and \ejc{any other interval of length larger or equal to
%   $\Delta/2$, $\Delta \ge \Delta_{\min}$, centered at an element of
%   $T$, as long as the nearest spike is separated by at least $\Delta$
%   (and all the elements are separated by at least $\Delta_{\min}$).}
% \cfg{This is made rigorous by Lemma \ref{lemma:bound_sum_kernel},
%   which shows that the bounds that we derive for the magnitude of the
%   dual certificate are monotone and decreasing in the minimum distance
%   between the elements of $T$.}
\begin{lemma}
\label{lemma:concavity}
Fix $\tau \in T$. Under the hypotheses of Proposition
\ref{prop:continuous_dualcert}, $|q(t)| < 1$ for $\abs{t - \tau} \in
(0, \tC \, \lambda_c]$.
\end{lemma}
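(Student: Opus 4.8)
The plan is to show that on the interval $|t-\tau|\in(0,\tC\,\lambda_c]$ the magnitude $|q(t)|$ is strictly less than $1$ by a second-order Taylor argument around $t=\tau$. Since $q(\tau)=v_\tau$ has modulus $1$ and, by the interpolation constraint~\eqref{eq:interp2}, $q'(\tau)=0$, we expect $|q|$ to have a strict local maximum at $\tau$; the lemma quantifies how far this local-max behaviour persists. Concretely, I would write $g(t)=|q(t)|^2=q(t)\overline{q(t)}$, a real-valued smooth function, and study $g$ on $I=[\tau,\tau+\tC\lambda_c]$ (the other side is symmetric). We have $g(\tau)=1$, and $g'(\tau)=2\operatorname{Re}(q'(\tau)\overline{q(\tau)})=0$. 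So it suffices to show $g''(t)<0$ for all $t$ in a neighbourhood containing $I$, or more precisely that $g'(t)<0$ on $(\tau,\tau+\tC\lambda_c]$, which follows if $g''<0$ on that interval. Now
\[
g''(t) = 2\operatorname{Re}\!\big(q''(t)\overline{q(t)}\big) + 2|q'(t)|^2,
\]
so the task reduces to proving $\operatorname{Re}\!\big(q''(t)\overline{q(t)}\big) + |q'(t)|^2 < 0$ on the interval. Without loss of generality, by rotating, assume $\tau=t_1$ and $v_1=1$, so we may use the refined bounds~\eqref{bound_Re_alpha1} on $\alpha_1$.

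The key steps are then: (i) express $q(t)$, $q'(t)$, $q''(t)$ via~\eqref{def:dual_cert} as sums over $t_j\in T$ of $\alpha_j,\beta_j$ times the appropriate derivatives of the kernel $K$ evaluated at $t-t_j$; (ii) split each sum into the ``near'' term $j=1$ and the ``far'' tail $j\ge 2$; (iii) for the near term, use that $|t-t_1|\le\tC\lambda_c$ is small, so $K(t-t_1)$ is close to $1$, $K'(t-t_1)$ is $O(\tC\lambda_c)$-small, and $K''(t-t_1)$ is close to its value $K''(0)<0$ at the origin (the squared Fej\'er kernel has a genuine strict maximum at $0$), together with the lower bound $\operatorname{Re}\alpha_1\ge 1-8.824\cdot10^{-3}$ and $|\operatorname{Im}\alpha_1|\le 8.824\cdot10^{-3}$ and $|\beta_1|\le\beta^\infty$; (iv) for the far tail, use the rapid decay of $K,K',K''$ away from $0$ — which is where the minimum-separation hypothesis $\Delta\ge\Deltamin=\optvalue\,\lambda_c$ enters — to bound $\sum_{j\ge2}|K^{(m)}(t-t_j)|$ by a convergent series dominated by a geometric-like sum over lattice points at distances $\ge\Delta-\tC\lambda_c, \ge 2\Delta-\tC\lambda_c,\dots$, and combine with the uniform coefficient bounds $\normInf{\alpha}\le\alpha^\infty$, $\normInf{\beta}\le\beta^\infty$ from Lemma~\ref{lemma:fejersq_coeffs}; (v) assemble: show the dominant negative contribution $K''(0)\operatorname{Re}\alpha_1<0$ beats the sum of all error terms (near-term corrections, $|q'|^2$, and far-tail contributions to $q,q',q''$), so that $\operatorname{Re}(q''\overline q)+|q'|^2<0$ throughout the interval.

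The main obstacle is step (v), the bookkeeping of constants: one must verify numerically that the specific constant $\tC=0.1649$ is small enough, and the separation constant $\optvalue$ large enough, that the guaranteed negative curvature from $K''(0)$ strictly dominates the accumulated errors over the whole interval $(0,\tC\lambda_c]$, uniformly in $t$ and in the (unknown) configuration of the remaining spikes $t_2,t_3,\dots$. This requires sharp, rigorous upper bounds on $|K(t)|$, $|K'(t)|$, $|K''(t)|$ for $t$ bounded away from the integers — ideally monotone envelope bounds of the form $|K^{(m)}(t)|\le C_m/(\fc t)^{?}$ valid for $t\in[\Delta-\tC\lambda_c,\,1/2]$ — and a careful summation of the tail using that consecutive spikes are $\ge\Delta$ apart on the circle. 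A secondary subtlety is handling the wrap-around (circle) geometry so that the tail sum is genuinely over a one-sided or two-sided lattice and converges; and one should note the bounds must hold for $f_c\ge\minm$, which is what makes the discrete frequency cut-off behave like its $f_c\to\infty$ limit in the kernel estimates. I would organize the constant-chasing into a short auxiliary lemma bounding $\sum_{j}\,|K^{(m)}(t-t_j)|$ and its near/far split, then plug into the curvature inequality above.
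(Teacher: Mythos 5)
Your proposal is correct and follows essentially the same route as the paper: reduce to a curvature argument at $\tau$, decompose each of $q,q',q''$ into the near term (exploiting the refined bounds on $\alpha_1$) plus a far tail controlled by the separation hypothesis via decay estimates on $K^{(\ell)}$, and then verify numerically that the negative curvature of $K$ at the origin dominates. The only cosmetic difference is that you work with $g(t)=|q(t)|^2$, so $g''=2\operatorname{Re}(q''\bar q)+2|q'|^2$, whereas the paper differentiates $|q|$ directly and must additionally show $|q|$ is bounded away from zero to control the denominators; your choice sidesteps that step, and since the paper ultimately bounds $q_I q_I''\le|q_I|\,|q_I''|$ anyway, the two sufficient conditions are effectively the same inequality.
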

\begin{lemma}
\label{lemma:boundq}
% Set $t_1=0$ and $v_1=1$. 
Fix $\tau \in T$. Then under the hypotheses of Proposition
\ref{prop:continuous_dualcert}, $|q(t)| < 1$ for $\abs{t - \tau} \in
[\tC \, \lambda_c, \Delta/2]$. This can be extended as follows:
letting $\tau_+$ be the closest spike to the right, i.~e.~$\tau_+ =
\min \{t \in T : t > \tau\}$. Then $\abs{q(t)} < 1$ for all $t$
obeying $0 < t - \tau \leq (\tau_+ - \tau)/2$, and likewise for the left
side.
\end{lemma}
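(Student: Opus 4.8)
The goal is to bound $|q(t)|$ strictly below $1$ on the "far" region $|t-\tau| \in [\tC\,\lambda_c, \Delta/2]$, given that Lemma \ref{lemma:fejersq_coeffs} controls the coefficient magnitudes and Lemma \ref{lemma:concavity} handles the "near" region. Since $|q(t)| = |\sum_j \alpha_j K(t-t_j) + \beta_j K'(t-t_j)|$, the natural approach is a crude triangle-inequality estimate: $|q(t)| \le \alpha^\infty \sum_j |K(t-t_j)| + \beta^\infty \sum_j |K'(t-t_j)|$, and then show the right-hand side is $< 1$ uniformly on the far region. The point of $K$ being a fourth power of (a rescaled) Dirichlet-type kernel is that $K$ and $K'$ decay like $t^{-4}$ and $t^{-5}$ respectively away from the origin, so these sums converge rapidly and are dominated by a few nearby terms.

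\textbf{Key steps.} First I would fix $t$ with $|t - \tau| \in [\tC\,\lambda_c,\,\Delta/2]$ and split the sum over $t_j \in T$ according to distance from $t$. Because of the minimum-separation hypothesis $\Delta(T) \ge \optvalue\,\lambda_c$, at most a bounded number of spikes lie within any fixed multiple of $\lambda_c$ of $t$, and the remaining spikes can be indexed so that the $k$-th nearest is at distance $\ge (k-\tfrac12)\Delta$ or so from $t$. Second, I would derive clean monotone upper bounds for $|K(t)|$ and $|K'(t)|$ valid for $|t|$ in the relevant range — using $|\sin(\pi t)| \ge 2|t|$ type inequalities on $[-\tfrac12,\tfrac12]$ to get $K(t) \le (\text{const})/((\fc/2+1)^4 t^4)$ and similarly a bound of order $((\fc/2+1) t)^{-5}$-ish for $K'$, with the $\lambda_c$ scaling tracked explicitly. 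Third, I would plug these into the tail sums $\sum_{k \ge 1} (\text{bound at distance} \approx k\Delta)$, which become constants times $\sum_k k^{-4}$ and $\sum_k k^{-5}$, hence explicitly boundable. Fourth, for the one or two spikes closest to $t$ (including $\tau$ itself), I would use the sharper local behavior of $K$ near its main lobe together with the fact that $|t-\tau| \ge \tC\,\lambda_c$ keeps us off the peak $K(0)=1$. Adding the near-spike contribution and the tail, and using the numerical values $\alpha^\infty = 1 + 8.824\cdot 10^{-3}$, $\beta^\infty = 3.294\cdot 10^{-2}\lambda_c$ from Lemma \ref{lemma:fejersq_coeffs}, the total should come out to something like $0.99\cdots < 1$ — this is exactly where the specific constant $\optvalue$ in the separation hypothesis and the choice $\fc \ge \minm$ get used, and where the $\tC$ cutoff is calibrated so that the near and far estimates meet. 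Finally, the "extension" clause (replacing $\Delta/2$ by $(\tau_+-\tau)/2$ on the right, symmetrically on the left) follows because the same distance-based term-by-term bounds only ever used that the \emph{nearest} spike on each side is no closer than the actual gap; one re-runs the estimate with $\tau_+ - \tau$ in place of $\Delta$ for the dominant term and keeps $\Delta$ as a lower bound for all the others, which only helps.

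\textbf{Main obstacle.} The delicate part is not the tail — that is geometric-series routine — but squeezing the contribution of the \emph{nearest few spikes} plus the coefficient slack ($\alpha^\infty$ slightly exceeding $1$, and the $\beta_j K'$ terms) down below $1$ on the whole interval $[\tC\,\lambda_c,\Delta/2]$. One must choose $\tC$ and exploit the $\fc\ge\minm$ hypothesis carefully so that the $((\fc/2+1)\sin\pi t)^{-1}$ factors are well approximated by their $\fc\to\infty$ limits with controlled error, and so that the resulting one-variable function of $u = |t-\tau|/\lambda_c$ is verifiably $< 1$ on $[\tC, \optvalue/2]$ — likely via monotonicity on subintervals or a finite check combined with a derivative bound. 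Getting all the explicit constants to align is the real work; the structure of the argument is otherwise a standard localization-plus-tail estimate.
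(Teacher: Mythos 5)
Your plan matches the paper's proof in all essentials: assume $\tau=0$, apply the crude triangle-inequality bound $|q(t)| \le \alpha^\infty(|K(t)| + \sum_{t_j\neq 0}|K(t-t_j)|) + \beta^\infty(|K'(t)| + \sum_{t_j\neq 0}|K'(t-t_j)|)$, control the tail sums via the $t^{-4}$ decay of $K^{(\ell)}$ together with the minimum-separation hypothesis (the paper packages this as Lemma \ref{lemma:bound_sum_kernel}), control the self term via Taylor-type bounds near the origin, and piece together the numerical estimates over a finite partition of $[\tC\lambda_c,\Delta/2]$ using that the self-term bound is decreasing while the tail-sum bound is increasing (the paper splits into $[\tC,\tx]\cup[\tx,\tA]$ and then $[\tA,\Delta/2]$ separately with the pure decay bounds). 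The extension past $\Delta/2$ to $(\tau_+-\tau)/2$ is handled exactly as you say, by observing the far-field estimate only weakens. One small miscalibration: the envelope of $|K'(t)|$ at fixed $f_c$ decays like $t^{-4}$ with an extra factor of $f_c$ in the coefficient (differentiating a function of $f_c t$ brings out an $f_c$, not a $1/t$), not $t^{-5}$ as you wrote; compare $B_1(t) \sim \pi H_1^\infty/((f_c+2)^3 t^4)$ in Lemma \ref{lemma:fejersq_bounds}. This does not affect the structure of the argument.
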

Finally, we record a useful lemma to derive stability results. 
\begin{lemma}
\label{lemma:qbound_strict}
If $\Delta\brac{T} \geq 2.5 \, \lambda_c$, then for any $\tau \in T$,
\begin{equation}
  \abs{q\brac{t}} \leq 1-0.3353 \fc^2 \brac{t-\tau}^2, \quad \text{for all } t: \, \abs{t-\tau}\leq \tC \, \lambda_c. \label{qbound_strict}
\end{equation}
Further, for $\min_{\tau \in T} \, \abs{t-\tau}> \tC \, \lambda_c$,
$\abs{q\brac{t}}$ is upper bounded by the right-hand side above
evaluated at $\tC \, \lambda_c$. 
\end{lemma}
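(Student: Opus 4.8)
The plan is to sharpen the near-spike estimate already obtained in Lemma \ref{lemma:concavity} by quantifying the concavity of $|q(t)|^2$ at each $\tau \in T$. Near a spike, after rotating by $v_\tau^{-1}$ we may assume $v_\tau = 1$, so that $q(\tau) = 1$ and, by the interpolation constraint \eqref{eq:interp2}, $q'(\tau) = 0$. First I would write $f(t) = |q(t)|^2 = q(t)\overline{q(t)}$ and compute its Taylor expansion around $\tau$: since $f(\tau) = 1$ and $f'(\tau) = 2\,\mathrm{Re}\big(\overline{q(\tau)}q'(\tau)\big) = 0$, we get $f(t) = 1 + \tfrac12 f''(\xi)(t-\tau)^2$ for some $\xi$ between $t$ and $\tau$ (using the Lagrange form of the remainder on the real-analytic function $f$). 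Then $f''(s) = 2\,\mathrm{Re}\big(\overline{q(s)}q''(s)\big) + 2|q'(s)|^2$, and the task reduces to producing a uniform upper bound of the form $f''(s) \le -c\,\fc^2$ for all $s$ with $|s-\tau| \le \tC\,\lambda_c$, with $c$ slightly larger than $2\cdot(0.3353)$ to absorb the fact that $\sqrt{1 - c\fc^2 r^2} \le 1 - \tfrac{c}{2}\fc^2 r^2$ in this range; tracking constants carefully should yield $c/2 \ge 0.3353$.

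The heart of the matter is thus a quantitative estimate on $\mathrm{Re}\big(\overline{q(s)}q''(s)\big)$ and $|q'(s)|^2$ near a spike. I would expand $q$, $q'$, $q''$ via \eqref{def:dual_cert} into a ``main term'' coming from the spike at $\tau$ itself and a ``tail'' coming from all other spikes $t_j \neq \tau$. For the main term one uses the explicit kernel $K$ in \eqref{def:kernel}: near $0$, $K(t) = 1 - a\,\fc^2 t^2 + O(\fc^4 t^4)$ for an explicit $a > 0$ (since $K$ is the fourth power of a Fejér-type kernel, $a$ is computable from the Taylor expansion of $\sin$), $K'(0) = 0$, and $K''(0) = -2a\,\fc^2 < 0$; these feed the dominant negative contribution to $f''$. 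The contributions of $\alpha_\tau K(s-\tau)$ etc. are controlled using the coefficient bounds from Lemma \ref{lemma:fejersq_coeffs}, in particular \eqref{bound_Re_alpha1} which guarantees $\mathrm{Re}\,\alpha_\tau$ is close to $1$ and $|\mathrm{Im}\,\alpha_\tau|$ is tiny, so the main term genuinely behaves like $-2a\fc^2$ times something close to $1$. For the tail, I would reuse the decay bounds on $K$, $K'$, $K''$ away from their centres (the same bounds that drive Lemmas \ref{lemma:concavity} and \ref{lemma:boundq}) together with the hypothesis $\Delta(T) \ge 2.5\,\lambda_c$ — which is \emph{stronger} than the $\optvalue\,\lambda_c$ assumed elsewhere — to show the tail perturbs the curvature by at most a small multiple of $\fc^2$. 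Summing: $f''(s) \le -2a\fc^2(1 - \varepsilon_1) + \varepsilon_2\fc^2 \le -0.6706\,\fc^2$ for the chosen numerics.

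For the second assertion, once $|q(t)| \le 1 - 0.3353\fc^2(t-\tau)^2$ holds on $|t-\tau| \le \tC\lambda_c$ for every $\tau$, I would note that on the complementary region $\min_{\tau\in T}|t-\tau| > \tC\lambda_c$ the bound $|q(t)| < 1$ is already Lemma \ref{lemma:boundq}; the sharper claim — that $|q(t)|$ is at most the right-hand side of \eqref{qbound_strict} evaluated at $\tC\lambda_c$, i.e.\ at most $1 - 0.3353\,\tC^2$ — follows because the estimates in the proof of Lemma \ref{lemma:boundq} in fact deliver a \emph{uniform} bound strictly below that threshold on the mid-range $|t-\tau| \in [\tC\lambda_c,\Delta/2]$. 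I would simply extract that uniform constant from the monotone/convexity arguments already used there (revisiting the intermediate inequalities rather than reproving them) and check numerically that it does not exceed $1 - 0.3353\,\tC^2$ under $\Delta \ge 2.5\lambda_c$. The main obstacle I anticipate is the bookkeeping in the tail estimate for $f''$: one needs the second-derivative decay of $K$ summed over a $\Delta$-separated set to beat the explicit constant $2a$, and since $K''$ decays only polynomially this requires the separation hypothesis to be pushed to $2.5\lambda_c$ rather than $\optvalue\lambda_c$ — getting the numerics to close comfortably, with room for the Taylor remainders, is where the care is needed.
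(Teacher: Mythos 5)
Your plan is essentially the paper's proof: redo the curvature estimate of Lemma~\ref{lemma:concavity} under the stronger separation $\Delta \ge 2.5\,\lambda_c$ (the paper bounds $\tfrac{d^2}{dt^2}|q|$ directly and obtains $\le -0.6706\,\fc^2$, whereas you work with $|q|^2$---a purely cosmetic difference since both require the same bounds on $q_R$, $q_R''$, $q_I$, $q_I''$, $q'$), integrate the resulting second-derivative bound via Taylor with Lagrange remainder, and then recycle the off-spike calculation of Lemma~\ref{lemma:boundq} at the larger $\Delta$ to show $|q(t)|$ stays below the parabola's value at $\tC\,\lambda_c$ (namely $0.9909$; the paper gets $0.9843$) when $\min_{\tau\in T}|t-\tau| > \tC\,\lambda_c$. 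One small arithmetic slip: from $f'' \le -c\,\fc^2$ you get $f(t) \le 1 - \tfrac{c}{2}\fc^2(t-\tau)^2$, and then $\sqrt{1-x}\le 1-x/2$ gives $|q(t)| \le 1 - \tfrac{c}{4}\fc^2(t-\tau)^2$, so you need $c \ge 4\times 0.3353$, not $c > 2\times 0.3353$; the curvature available at $\Delta\ge 2.5\,\lambda_c$ is comfortably larger than $1.34\,\fc^2$, so this still closes.
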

Section \ref{subsec:real} describes how the proof can be adapted to
obtain a slightly smaller bound on the minimum distance for
real-valued signals.

\subsection{Proofs of Lemmas}
\label{subsec:proofs_lemmas}
The proofs of the three lemmas above make repeated use of the fact
that the interpolation kernel and its derivatives decay rapidly away
from the origin. The intermediate result below proved in Section
\ref{proof:fejersq_bounds} of the Appendix quantifies this.
\begin{lemma}
\label{lemma:fejersq_bounds}
%For $1/\fc = \lambda_c \leq t\leq 1/2$, we have 
% For $x_1/\fc = x_1\lambda_c \leq t\leq \min \keys{x_2 \lambda_c,1/2} $ such that $ x_1 \geq 1$ and $ x_2\leq \sqrt{6}\fc/\pi$ we have 
% \begin{align*}
%  \abs{K\brac{t}}  & \leq \frac{\min \keys{a_{x_2}^4,1}}{\brac{\fc+2}^4t^4}\text{,} \\
%  \abs{K'\brac{t}}  & \leq  \frac{2\pi \min \keys{a_{x_2}^4\brac{1+\frac{a_{x_2}}{x_1}},2} }{\brac{\fc+2}^3t^4}\text{,} \\
%  \abs{K''\brac{t}} & \leq \frac{\pi^2 \min \keys{a_{x_2}^4\brac{4+\frac{7a_{x_2}}{x_1}+6\brac{\frac{a_{x_2}}{x_1}}^2},18}}{\brac{\fc+2}^2t^4}\text{,} \\
%  \abs{K'''\brac{t}} & \leq \frac{\pi^3 \min \keys{a_{x_2}^4\brac{8+\frac{24a_{x_2}}{x_1}+30\brac{\frac{a_{x_2}}{x_1}}^2+15\brac{\frac{a_{x_2}}{x_1}}^3},77} }{\brac{\fc+2}t^4}\text{,}
%  \end{align*}
% where:
% \begin{align*}
% a_{x_2} = \frac{2}{\pi \brac{1-\frac{\pi^2 x_2^2}{6 \fc^2}}}\text{.}
% \end{align*}
For $\ell \in \{0,1,2,3\}$, let $K^{(\ell)}$ be the $\ell$th
  derivative of $K$ ($K = K^{(0)}$).  For $\frac12 f_c^{-1} = \frac12 \lambda_c \leq t
  \leq \frac12$, we have
\[
\abs{K^{(\ell)}(t)} \le B_\ell(t)= \begin{cases}\tilde{B}_\ell(t)= \frac{\pi^\ell H_\ell(t) }{(f_c+2)^{4-\ell} \, t^4} , \qquad &  \frac12 \lambda_c \leq t \leq \sqrt{2}/\pi, \\
                         \frac{\pi^\ell H_\ell^\infty }{(f_c+2)^{4-\ell} \, t^4}, \qquad &  \sqrt{2}/\pi \leq t < \frac12,
                        \end{cases}
\]
where $H_0^\infty =  1$, $H_1^\infty  =  4$, $H_2^\infty  =   18$, $ H_3^\infty = 77$,
% \[ B_\ell(t) =
% \frac{\pi^\ell}{(f_c+2)^{4-\ell} \, t^4} \min(H_\ell(t),
% H_\ell^\infty), 
% \]
\begin{align*}
  H_0(t) & =  a^4(t),\\
  H_1(t) & = a^4(t)\brac{2+2b(t)},\\
  H_2(t) & =    a^4(t)\brac{4+7b(t)+6b^2(t)},\\
  H_3(t) & = a^4(t)\brac{8+24b(t)+30b^2(t) + 15 b^3(t)},
\end{align*}
%where 
%\begin{alignat*}{2}
%  H_0(t) & =  a^4(t), & \qquad  H_0^\infty & =  1,\\
%  H_1(t) & = a^4(t)\brac{2+2b(t)},  & \qquad    H_1^\infty & =  4,\\
%  H_2(t) & =    a^4(t)\brac{4+7b(t)+6b^2(t)},  & \qquad  H_2^\infty & =   18, \\
%  H_3(t) & = a^4(t)\brac{8+24b(t)+30b^2(t) + 15 b^3(t)}, & \qquad
%H_3^\infty& = 77,
%\end{alignat*}
and
%\begin{equation*}
%  a(t) = \frac{2}{\pi \brac{1-\frac{\pi^2 t^2}{6 }}}, \quad b(t) = \frac{1}{f_c} \, \frac{a(t)}{t}. 
%\end{equation*}
\begin{align*}
a(t) & =  \frac{2}{\pi \brac{1-\frac{\pi^2 t^2}{6}}},%\begin{cases} a_0(t), & t \leq t_0,\\
     %\frac{a_0(t_0)}{t_0} \, t,  & t_0 < t \leq t_1, \\
     %1,  & t_1 < t \leq 1/2, 
%\end{cases} 
\qquad 
  %a_0(t)  = \frac{2}{\pi \brac{1-\frac{\pi^2 t^2}{6}}},\\
  b(t)  = \frac{1}{f_c} \, \frac{a(t)}{t}. 
\end{align*} 

For each $\ell$, the bound on the magnitude of $B_\ell(t)$ is nonincreasing in $t$ and
$\tilde{B}_\ell(\Delta-t) + \tilde{B}_\ell(\Delta+t)$ is increasing in $t$ for $0 \leq t<\Delta/2$ if $0 \leq
\Delta + t \leq \sqrt{2}/\pi$.
% \begin{align*}
%  \abs{K\brac{t}}  & \leq \frac{\min \keys{a_{t}^4,1}}{\brac{\fc+2}^4t^4}\text{,} \\
%  \abs{K'\brac{t}}  & \leq  \frac{2\pi \min \keys{a_{t}^4\brac{1+\frac{a_{t}}{\fc t }},2} }{\brac{\fc+2}^3t^4}\text{,} \\
%  \abs{K''\brac{t}} & \leq \frac{\pi^2 \min \keys{a_{t}^4\brac{4+\frac{7a_{t}}{\fc t }+6\brac{\frac{a_{t}}{\fc t }}^2},18}}{\brac{\fc+2}^2t^4}\text{,} \\
%  \abs{K'''\brac{t}} & \leq \frac{\pi^3 \min \keys{a_{t}^4\brac{8+\frac{24a_{t}}{\fc t }+30\brac{\frac{a_{t}}{\fc t }}^2+15\brac{\frac{a_{t}}{\fc t }}^3},77} }{\brac{\fc+2}t^4}\text{,}
%  \end{align*}
% where:
% \begin{align*}
% a_{t} = \frac{2}{\pi \brac{1-\frac{\pi^2 t^2}{6 }}}\text{.}
% \end{align*}
\end{lemma}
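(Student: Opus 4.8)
Write $M := f_c/2+1$, so that (under the standing hypothesis $f_c\ge 128$) one has $f_c+2 = 2M\ge 130$, $M\ge 65$, and $K = Q^4$ with $Q(t) := \sin(M\pi t)/(M\sin(\pi t))$ — when $f_c$ is even $Q^{2}$ is the Fej\'er kernel, though the bounds below do not use this. The plan is: (i) differentiate $K=Q^4$; (ii) bound $Q,Q',Q'',Q'''$ in terms of $\csc(\pi t)$; (iii) insert elementary estimates for $\csc(\pi t)$ and collect; (iv) check the two monotonicity assertions. For (i), the chain rule gives the fixed identities $K = Q^4$, $K' = 4Q^3Q'$, $K'' = 12Q^2(Q')^2 + 4Q^3Q''$, $K''' = 24Q(Q')^3 + 36Q^2Q'Q'' + 4Q^3Q'''$. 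For (ii), set $c(t) := \csc(\pi t) > 0$ on $(0,\tfrac12]$ and use $\tfrac{d}{dt}\csc(\pi t) = -\pi\cot(\pi t)\csc(\pi t)$, $\tfrac{d}{dt}\cot(\pi t) = -\pi\csc^2(\pi t)$, with $0 \le \cot(\pi t) \le c(t)$; differentiating $Q = M^{-1}\sin(M\pi t)\,c$ and arguing by induction, every summand of $Q^{(j)}$ is a bounded trigonometric function of $M\pi t$ and $\pi t$ times $\pi^j M^{\,p}c^{\,q}$ with $q\le j+1$ and leading term $p=j-1$, $q=1$. Regrouping the numerators via $|\alpha\cos\theta+\beta\sin\theta|\le\sqrt{\alpha^2+\beta^2}$ and using $\cot^2(\pi t)=c^2-1\le c^2$ then yields $|Q|\le c/M$, $|Q'|\le\pi c\,(1+O(c/M))$, $|Q''|\le M\pi^2 c\,(1+O(c/M))$, $|Q'''|\le M^2\pi^3 c\,(1+O(c/M))$, each correction carrying an extra factor $c/M$.

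For (iii), the two elementary inputs are $\sin x\ge x-x^3/6$ (true for all $x\ge0$, since its second derivative $x-\sin x$ is nonnegative) and $\sin x\ge\tfrac2\pi x$ on $[0,\tfrac\pi2]$ (since $\sin x/x$ decreases to $2/\pi$); with $x=\pi t$ these give $c(t)\le a(t)/(2t)$ and $c(t)\le1/(2t)$ on $(0,\tfrac12]$. On $\tfrac12\lambda_c\le t\le\sqrt2/\pi$ I substitute $c\le a/(2t)$ into (i)--(ii): each $c^q$ becomes $a^q/(2t)^q$, the explicit $t^{-4}$ comes from the factor $c^4/(2t)^4$, the excess copies of $c/M$ are bounded by $a/(f_c t)=b(t)$ (using $f_c<f_c+2=2M$), and the combinatorial constants produced by summing the two, three, and four products in $K^{(\ell)}$ assemble into precisely the stated polynomials $H_\ell(t)$; this gives $|K^{(\ell)}|\le\tilde B_\ell(t)$. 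On $\sqrt2/\pi\le t<\tfrac12$ I instead use $c\le1/(2t)$ together with $1/t\le\pi/\sqrt2$; since $M\ge65$ the corrections are negligible and the generous constants $1,4,18,77$ absorb all slack (only the leading coefficients, equal to $2^\ell$, are close to tight). The threshold $\sqrt2/\pi$ is exactly the point where $\tfrac{d}{ds}\big(s-\pi^2 s^3/6\big)=1-\pi^2 s^2/2$ changes sign, i.e.\ where $a(s)/s$ stops decreasing, which is why the refined form is dropped past it.

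For (iv), each branch of $B_\ell$ equals a positive constant times $t^{-4}$ times a nonincreasing function of $t$, hence is nonincreasing. For the second assertion, put $u(s)=s-\pi^2 s^3/6$, so that $a^4(s)/s^4=\tfrac{16}{\pi^4}u(s)^{-4}$ and $b(s)=\tfrac{2}{\pi f_c}u(s)^{-1}$, whence $\tilde B_\ell(s)=\mathrm{const}_\ell\cdot u(s)^{-4}\,P_\ell\big(\tfrac{2}{\pi f_c}u(s)^{-1}\big)$ with $P_\ell$ the polynomial with nonnegative coefficients appearing in $H_\ell$. On $(0,\sqrt2/\pi]$ one has $u>0$, $u'=1-\pi^2 s^2/2\ge0$, $u''=-\pi^2 s\le0$, so every power $u^{-k}$ is positive, decreasing, and convex (two differentiations produce the numerator $k(k+1)(u')^2-kuu''\ge0$); hence $\tilde B_\ell$ — a product of a positive multiple of $u^{-4}$ with a positive combination of powers $u^{-m}$ — is itself positive, decreasing, and convex there. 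Convexity makes $\tilde B_\ell'$ increasing, so $\tfrac{d}{dt}\big[\tilde B_\ell(\Delta-t)+\tilde B_\ell(\Delta+t)\big]=\tilde B_\ell'(\Delta+t)-\tilde B_\ell'(\Delta-t)\ge0$ for $0\le t<\Delta/2$ whenever $\Delta+t\le\sqrt2/\pi$.

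\textbf{Main obstacle.} There is no conceptual difficulty; the real work is the bookkeeping in (ii)--(iii) — enumerating the summands of $Q',Q'',Q'''$ and then the products forming $K''$ and $K'''$, and bounding each just sharply enough (the $\sqrt{\alpha^2+\beta^2}$ regrouping and the identity $\cot^2=\csc^2-1$ both matter here) that the precise coefficients in $H_1,H_2,H_3$ come out rather than slightly larger ones.
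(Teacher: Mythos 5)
Your proposal is correct and follows essentially the same route as the paper: both rely on the two sine lower bounds $\sin(\pi t)\ge 2t$ and $\sin(\pi t)\ge\pi t-\pi^3 t^3/6 = 2t/a(t)$, on explicit control of $K,K',K'',K'''$ (the paper writes these out directly in terms of $\sin(f\pi t)$ and $\csc(\pi t)$; you factor $K=Q^4$ and use the chain rule, which produces the same expressions), and on the structure of $b(t)\propto u(t)^{-1}$ with $u(t)=t-\pi^2t^3/6$ for the monotonicity and convexity claims. The paper phrases the convexity step via $b^k$ being convex and you phrase it via $u^{-m}$ being decreasing and convex, but since $b$ is a positive multiple of $u^{-1}$ these are the same observation, and both your write-up and the paper's treat the final coefficient bookkeeping for $H_\ell$ as routine.
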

This lemma is used to control quantities of the form $\sum_{t_i \in
  T\setminus \{\tau\}} \abs{K\brac{t-t_i}}$ ($\tau \in T$) as shown
below.
\begin{lemma}
\label{lemma:bound_sum_kernel}
Suppose $0 \in T$. Then for all $t \in [0,\Delta/2]$,

\[
\sum_{t_i \in T\setminus \{0\}} \abs{K^{\brac{\ell}}\brac{t-t_i}} \leq
F_\ell\brac{\Delta,t} = F_\ell^+(\Delta,t) + F_\ell^-(\Delta,t) + F_\ell^{\infty}(\Deltaminth),
\]
where 
\begin{align*}
  F_\ell^+(\Delta,t) & = \max \keys{\max_{\Delta \leq \tp \leq 3 \Deltaminth} \,
    \abs{K^{\brac{\ell}}\brac{t-\tp}}, B_\ell\brac{3 \Deltaminth-t}} +
  \sum_{j = 2}^{20} \tilde{B}_\ell(j \Deltaminth - t),\\
  F_\ell^-(\Delta,t) & = \max \keys{\max_{ \Delta \leq \tm \leq 3
      \Deltaminth} \, \abs{K^{\brac{\ell}}\brac{\tm}}, B_\ell\brac{3
      \Deltaminth}} + \sum_{j = 2}^{20} \tilde{B}_\ell(j \Deltaminth + t), \\
  F_\ell^{\infty}(\Deltaminth) & = \frac{\kappa  \, \pi^\ell  H_\ell^\infty }{(\fc+2)^{4-\ell} \Deltaminth
  ^4 }, \qquad \kappa =  \frac{\pi^4}{45}-2\sum_{j=1}^{19} \frac{1}{j^4} \leq 8.98 \, 10^{-5}.
\end{align*}
Moreover, $F_\ell\brac{\Delta,t}$ is nonincreasing in $\Delta$ for all
$t$, and $F_\ell\brac{\Deltaminth,t}$ is nondecreasing in $t$. 
% \begin{align*}
% F_{\ell,j_0}\brac{\Delta,t} = & \max_{\Deltamin \leq \tp \leq 3 \Deltaminth} \max \keys{\abs{K^{\brac{\ell}}\brac{t-\tp}},B_\ell\brac{3 \Deltaminth-t}}\\
% & + \max_{ \Deltaminth \leq \tm \leq 3 \Deltaminth} \max \keys{\abs{K^{\brac{\ell}}\brac{\tm}},B_\ell\brac{3 \Deltaminth}}\\
% & +\frac{\pi^\ell}{(\fc+2)^{4-\ell} } \left[\sum_{j=2}^{j_0} \frac{H_\ell(j \Delta-t)}{\brac{j\Delta-t}^4} + \sum_{j=j_0+1}^{\infty} \frac{H_\ell^\infty}{\brac{j\Delta-t}^4} \right]\\
% & +\frac{\pi^\ell}{(\fc+2)^{4-\ell} } \left[\sum_{j=2}^{j_0} \frac{H_\ell(j \Delta+t)}{\brac{j\Delta+t}^4} + \sum_{j=j_0+1}^{\infty} \frac{H_\ell^\infty}{\brac{j\Delta+t}^4} \right].
%  \end{align*}
% and $F_\ell\brac{\Delta,t}$ is increasing in $t$ and decreasing in $\Delta$.
\end{lemma}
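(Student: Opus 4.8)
The plan is to fix $\tau=0$ as in the statement, to write $\Delta=\Delta(T)\ge\Deltaminth=2\lambda_c$ throughout (with $\fc\ge\minm$, which is what makes all the numerical thresholds below work), and to split the sum over $T\setminus\{0\}$ into a bounded number of ``near'' spikes, each controlled by the explicit envelope of Lemma \ref{lemma:fejersq_bounds}, plus a geometrically summable ``far'' tail controlled by the constant $\kappa$. First I would set up the indexing: going around the circle in the two directions from $0$, index the spikes of $T\setminus\{0\}$ by their circular distance from $0$, as $p_1<p_2<\cdots$ on one side and $q_1<q_2<\cdots$ on the other, all in $(0,\tfrac12]$, so that $p_j\ge j\Delta$ and $q_j\ge j\Delta$ (consecutive spikes are $\ge\Delta$ apart and the nearest is $\ge\Delta$ from $0$). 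Since $t\in[0,\Delta/2]$ lies on, say, the $p$-side, $t$ is at circular distance $p_j-t\ge j\Delta-t\ge(j-\tfrac12)\Delta$ from the $j$-th $p$-spike and $q_j+t\ge j\Delta$ from the $j$-th $q$-spike; in particular every distance involved is $\ge\tfrac12\Delta\ge\lambda_c>\tfrac12\lambda_c$, so Lemma \ref{lemma:fejersq_bounds} applies to each term, and for $j\le20$ the relevant arguments stay below $20.5\,\Deltaminth<\sqrt2/\pi$ (using $\fc\ge\minm$), where $B_\ell=\tilde B_\ell$.

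Next I would split each one-sided sum as $j=1$, $2\le j\le20$, $j\ge21$ and bound the three pieces. For $j=1$: since $B_\ell$ is nonincreasing, $|K^{(\ell)}(t-p_1)|=|K^{(\ell)}(p_1-t)|$ is at most $\max_{\Delta\le t_+\le3\Deltaminth}|K^{(\ell)}(t-t_+)|$ when $p_1\le3\Deltaminth$ and at most $B_\ell(3\Deltaminth-t)$ otherwise, which is exactly the leading maximum in $F_\ell^+$; the symmetric computation on the $q$-side (whose effective argument is $q_1+t$) gives the leading maximum in $F_\ell^-$. For $2\le j\le20$: the $j$-th $p$- (resp.\ $q$-) spike is at distance $\ge j\Deltaminth\mp t$, so $B_\ell$ nonincreasing gives $|K^{(\ell)}|\le\tilde B_\ell(j\Deltaminth\mp t)$ (if a distance exceeds $\sqrt2/\pi$, one first drops to $B_\ell(\sqrt2/\pi)=\tilde B_\ell(\sqrt2/\pi)$ and then uses $\tilde B_\ell$ nonincreasing), and summing over $j$ gives the two $\sum_{j=2}^{20}$ terms. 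For $j\ge21$, treated on both sides together: each such spike lies at distance $\ge(j-1)\Deltaminth\ge20\Deltaminth$; on $[20\Deltaminth,\sqrt2/\pi]$ one checks $H_\ell(s)\le H_\ell^\infty$ (there $a(s)\le3/\pi$ and $b(s)=a(s)/(\fc s)\le(3/\pi)/40$, since $\fc s\ge40$), which together with the second case of Lemma \ref{lemma:fejersq_bounds} yields $|K^{(\ell)}(s)|\le\pi^\ell H_\ell^\infty\,(\fc+2)^{-(4-\ell)}s^{-4}$ for all $s\in[20\Deltaminth,\tfrac12]$; hence the two tails sum to
\[
\frac{\pi^\ell H_\ell^\infty}{(\fc+2)^{4-\ell}\Deltaminth^4}\cdot 2\sum_{m\ge20}\frac1{m^4}=F_\ell^\infty(\Deltaminth),\qquad \kappa=2\sum_{m\ge20}\frac1{m^4}=\frac{\pi^4}{45}-2\sum_{m=1}^{19}\frac1{m^4}\le 8.98\cdot10^{-5},
\]
using $\sum_{m\ge1}m^{-4}=\pi^4/90$. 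Adding the three pieces gives $\sum_{t_i\in T\setminus\{0\}}|K^{(\ell)}(t-t_i)|\le F_\ell(\Delta,t)$.

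For the two monotonicity claims: $F_\ell^\infty$ depends on neither $\Delta$ nor $t$, and in $F_\ell^\pm$ the only $\Delta$-dependence is the interval $[\Delta,3\Deltaminth]$ over which the leading maximum runs (empty once $\Delta>3\Deltaminth$, in which case $B_\ell(3\Deltaminth-t)$ resp.\ $B_\ell(3\Deltaminth)$ still dominates the nearest spike), so shrinking that interval shows $F_\ell(\cdot,t)$ is nonincreasing in $\Delta$. For $F_\ell(\Deltaminth,\cdot)$: the $F_\ell^-$ leading term is $t$-free; $B_\ell(3\Deltaminth-t)$ and each $\tilde B_\ell(j\Deltaminth-t)$ are nondecreasing in $t$; the decreasing terms $\tilde B_\ell(j\Deltaminth+t)$ pair up with $\tilde B_\ell(j\Deltaminth-t)$, and $\tilde B_\ell(j\Deltaminth-t)+\tilde B_\ell(j\Deltaminth+t)$ is nondecreasing by the last assertion of Lemma \ref{lemma:fejersq_bounds} (applicable since $j\Deltaminth+t<\sqrt2/\pi$ and $t\le\Deltaminth/2\le j\Deltaminth/2$); finally, on $t\in[0,\Deltaminth/2]$ the $F_\ell^+$ leading maximum is attained at $t_+=\Deltaminth$ and equals $|K^{(\ell)}(\Deltaminth-t)|$, which is nondecreasing there because $\Deltaminth-t\in[\lambda_c,2\lambda_c]$ lies in the first lobe of $K^{(\ell)}$.

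The analytic input being immediate from Lemma \ref{lemma:fejersq_bounds}, the hard part will be purely organizational: making ``$p_j,q_j\ge j\Delta$'' and the $j=20$ cutoff valid uniformly over every admissible $\Delta\ge\Deltaminth$ and $t\in[0,\Delta/2]$---the crude envelope $B_\ell$ is very loose near the antipode of $0$, so spikes there must be routed correctly into either the $j\le20$ sums or the $\kappa$-tail---and verifying the $t$-monotonicity of the leading maximum at $\Delta=\Deltaminth$, which rests on the actual shape of $K^{(\ell)}$ near its first lobe rather than on the envelope alone, together with the elementary estimate $H_\ell(s)\le H_\ell^\infty$ on $[20\Deltaminth,\sqrt2/\pi]$. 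Everything else is substitution into Lemma \ref{lemma:fejersq_bounds} and geometric-series bookkeeping.
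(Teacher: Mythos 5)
Your construction of the bound itself---isolating the nearest spike on each side, bounding the next nineteen via $\tilde{B}_\ell$, and collecting the remainder into the $\kappa$-tail after noting $H_\ell(s)\le H_\ell^\infty$ on $[20\Deltaminth,\sqrt{2}/\pi]$---is exactly the paper's decomposition, and your $\Delta$-monotonicity argument and the pairing of $\tilde{B}_\ell(j\Deltaminth - t)$ with $\tilde{B}_\ell(j\Deltaminth + t)$ are also the paper's.

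The gap is the $t$-monotonicity of the leading term of $F_\ell^+$ at $\Delta=\Deltaminth$. You assert (i) that $\max_{\Deltaminth\le \tp\le 3\Deltaminth}|K^{(\ell)}(t-\tp)|$ is attained at $\tp=\Deltaminth$, and (ii) that $|K^{(\ell)}(\Deltaminth-t)|$ is then nondecreasing on $[0,\Deltaminth/2]$ because $\Deltaminth-t$ stays in the first lobe. Neither is established, and (ii) is false for $\ell\ge 2$: $K''$ changes sign at the inflection of $K$ (well inside the first lobe, near $\sqrt{2}/\pi\cdot\lambda_c$), then peaks near $\lambda_c$, then returns to zero at the fourth-order zero of $K$ near $2\lambda_c$; so $|K''|$ is not monotone on $[\Deltaminth/2,\Deltaminth]$ and $|K''(\Deltaminth-t)|$ first increases and then decreases on $[0,\Deltaminth/2]$. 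Correspondingly $K'''$ has an internal zero there, so $|K'''(\Deltaminth-t)|$ is not monotone either. Claim (i) is also unfounded, since $K^{(\ell)}$ oscillates on $[\Deltaminth-t,3\Deltaminth-t]$ and the maximum need not sit at the left endpoint. The paper's proof avoids any shape analysis of $K^{(\ell)}$: write the leading term as $\max\bigl\{\max_{u\in[\Deltaminth-t,\,3\Deltaminth-t]}|K^{(\ell)}(u)|,\;B_\ell(3\Deltaminth-t)\bigr\}$; when $t$ increases to $t'>t$ the inner interval shifts left, so the new inner maximum dominates the old one restricted to the overlap $[\Deltaminth-t,3\Deltaminth-t']$, while the slice $u\in(3\Deltaminth-t',3\Deltaminth-t]$ that is dropped contributes at most $B_\ell(3\Deltaminth-t')\ge B_\ell(3\Deltaminth-t)$ because $B_\ell$ is a nonincreasing envelope of $|K^{(\ell)}|$, and the outer $\max$ absorbs this. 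That argument, which uses only that $B_\ell$ is nonincreasing and dominates $|K^{(\ell)}|$, is what you should substitute for your first-lobe claim.
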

\begin{proof}
  We consider the sum over positive $t_i \in T$ first and denote by
  $\tp$ the positive element in $T$ closest to $0$. We have
\begin{equation}
\label{eq:Tpos}
\sum_{t_i \in T: \, 0 < t_i \le 1/2} \abs{K^{\brac{\ell}}\brac{t-t_i}} =
\abs{K^{\brac{\ell}}\brac{t-\tp}} + \sum_{t_i \in T\setminus\{t_+\}: \,
  0 < t_i \le 1/2} \abs{K^{\brac{\ell}}\brac{t-t_i}}.
\end{equation}
Let us assume $t_+ < 2 \Deltamin $ (if $t_+ > 2 \Deltamin$ the argument is very similar). Note that the assumption that $\fc \geq 128$ implies $21 \Deltamin < 0.33 < \sqrt{2}/\pi $. By Lemma~\ref{lemma:fejersq_bounds} and the minimum separation condition, this means that the second term in the right-hand side is at most 
\begin{equation}
\sum_{j = 2}^{20} \tilde{B}_\ell(j \Deltamin - t) + \frac{\pi^\ell}{(\fc+2)^{4-\ell} } \sum_{j=21}^{\infty} \frac{H_\ell^\infty}{\brac{j \Deltamin \pm t}^4}, \label{sumB_formula}
\end{equation}
which can be upper bounded using the fact that
\[
\sum_{j=21}^{\infty} \frac{H_\ell^\infty}{\brac{j \Deltamin \pm t}^4}
\leq \sum_{j=20}^{\infty} \frac{H_\ell^\infty}{\brac{j \Deltamin}^4} =
\frac{H_\ell^\infty}{ \Deltamin ^4}\brac{\sum_{j=1}^{\infty} \frac{1}{j^4}
  - \sum_{j=1}^{19} \frac{1}{j^4}} = \frac{H_\ell^\infty}{ \Deltamin
  ^4}\brac{ \frac{\pi^4}{90}-\sum_{j=1}^{19} \frac{1}{j^4}} = \frac{\kappa H_\ell^\infty}{2 \Deltamin
  ^4};
\]
the first inequality holds because $t < \Deltamin $ and the last because the Riemann
zeta function is equal to $\pi^4/90$ at 4. Also, 
\[
\abs{K^{\brac{\ell}}\brac{t-\tp}} \le \begin{cases} \max_{\Delta \leq
    \tp \leq 3 \Deltamin} \, \abs{K^{\brac{\ell}}\brac{t-\tp}}, &
  t_+ \le 3\Deltamin, \\
  B_\ell(3\Deltamin - t), & t_+ > 3\Deltamin. \end{cases}
\]
Hence, the quantity in \eqref{eq:Tpos} is bounded by
$F_\ell^+(\Delta,t)+F_\ell^{\infty}(\Deltamin)/2$. A similar argument shows that the sum over
negative $t_i \in T$ is bounded by $F_\ell^-(\Delta,t)+F_\ell^{\infty}(\Deltamin)/2$.

To verify the claim about the monotonicity w.r.t.~$\Delta$, observe
that both terms
\[
\max \keys{\max_{\Delta \leq \tp \leq 3 \Deltamin} \,
  \abs{K^{\brac{\ell}}\brac{t-\tp}}, B_\ell\brac{3 \Deltamin - t}}
\text{ and } \max \keys{\max_{ \Delta \leq \tm \leq 3 \Deltamin} \,
  \abs{K^{\brac{\ell}}\brac{\tm}}, B_\ell\brac{3 \Deltamin}}
\]
are nonincreasing in $\Delta$.  %Since $B_\ell\brac{j \Delta -t} +B_\ell\brac{j \Delta+t}$ is decreasing in $\Delta$ by Lemma \ref{lemma:fejersq_bounds}, the conclusion follows.

Fix $\Delta = \Deltamin$ now.  Since $\tilde{B}_\ell\brac{j \Delta -t}
+\tilde{B}_\ell\brac{j \Delta+t}$ is increasing in $t$ for $j\leq 20$ (recall that $21 \Deltamin < \sqrt{2}/\pi $), we only need to check that the first term in the expression for $F_\ell^+$ is
nondecreasing in $t$. To see this, rewrite this term (with
$\Delta = \Deltamin$) as
\[
\max \keys{\max_{\Deltamin - t \leq u \leq 3 \Deltamin - t} \,
  \abs{K^{\brac{\ell}}\brac{u}}, B_\ell\brac{3 \Deltamin - t}}.
\]
Now set $t' > t$. Then by Lemma \ref{lemma:fejersq_bounds}, 
\[
B_\ell\brac{3 \Deltamin-t'} \ge \begin{cases} B_\ell\brac{3 \Deltamin-t}, & \\
\abs{K(u)}, & u \ge 3\Deltamin - t'. \end{cases}
\] 
Also, we can verify that  
\[
\max_{\Deltamin - t' \leq u \leq 3 \Deltamin - t'} \,
\abs{K^{\brac{\ell}}\brac{u}} \ge \max_{\Deltamin - t \leq u \leq 3
  \Deltamin - t'} \, \abs{K^{\brac{\ell}}\brac{u}}.
\]
This concludes the proof.
%  this bound is decreasing in
% $\Delta$. Also, by Lemma \ref{lemma:fejersq_bounds} $B_\ell\brac{j
%   \Delta -t} +B_\ell\brac{j \Delta+t}$ is increasing in $t$ for all
% values of $j$. The expression:
%  \begin{align*}
% \max_{\Deltamin \leq \tp \leq 3 \Deltamin} \max \keys{\abs{K^{\brac{\ell}}\brac{t-\tp}},B_\ell\brac{3 \Deltamin-t}}
%  \end{align*}
% is also increasing in $t$. To see why let $ t' > t_1$.
% First $B_\ell\brac{3 \Deltamin-t'}$ is larger than $B_\ell\brac{3
%   \Deltamin-t}$ and than $\abs{K^{\brac{\ell}}\brac{t}}$ for $t$
% between $3 \Deltamin-t'$ and $3 \Deltamin-t$, again by Lemma
% \ref{lemma:fejersq_bounds}.
% On the other hand, the maximum absolute
% value of $K^{\brac{\ell}}$ is larger between $ \Deltamin-t'$ and $3
% \Deltamin-t'$ than between $ \Deltamin-t$ and $3
% \Deltamin-t'$. 
\end{proof}

In the proof of Lemmas \ref{lemma:concavity} and \ref{lemma:boundq},
it is necessary to find a numerical upper bound on
$F_{\ell}(\Deltamin,t)$ at $t \in \{0, \tC \, \lambda_c,0.4269  \, \lambda_c, \tA \,
\lambda_c\}$ (for the last two, we only need bounds for $\ell=0,1$). For a fixed $t$, it is easy to find the maximum of
$\abs{K^{\brac{\ell}}\brac{t-\tp}}$ where $\tp$ ranges over
$[\Deltamin, 3\Deltamin]$ since we have expressions for the smooth
functions $K^{(\ell)}$ (see Section \ref{proof:fejersq_bounds} of the
Appendix). For reference, these functions are plotted in Figure
\ref{fig:kernel}. The necessary upper bounds are gathered in Table \ref{table:F_tC}.

\begin{table}[htbp ]%\vspace{10pt}
\begin{center}
\begin{tabular}{| c | c | c | c | c |}
	\hline
	$t/\lambda_c$ & $F_{0}\brac{1.98 \lambda_c,t}$ & $F_{1}\brac{1.98 \lambda_c,t}$ & $F_{2}\brac{1.98 \lambda_c,t}$ & $F_{3}\brac{1.98 \lambda_c,t}$ \\
	\hline
	$0$ & $ 6.253\; 10^{-3}$ & $7.639\; 10^{-2} \fc$ & $1.053\, \fc^2$ & $ 8.078\, \fc^3$ \\
	\hline
	$\tC$ & $ 6.279\; 10^{-3}$ & $7.659\; 10^{-2} \fc$ & $1.055\, \fc^2$ & $ 18.56\, \fc^3$ \\
	\hline
	$0.4269$ & $ 8.029\; 10^{-3}$ & $0.3042 \fc$ &  &   \\
	\hline
	$0.7559$ & $ 5.565\; 10^{-2}$ & $1.918 \fc$ &  &  \\
	\hline
%	$\tA \, \lambda_c$ & $ 6.066\; 10^{-2}$ & $2.023\, \fc$ & $4.880\, \fc^2$ & $ 97.08\, \fc^3$ \\
%	\hline
\end{tabular}
\vspace{10pt}
\caption{Numerical upper bounds on $F_{\ell}(1.98\lambda_c, t)$.}
  \label{table:F_tC}
\end{center}
\end{table}
\begin{figure}
\centering
\subfigure[]{
\includegraphics[scale=0.4]{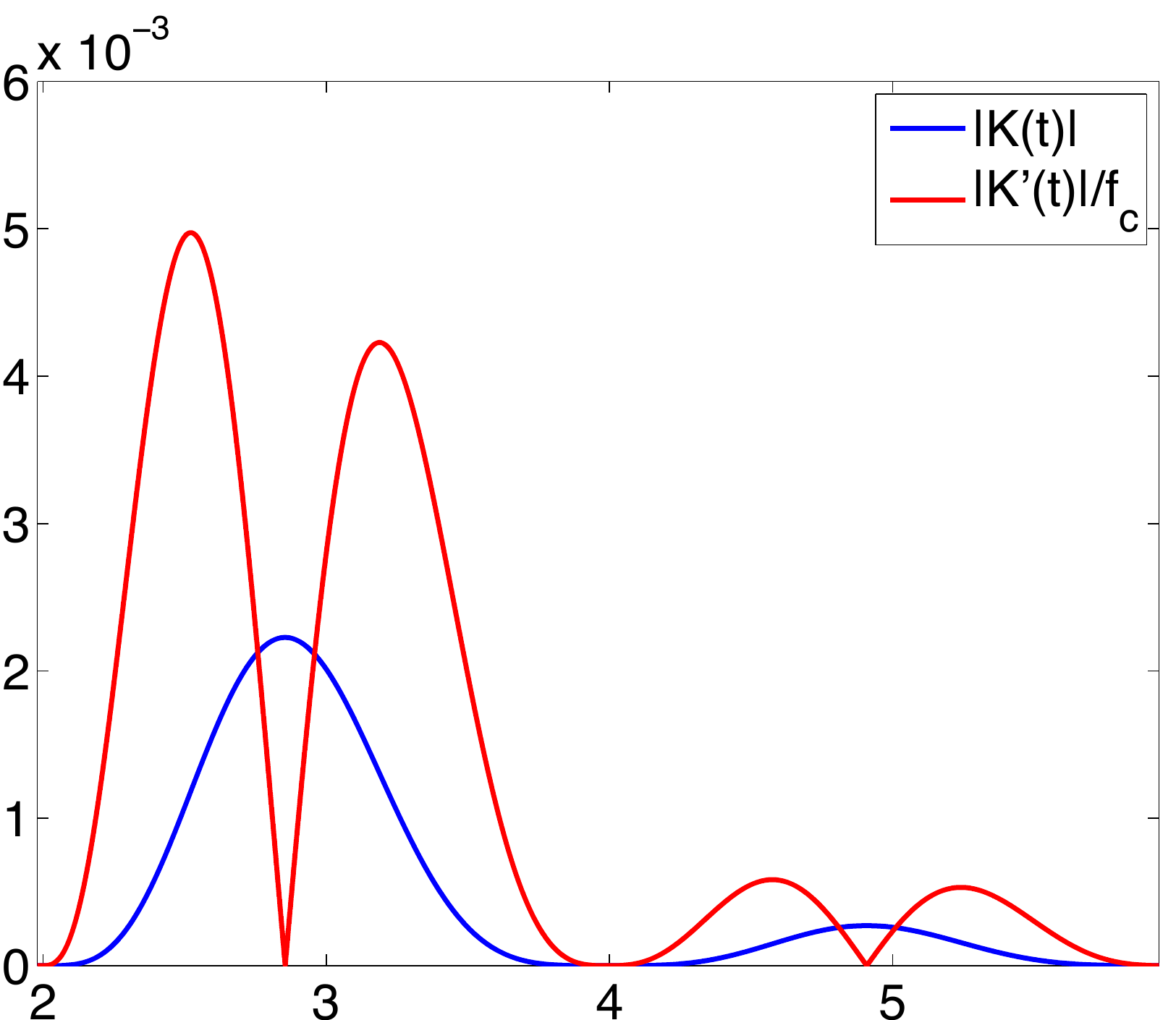}
\label{fig:kernel_a}
}
\subfigure[]{
\includegraphics[scale=0.4]{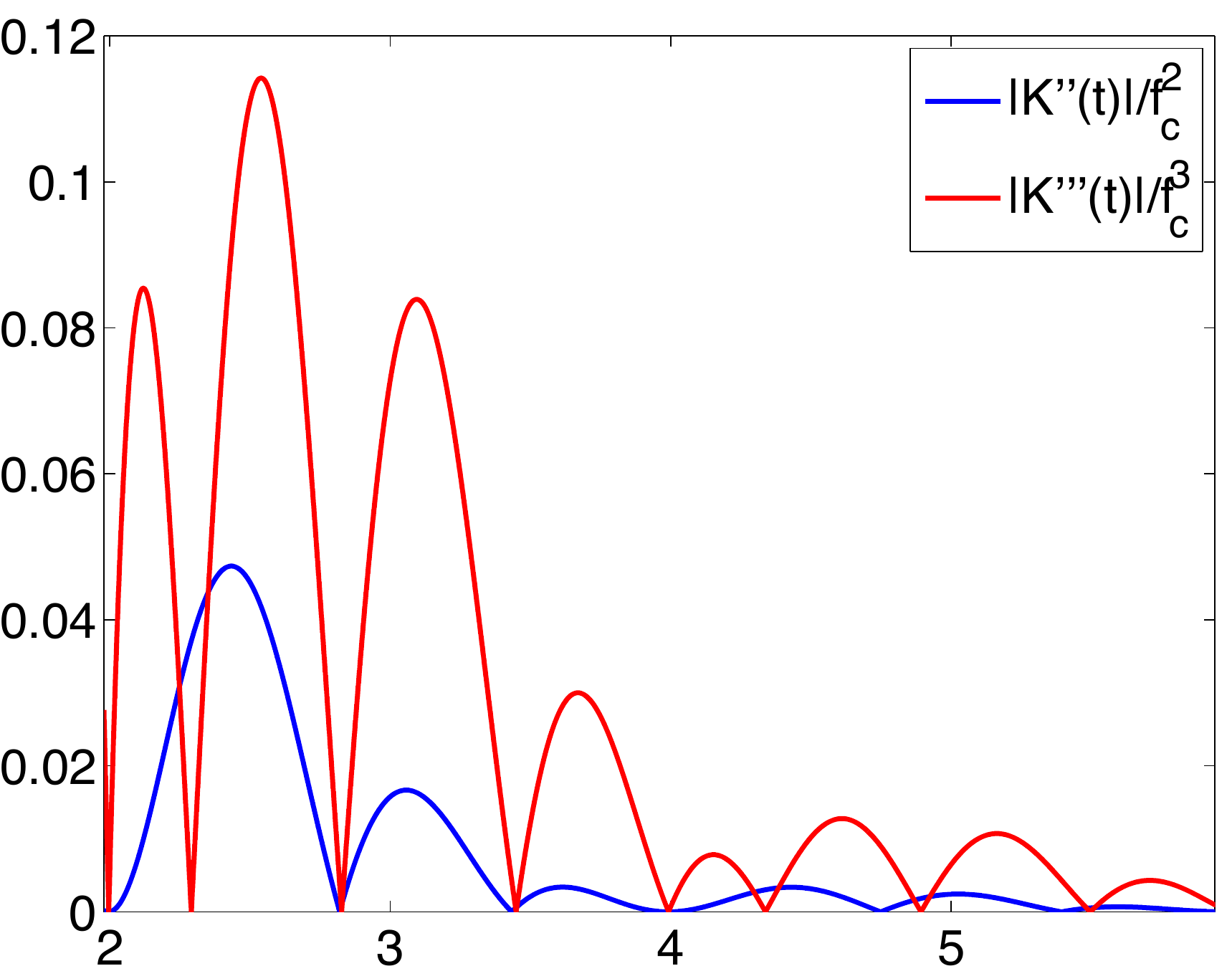}
\label{fig:kernel_b}
}
\caption{$\abs{K^{(\ell)}\brac{t}}$ for $t \in [\Deltamin,
  3\Deltamin]$. The scaling of the $x$-axis is in units of
  $\lambda_c$.}
\label{fig:kernel}
\end{figure}

Finally, a last fact we shall use is that $K\brac{0}=1$ is the global
maximum of $K$ and $\abs{K''\brac{0}}=\abs{-\pi^2\fc \brac{\fc+4}/3}$ the
global maximum of $\abs{K''}$.
%$\abs{K''}\brac{0}=-\pi^2\fc \brac{\fc+4}/3$ the
%global minimum of $K''$.

\subsubsection{Proof of Lemma \ref{lemma:fejersq_coeffs}}
\label{subsec:fejersq_coeffs}

Set
\[
\brac{D_0}_{jk} = K\brac{t_j - t_k}, \quad \brac{D_1}_{jk} =
K'\brac{t_j - t_k}, \quad \brac{D_2}_{jk} = K''\brac{t_j - t_k},
\]
where $j$ and $k$ range from $1$ to $\abs{T}$. With this,
\eqref{eq:interp1} and \eqref{eq:interp2} become
\[
\begin{bmatrix} D_0 & D_1\\ D_1 & D_2 \end{bmatrix} \begin{bmatrix} \alpha \\
  \beta \end{bmatrix} =\begin{bmatrix} v\\0 \end{bmatrix}.
\]
A standard linear algebra result asserts that this system is
invertible if and only if $D_2$ and its Schur complement $D_0 -
D_1D_2^{-1}D_1$ are both invertible. To prove that this is the case we
can use the fact that a symmetric matrix $M$ is invertible if
\begin{equation}
 \normInfInf{\Id- M} < 1, \label{inf_invertibility}  
\end{equation}
where $\|A \|_\infty$ is the usual infinity norm of a matrix defined as
$\|A\|_\infty = \max_{\|x\|_\infty = 1} \|Ax\|_\infty = \max_i \sum_j
|a_{ij}|$. This follows from $M^{-1} = (I - H)^{-1} = \sum_{k \ge 0}
H^k$, $H = I - M$, where the series is convergent since $
\normInfInf{H} < 1$. In particular,
\begin{equation}
  \normInfInf{M^{-1}}  \leq \frac{1}{1-\normInfInf{\Id-M}}.
\label{infinfnorminv}
\end{equation}

% which follows from the Gershgorin circle theorem (see Theorem 6.1.1 in \cite{hornMatrixAnalysis}) that allows to lower bound the smallest eigenvalue of $M$ by $\normInfInf{\Id- M}$.\\
We also make use of the inequalities below, which follow from Lemma
\ref{lemma:bound_sum_kernel}
\begin{align}
\normInfInf{\Id- D_0} & \le
  F_{0}\brac{\Deltamin,0}\le  6.253 \; 10^{-3},\label{Id_F_bound}\\
\normInfInf{D_1}   & \le
  F_{1}\brac{\Deltamin,0} \leq 7.639 \; 10^{-2}\; \fc, \label{D1_bound}\\
\normInfInf{\abs{K''\brac{0}}\Id-D_2} & \le
  F_{2}\brac{\Deltamin,0} \leq  1.053\; \fc^2
\label{Id_D2_bound}\text{.}
\end{align}
%\begin{align}
%\normInfInf{\Id- D_0} & = \sum_{t \in T\setminus \{0\}} \abs{K(t)} \le
%  F_{0}\brac{\Deltamin,0}\le  6.253 \; 10^{-3}\label{Id_F_bound}\\
%\normInfInf{D_1}   &= \sum_{t \in T\setminus \{0\}} \abs{K'(t)} \le
%  F_{1}\brac{\Deltamin,0} \leq 7.639 \; 10^{-2}\; \fc. \label{D1_bound}\\
%\normInfInf{\abs{K''\brac{0}}\Id-D_2} & =\sum_{t \in T\setminus \{0\}} \abs{K''(t)} \le
%  F_{2}\brac{\Deltamin,0} \leq  1.053\; \fc^2
%\label{Id_D2_bound}\text{.}
%\end{align}

Note that $D_2$ is symmetric because the second derivative of the
interpolation kernel is symmetric. The bound \eqref{Id_D2_bound} and
the identity $K''\brac{0}=-\pi^2 \fc\brac{\fc+4}/3$ give
\[
\normInfInf{\Id-\frac{D_2}{\abs{K''\brac{0}}}}  < 1, 
\]
which implies the invertibility of $D_2$.  The bound
\eqref{infinfnorminv} then gives
\begin{equation}
\normInfInf{D_2^{-1}}  \leq \frac{1}{\abs{K''\brac{0}} - \normInfInf{\abs{K''\brac{0}}\Id-D_2}} \leq \frac{0.4275}{\fc^2}. \label{D2_inv_bound}
\end{equation}
Combining this with \eqref{Id_F_bound} and \eqref{D1_bound} yields 
\begin{align}
\normInfInf{\Id- \brac{D_0-D_1D_2^{-1}D_1}} & \leq \normInfInf{\Id- D_0} + \normInfInf{D_1}^2\normInfInf{D_2^{-1}}\notag\\
& \leq 8.747  \; 10^{-3} <1 \label{invschur_bound}\text{.}
\end{align}
Note that the Schur complement of $D_2$ is symmetric because $D_0$ and
$D_2$ are both symmetric whereas $D_1^T=-D_1$ since the derivative of
the interpolation kernel is odd. This shows that the Schur complement
of $D_2$ is invertible and, therefore, the coefficient vectors
$\alpha$ and $\beta$ are well defined.

There just remains to bound the interpolation coefficients, which can
be expressed as
\[
  \begin{bmatrix} \alpha \\
    \beta \end{bmatrix} =\begin{bmatrix} \Id \\
    -D_2^{-1}D_1 \end{bmatrix} C^{-1} v, \quad C :=
  D_0-D_1D_2^{-1}D_1,
\]
where $C$ is the Schur complement.  The relationships
\eqref{infinfnorminv} and \eqref{invschur_bound} immediately give a
bound on the magnitude of the entries of $\alpha$
\[
\normInf{\alpha} = \normInf{C^{-1} v} \leq \normInfInf{C^{-1}} \leq 1+
8.824 \; 10^{-3}.
\]
Similarly, \eqref{D1_bound}, \eqref{D2_inv_bound} and
\eqref{invschur_bound} allow to bound the entries of $\beta$:
\begin{align*}
  \normInf{\beta} & \leq \normInfInf{D_2^{-1}D_1 C^{-1}}\\
  & \leq \normInfInf{D_2^{-1}} \normInfInf{D_1}
  \normInfInf{C^{-1}} \leq 3.294 \; 10^{-2} \,
  \lambda_c.
\end{align*}
Finally, with $v_1=1$, we can use \eqref{invschur_bound} to show that
$\alpha_1$ is almost equal to 1. Indeed,  
\[
\alpha_1 = 1 - \gamma_1, \quad \gamma_1 = [(\Id - C^{-1})v]_1,  
\]
$\abs{\gamma_1} \le \normInfInf{\Id - C^{-1}}$, and
\[
\normInfInf{\Id - C^{-1}} = \normInfInf{C^{-1} (\Id - C)} \le
\normInfInf{C^{-1}} \, \normInfInf{\Id - C} \leq 8.824 \; 10^{-3}.
\]
This concludes the proof. 

\subsubsection{Proof of Lemma \ref{lemma:concavity}}
\label{subsec:concavity}

We assume without loss of generality that $\tau = 0$ and $q(0) =
1$. By symmetry, it suffices to show the claim for $t \in (0,\tC \,
\lambda_c]$. Since $q'(0) = 0$, local strict concavity would imply
that $\abs{q\brac{t}}<1$ near the origin. We begin by showing that the
second derivative of $\abs{q}$ is strictly negative in the interval
$\brac{0,\tC \, \lambda_c}$. This derivative is equal to
\[
\derTwo{t}{\abs{q}}\brac{t}= -\frac{\brac{q_R\brac{t}q_R'\brac{t}+
    q_I\brac{t}q_I'\brac{t}}^2}{\abs{q\brac{t}}^3}+
\frac{\abs{q'\brac{t}}^2+q_R\brac{t}q_R''\brac{t}+ q_I\brac{t}
  q_I''\brac{t}}{\abs{q\brac{t}}} \text{,}
\]
where $q_R$ is the real part of $q$ and $q_I$ the imaginary part. As a
result, it is sufficient to show that
\begin{equation}
  q_R\brac{t}q_R''\brac{t} + \abs{q'\brac{t}}^2+
  \abs{q_I\brac{t}}\abs{ q_I''\brac{t}}<0\text{,} \label{ineq_condition}
\end{equation}
as long as $\abs{q\brac{t}}$ is bounded away from zero.  In order to
bound the different terms in \eqref{ineq_condition}, we use the series
expansions of the interpolation kernel and its derivatives around the
origin to obtain the inequalities, which hold for all $t \in
[-1/2,1/2]$,
\begin{align}
  K\brac{t} & \geq  1 - \frac{\pi^2}{6} \fc \brac{\fc+4} t^2,\label{boundK_near}\\
  \abs{K'\brac{t}} & \leq  \frac{\pi^2}{3} \fc\brac{\fc+4} \, t, \label{boundKp_near}\\
  K''\brac{t} & \leq   -\frac{\pi^2}{3} \fc\brac{\fc+4} + \frac{ \pi^4}{6} \brac{\fc+2}^4 \, t^2, \label{boundKp2_near_low}\\
  \abs{K''\brac{t}} & \leq  \frac{\pi^2}{3} \fc\brac{\fc+4},\label{boundKp2_near}\\
  \abs{K'''\brac{t}} & \leq \frac{\pi^4}{3} \brac{\fc+2}^4 \, t.\label{boundKp3_near}
\end{align}
The lower bounds are decreasing in $t$, while the upper bounds are
increasing in $t$, so we can evaluate them at $\tC \, \lambda_c$ to
establish that for all $t \in [0, \tC \, \lambda_c]$,
\begin{equation}
  \label{boundK}
\begin{alignedat}{3}
K\brac{t} & \geq
0.9539, % \quad \quad \abs{K'\brac{t}} \leq 0.5446 \, \fc,
& \qquad K''\brac{t}  & \leq -2.923 \, \fc^2, & \qquad & \\
\abs{K'\brac{t}} & \leq 0.5595 \, \fc, & \qquad \abs{K''\brac{t}}
& \leq 3.393 \, \fc^2, & \qquad \abs{K'''\brac{t}} & \leq 5.697 \,
\fc^3. 
\end{alignedat}
\end{equation}
We combine this with Lemmas \ref{lemma:bound_sum_kernel} and
\ref{lemma:fejersq_coeffs} to control the different terms in
\eqref{ineq_condition} and begin with $q_R\brac{t}$. Here,
\begin{align}
  q_R\brac{t} & = \sum_{t_j \in T} \Real{\alpha_j} K\brac{t-t_j} + \Real{\beta_j} K'\brac{t-t_j}\notag\\
  & \geq \Real{\alpha_1}K\brac{t} -\normInf{\alpha}\sum_{t_j \in T\setminus\keys{0}}\abs{K\brac{t-t_j}} - \normInf{\beta}\sum_{t_j \in T} \abs{K'\brac{t-t_j}}\notag\\
  & \geq \Real{\alpha_1}K\brac{t}
  -\alpha^\infty F_{0}\brac{\Delta,t} -
  \beta^\infty \brac{\abs{K'\brac{t}} + F_{1}\brac{\Delta,t}} \notag\\
& \geq
  \Real{\alpha_1}K\brac{t} -\alpha^\infty F_{0}\brac{\Deltamin,t} -
  \beta^\infty \brac{\abs{K'\brac{t}} + F_{1}\brac{\Deltamin,t}}
  \geq 0.9182\text{.}\label{bound_qR}
\end{align}
The third inequality follows from the monotonicity of $F_\ell$ in
$\Delta$, and the last from \eqref{boundK} together with the
monotonicity of $F_{1}\brac{\Deltamin,t}$ in $t$, see Lemma
\ref{lemma:bound_sum_kernel}, so that we can plug in $t = \tC \,
\lambda_c$. Observe that this shows that $q$ is bounded away from zero
since $\abs{q\brac{t}} \geq q_R\brac{t} \geq 0.9198$.  Very similar
computations yield
\begin{align*}
\abs{q_I\brac{t}} & = \abs{\sum_{t_j \in T} \Imag{\alpha_j} K\brac{t-t_j} + \Imag{\beta_j} K'\brac{t-t_j}}\\
& \leq \abs{\Imag{\alpha_1}} + \normInf{\alpha}\sum_{t_j \in T\setminus\keys{0}}\abs{K\brac{t-t_j}} + \normInf{\beta}\sum_{t_j \in T} \abs{K'\brac{t-t_j}}\\
& \leq \abs{\Imag{\alpha_1}} + \alpha^\infty F_{0}\brac{\Deltamin,t} + \beta^\infty \brac{\abs{K'\brac{t}}+F_{1}\brac{\Deltamin,t}} \leq 3.611 \; 10^{-2}\end{align*}
and 
\begin{align}
  q_R''\brac{t} & = \sum_{t_j \in T}\Real{\alpha_j} K''\brac{t-t_j} + \sum_{t_j \in T}\Real{\beta_j} K'''\brac{t-t_j}\notag\\
  & \leq \Real{\alpha_1}K''\brac{t}+ \normInf{\alpha}\sum_{t_j \in T\setminus\keys{0}}\abs{K''\brac{t-t_j}} + \normInf{\beta}\sum_{t_j \in T} \abs{K'''\brac{t-t_j}}\notag\\
& \leq \Real{\alpha_1}K''\brac{t} + \alpha^\infty F_{2}\brac{\Deltamin,t} + \beta^\infty \brac{\abs{K'''\brac{t}}+ F_{3}\brac{\Deltamin,t}} \leq -1.034 \, \fc^2. \label{bound_qR2p}
\end{align}
Similarly, 
\begin{align*}
  \abs{q_I''\brac{t}} & = \abs{\sum_{t_j \in T}\Imag{\alpha_j} K''\brac{t-t_j} + \sum_{t_j \in T}\Imag{\beta_j} K'''\brac{t-t_j}}\\
  & \leq \Imag{\alpha_1}\abs{K''\brac{t}}+ \normInf{\alpha}\sum_{t_j \in T\setminus\keys{0}}\abs{K''\brac{t-t_j}} + \normInf{\beta}\sum_{t_j \in T} \abs{K'''\brac{t-t_j}}\\
& \leq \Imag{\alpha_1}\abs{K''\brac{t}}+ \alpha^\infty F_{2}\brac{\Deltamin,t} + \beta^\infty \brac{\abs{K'''\brac{t}}+ F_{3}\brac{\Deltamin,t}} \leq 1.893 \, \fc^2
\end{align*}
and 
\begin{align*}
  \abs{q'\brac{t}} & = \abs{\sum_{t_j \in T} \alpha_j K'\brac{t-t_j} + \beta_j K''\brac{t-t_j}}\\
  & \leq \normInf{\alpha} \sum_{t_j \in T}\abs{K'\brac{t-t_j}} + \normInf{\beta}\sum_{t_j \in T} \abs{K''\brac{t-t_j}}\\
  & \leq \alpha^\infty \abs{K'\brac{t}}+ \alpha^\infty
  F_{1}\brac{\Deltamin,t} + \beta^\infty
  \brac{\abs{K''\brac{t}}+F_{2}\brac{\Deltamin,t}} \leq 0.7882\, \fc
  \text{.}
\end{align*}
These bounds allow us to conclude that $|q|''$ is negative on $[0, \tC
\lambda_c]$ since
\[
q_R\brac{t}q_R''\brac{t} + \abs{q'\brac{t}}^2+ \abs{q_I\brac{t}}\abs{
  q_I''\brac{t}} \leq -9.291 \; 10^{-2} \fc^2 < 0.
\]
% Finally, $q$ is bounded away from zero,
% \begin{align*}
%   \abs{q\brac{t}} & \geq q_R\brac{t} \geq 0.9198.
% \end{align*}
This completes the proof.

\subsubsection{Proof of Lemma \ref{lemma:boundq}}
\label{subsec:boundq}

As before, we assume without loss of generality that $\tau = 0$ and
$q(0) =1$.  We use Lemma \ref{lemma:bound_sum_kernel} again to bound
the absolute value of the dual polynomial on $[\tC \lambda_c,
\Delta/2]$ and write
\begin{align}
  \abs{q\brac{t}} & = \Bigl| \sum_{t_j \in T} \alpha_j K\brac{t-t_j} + \beta_j K'\brac{t-t_j}\Bigr| \notag\\
  & \leq \normInf{\alpha}\Bigl[\abs{K\brac{t}} + \sum_{t_j \in T\setminus\keys{0}} \abs{K\brac{t-t_j}}\Bigr] +\normInf{\beta}\Bigl[\abs{K'\brac{t}} + \sum_{t_j \in T\setminus\keys{0}} \abs{K'\brac{t-t_j}}\Bigr] \notag\\
  & \leq \alpha^\infty \abs{K\brac{t}} + \alpha^\infty
  F_{0}\brac{\Deltamin,t}  +\beta^\infty \abs{K'\brac{t}} +  \beta^\infty
  F_{1}\brac{\Deltamin,t}. \label{boundqabs}
\end{align}
Note that we are assuming adversarial sign patterns and as a result we
are unable to exploit cancellations in the coefficient vectors
$\alpha$ and $\beta$. To control $|K(t)|$ and $|K'(t)|$ between $\tC
\, \lambda_c$ and $\tA \, \lambda_c$, we use series expansions around
the origin which give
\begin{equation}
  \label{KboundUp}
\begin{aligned} 
  K\brac{t} & \leq 1 - \frac{\pi^2 \fc \brac{\fc+4} t^2}{6} +
  \frac{\pi^4\brac{\fc+2}^4 t^4}{72} \\
  \abs{K'\brac{t}} & \leq \frac{\pi^2 \fc\brac{\fc+4}t}{3}, 
\end{aligned} 
\end{equation}
for all $t \in [-1/2,1/2]$. Put 
\[
L_{1}\brac{t} = \alpha^{\infty} \Bigl[ 1 - \frac{\pi^2 \fc
  \brac{\fc+4} t^2}{6} + \frac{\pi^4\brac{\fc+2}^4 t^4}{72}\Bigr] +
\beta^{\infty} \frac{\pi^2 \fc\brac{\fc+4}t}{3},
\]
with derivative equal to
\[
L_{1}'\brac{t} = -\alpha^{\infty} \Bigl[\frac{\pi^2 \fc \brac{\fc+4}
  t}{3} - \frac{\pi^4\brac{\fc+2}^4 t^3}{18}\Bigr] + \beta^{\infty}
\frac{\pi^2 \fc\brac{\fc+4}}{3} \text{.}
\]
This derivative is strictly negative between $\tC \, \lambda_c$ and
$\tA \, \lambda_c$, which implies that $L_{1}\brac{t}$ is decreasing
in this interval. Put
\[
 L_{2}\brac{t}  = \alpha^{\infty} F_{0}\brac{\Deltamin,t} + \beta^{\infty}  F_{1}\brac{\Deltamin,t}.
\]  
By Lemma \ref{lemma:bound_sum_kernel}, this function is
increasing. With \eqref{boundqabs}, this gives the crude bound
\begin{equation}
\abs{q\brac{t}} \leq L_{1}\brac{t}+L_{2}\brac{t} \le
L_{1}\brac{t_1}+L_{2}\brac{t_2} \quad \text{for all } t \in [t_1,
t_2].
\label{t1t2_bound1}
\end{equation}
Table \ref{table:L_bounds} shows that taking $\keys{t_1,t_2} = \keys{
  \tC \, \lambda_c,\tx \, \lambda_c}$ {and then} $\keys{t_1,t_2}
=\keys{ \tx \, \lambda_c,\tA \, \lambda_c}$ proves that $\abs{q(t)} <
1$ on $[\tC \lambda_c, \tA \lambda_c]$.  For $\tA \lambda_c \leq t
\leq \Delta/2$, we apply Lemma \ref{lemma:fejersq_bounds} and obtain
\begin{align*}
  |q(t)| &\leq \alpha^{\infty}\Bigl[ B_0\brac{t} + B_0\brac{\Delta-t}+ \sum_{j=1}^{\infty} B_0\brac{j\Deltamin+\Delta-t}+\sum_{j=1}^{\infty} B_0\brac{j\Deltamin+t}\Bigr] \\
  & \qquad \qquad +\beta^{\infty}\Bigl[B_1\brac{t} + B_1\brac{\Delta-t}+ \sum_{j=1}^{\infty} B_1\brac{j\Deltamin+\Delta-t}+\sum_{j=1}^{\infty} B_1\brac{j\Deltamin+t}\Bigr] \\
  &\leq \alpha^{\infty}\Bigl[B_0\brac{\tA \lambda_c}+ \sum_{j=1}^{\infty} B_0\brac{j\Deltamin-\tA \lambda_c}+ \sum_{j=1}^{\infty} B_0\brac{j\Deltamin+\tA \lambda_c}\Bigr]  \\
  &\qquad \qquad + \beta^{\infty}\Bigl[B_1\brac{\tA \lambda_c} +
  \sum_{j=1}^{\infty} B_1\brac{j\Deltamin-\tA \lambda_c}+
  \sum_{j=1}^{\infty} B_1\brac{j\Deltamin+\tA \lambda_c}\Bigr]  \\
  & \leq 0.758;
  \end{align*}
  here, the second step follows from the monotonicity of $B_0$ and
  $B_1$. Finally, for $\Delta/2 \le t \le t_+/2$, this last inequality
  applies as well.  This completes the proof.
\begin{table}[htbp]%\vspace{10pt}
\begin{center}
\begin{tabular}{| c | c | c | c |}
	\hline
	$t_1/\lambda_c$ & $t_2/\lambda_c$ & $L_{1}\brac{t_1}$ & $L_{2}\brac{t_2}$\\
% & Bound on $\abs{q\brac{t}}$ \\
	\hline
	$\tC $ & $ \tx$ & $0.9818$ &  $1.812 \, 10^{-2}$\\
% & $0.9999$ \\
	\hline
	$\tx $ & $ \tA$ & $0.7929$ &  $0.2068$  \\
% & $0.9997$ \\
	\hline
	%$\tA $ & $ \ty$ & $-$  & $0.6995$& $0.3002$ & $0.9978$ \\
	%\hline
	%$\ty $ & $ \Deltamin/2$ & $-$ & $0.2357$ & $0.3053$ & $0.5410$ \\
	%\hline
%	$\tA \, \lambda_c$ & $ 6.066\; 10^{-2}$ & $2.023\, \fc$ & $4.880\, \fc^2$ & $ 97.08\, \fc^3$ \\
%	\hline
\end{tabular}
\vspace{10pt}
\caption{Numerical quantities used in \eqref{t1t2_bound1}.}
  \label{table:L_bounds}
\end{center}
\end{table}
\subsection{Proof of Lemma \ref{lemma:qbound_strict}}

Replacing $\Delta=1.98\, \lambda_c$ by $\Delta=2.5\, \lambda_c$ and
going through exactly the same calculations as in Sections
\ref{subsec:fejersq_coeffs} and \ref{subsec:concavity} yields that for
any $t$ obeying $0 \le \abs{t - \tau} \le \tC \, \lambda_c$,
\begin{equation*}
\derTwo{t}{\abs{q}}\brac{t} \leq -0.6706 \, \fc^2.
\end{equation*}
For reference, we have computed in Table \ref{table:F_2p5} numerical
upper bounds on $F_{\ell}(2.5 \lambda_c,t)$ at $t = \{0, \tC \,
\lambda\}$. Since $|q(0)| = 1$ and $q'(0) = 0$, it follows that
\begin{equation}
  \abs{q\brac{t}}  \leq \abs{q\brac{0}} - \frac12 \, {0.6706 \,\fc^2} t^2. 
\label{q_strong}
\end{equation}
At a distance of $\tC \, \lambda_c$, the right-hand side is equal to
$0.9909$. The calculations in Section \ref{subsec:boundq} with
$\Delta=2.5\, \lambda_c$ imply that the magnitude of $q(t)$ at
locations at least $\tC \, \lambda_c$ away from an element of $T$ is
bounded by $0.9843$. This concludes the proof.
\begin{table}[htbp ]%\vspace{10pt}
\begin{center}
\begin{tabular}{| c | c | c | c | c |}
	\hline
	$t/\lambda_c$ & $F_{0}\brac{2.5 \lambda_c,t}$ & $F_{1}\brac{2.5 \lambda_c,t}$ & $F_{2}\brac{2.5 \lambda_c,t}$ & $F_{3}\brac{2.5 \lambda_c,t}$ \\
	\hline
	$0$ & $ 5.175\; 10^{-3}$ & $6.839\; 10^{-2} \fc$ & $0.8946\, \fc^2$ & $ 7.644\, \fc^3$ \\
	\hline
	$\tC$ & $ 5.182\; 10^{-3}$ & $6.849\; 10^{-2} \fc$ & $0.9459\, \fc^2$ & $ 7.647\, \fc^3$ \\
	\hline
%	$\tA \, \lambda_c$ & $ 6.066\; 10^{-2}$ & $2.023\, \fc$ & $4.880\, \fc^2$ & $ 97.08\, \fc^3$ \\
%	\hline
\end{tabular}
\caption{Numerical upper bounds on $F_{\ell}(2.5 \lambda_c, t)$.}
  \label{table:F_2p5}
\end{center}
\end{table}

\subsection{Improvement for real-valued signals}
\label{subsec:real}

The proof for real-valued signals is almost the same as the one we
have discussed for complex-valued signals---only simpler. The only
modification to Lemmas \ref{lemma:fejersq_coeffs} and
\ref{lemma:boundq} is that the minimum distance is reduced to $1.87\,
\lambda_c$, and that the bound in Lemma \ref{lemma:boundq} is shown to
hold starting at $0.17 \, \lambda_c$ instead of $\tC \,
\lambda_c$. For reference, we provide upper bounds on $F_{\ell}(1.87
\lambda_c, t)$ at $t \in \{0, 0.17 \, \lambda_c\}$ in Table
\ref{table:F_1p87}. As to Lemma \ref{lemma:concavity}, the only
difference is that to bound $\abs{q}$ between the origin and $0.17 \,
\lambda_c$, it is sufficient to show that the second derivative of $q$
is negative and make sure that $q>-1$. Computing \eqref{bound_qR2p}
for $\Delta=1.87 \lambda_c$ for $t \in [0, 0.17 \, \lambda_c]$, we
obtain $q''<-0.1181$. Finally, \eqref{bound_qR} yields $q>0.9113$ in
$[0, 0.17 \, \lambda_c]$.
\begin{table}[htbp ]%\vspace{10pt}
\begin{center}
\begin{tabular}{| c | c | c | c | c |}
	\hline
	$t/\lambda_c$ & $F_{0}\brac{1.87 \lambda_c,t}$ & $F_{1}\brac{1.87 \lambda_c,t}$ & $F_{2}\brac{1.87 \lambda_c,t}$ & $F_{3}\brac{1.87 \lambda_c,t}$ \\
	\hline
	$0$ & $ 6.708\; 10^{-3}$ & $7.978\; 10^{-2} \fc$ & $1.078\, \fc^2$ & $ 16.01\, \fc^3$ \\
	\hline
	$0.17$ & $ 6.747\; 10^{-3}$ & $0.1053 \fc$ & $1.081\, \fc^2$ & $ 41.74\, \fc^3$ \\
	\hline
%	$\tA \, \lambda_c$ & $ 6.066\; 10^{-2}$ & $2.023\, \fc$ & $4.880\, \fc^2$ & $ 97.08\, \fc^3$ \\
%	\hline
\end{tabular}
\caption{Numerical upper bounds on $F_{\ell}(1.87 \lambda_c, t)$.}
  \label{table:F_1p87}
\end{center}
\end{table}

\section{Stability}
\label{sec:stability}

This section proves Theorem \ref{theorem:discrete_noisy} and we begin
by establishing a strong form of the null-space property. In the
remainder of the paper $P_T$ is the orthogonal projector onto the
linear space of vectors supported on $T$, namely, $(P_T x)_i = x_i$ if
$i \in T$ and is zero otherwise.

% Theorem \ref{theorem:discrete} is a direct consequence of the
% following lemma. Indeed, \eqref{nullspace_prop} in Lemma
% \ref{lemma:nullspace_prop} implies that any other feasible solution
% $\tilde{x}\neq x$ must have a larger $\ell_1$ norm:
% \begin{align*}
% \normOne{\tilde{x}}& =\normOne{\PTd{\tilde{x}-x} +x}+\normOne{\PTdpn\tilde{x}}\\
% & \geq \normOne{x} - \normOne{\PTd{\tilde{x}-x}}+\normOne{\PTdp{\tilde{x}-x}}\\
% & > \normOne{x}\text{,}
% \end{align*}
% because $\tilde{x}-x$ belongs to the null space of the measurement operator.
\begin{lemma}
\label{lemma:nullspace_prop}
Under the assumptions of Theorem \ref{theorem:discrete_noisy}, any
vector $h$ such that $F_n h = 0$ obeys 
\begin{equation}
  \normOne{P_T h} \leq \rho \normOne{P_{T^c} h}, 
\label{nullspace_strict}
\end{equation}
for some numerical constant $\rho$ obeying $0 < \rho < 1$. This
constant is of the form $1 - \rho = \alpha/\text{\em SRF}^2$ for some
positive $\alpha > 0$. If $\srf \ge 3.03$, we can take $\alpha =
0.0883$.
\end{lemma}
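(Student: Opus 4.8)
The plan is to feed the sign pattern of $P_T h$ into the dual-certificate construction of Section~\ref{sec:dual_pol} and to exploit the fact that a low-frequency trigonometric polynomial, once sampled on the fine grid, is orthogonal to every element of $\ker F_n$. Fix $h$ with $F_n h = 0$ and set $v \in \C^{|T|}$ with $v_j = \sign{h_{t_j}}$ when $h_{t_j} \neq 0$ and $v_j = 1$ otherwise, so that $|v_j| = 1$ and $\overline{v_j}\,h_{t_j} = |h_{t_j}|$. By Proposition~\ref{prop:continuous_dualcert} there is a polynomial $q$ as in \eqref{eq:trig} with $q(t_j/N) = v_j$ on $T$ and $|q| < 1$ off $T$; moreover, the separation assumption lets us invoke the quantitative estimate of Lemma~\ref{lemma:qbound_strict}, namely $|q(t)| \le 1 - 0.3353\,\fc^2 (t-\tau)^2$ whenever $t$ lies within $\tC\,\lambda_c$ of a spike $\tau \in T$, and $|q(t)|$ no larger than the value of this bound at $\tC\,\lambda_c$ for every other $t$.

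Let $u \in \C^N$ be the sampled polynomial, $u_t = q(t/N)$. Since the Fourier support of $q$ is contained in $\{|k| \le \fc\}$, we have $u \in \range(F_n^{\ast})$, and because $h \in \ker F_n = \range(F_n^{\ast})^{\perp}$ it follows that $\langle u, h \rangle = 0$. Splitting over $T$ and $T^c$ and using $\overline{q(t_j/N)}\,h_{t_j} = |h_{t_j}|$ gives
\[
0 \;=\; \sum_t \overline{u_t}\,h_t \;=\; \normOne{P_T h} + \sum_{t \notin T} \overline{q(t/N)}\,h_t ,
\]
hence
\[
\normOne{P_T h} \;=\; \Bigl| \sum_{t \notin T} \overline{q(t/N)}\,h_t \Bigr| \;\le\; \sum_{t \notin T} |q(t/N)|\,|h_t| .
\]
It is therefore enough to prove the pointwise bound $|q(t/N)| \le \rho := 1 - \alpha/\srf^2$ at every off-support grid point, for then $\normOne{P_T h} \le \rho\,\normOne{P_{T^c} h}$, which is \eqref{nullspace_strict}.

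To establish this uniform bound I would separate the off-support grid points by their distance to $T$. Because $T$ consists of grid points and $\Delta(T) \ge \optvalue\,\lambda_c \gg 1/N$, each $t \notin T$ has a unique nearest element $\tau \in T$, at distance $d/N$ with $d \ge 1$ an integer. If $d/N \le \tC\,\lambda_c$, Lemma~\ref{lemma:qbound_strict} yields $|q(t/N)| \le 1 - 0.3353\,\fc^2 (d/N)^2$; since $1/N \approx \lambda_c/(2\,\srf)$, this is at most $1 - 0.3353/(4\,\srf^2)$, the worst case being the nearest off-support point $d = 1$. If instead $d/N > \tC\,\lambda_c$, the ``further'' clause of Lemma~\ref{lemma:qbound_strict} bounds $|q(t/N)|$ by the constant $1 - 0.3353\,\fc^2 (\tC\,\lambda_c)^2 \approx 0.991$, and this fixed gap beats $\alpha/\srf^2$ precisely when $\srf$ is not too small --- this is where the hypothesis $\srf \ge 3.03$ enters, and it simultaneously guarantees $1/N \le \tC\,\lambda_c$ so that the two regimes actually overlap. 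Taking $\alpha$ to be the smaller of the two constants produced this way gives the stated value $\alpha = 0.0883$.

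The delicate point is precisely this last one: turning the two separate estimates --- one scaling like $\srf^{-2}$ near the support, one a fixed gap far from it --- into a single inequality $|q(t/N)| \le 1 - \alpha/\srf^2$ valid at \emph{every} off-support grid point, and checking that no such point escapes the reach of Lemma~\ref{lemma:qbound_strict}. By contrast, the orthogonality identity $\langle u, h\rangle = 0$ and the recovery of $\normOne{P_T h}$ on the support are immediate once the certificate $q$ is in hand, and the separation condition is used only through Lemma~\ref{lemma:qbound_strict}.
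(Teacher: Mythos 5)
Your proposal reproduces the paper's argument essentially step for step: take $v$ to be the sign pattern of $P_T h$, pull the dual certificate $q$ from Proposition~\ref{prop:continuous_dualcert}, sample it on the fine grid to get a vector in $\range(F_n^*)$ orthogonal to $h$, split over $T$ and $T^c$, and invoke Lemma~\ref{lemma:qbound_strict} to get the uniform off-support bound $|q(t/N)| \le 1 - \alpha/\srf^2$. The only slip is numerical and inherited from the source: your computation $0.3353/4 \approx 0.0838$ does not in fact yield $0.0883$ --- the paper obtains $0.0883$ from the constant $0.3533$ written in its own proof of this lemma, a transposition of the $0.3353$ stated in Lemma~\ref{lemma:qbound_strict}.
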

\begin{proof}
  Let $P_T h_t =\abs{P_T h_t} e^{i\phi_t}$ be the polar decomposition of $P_T h$,
  and consider the low-frequency polynomial $q(t)$ in Proposition
  \ref{prop:continuous_dualcert} interpolating $v_t = e^{-i\phi_t}$. We
  shall abuse notations and set $q = \{q_t\}_{t = 0}^{N-1}$ where $q_t
  = q(t/N)$.  For $t \notin T$, $|q(t/N)| = |q_t| \le \rho < 1$.  By
  construction $q = P_n q$, and thus $\<q, h\> = \<q, P_n h\> = 0$.
  Also,
\[
\<P_T q, P_T h\> = \|P_T h\|_1. 
\]
The conclusion follows from
\[
0 = \<q,h\> = \<P_T q, P_T h\> + \<P_{T^c} q, P_{T^c} h\> \ge \|P_T
h\|_1 - \|P_{T^c} q\|_\infty \|P_{T^c} h\|_1 \ge \|P_T
h\|_1 - \rho \|P_{T^c} h\|_1. 
\]

For the numerical constant, we use Lemma \ref{lemma:qbound_strict}
which says that if $0 \in T$ and $1/N \le \tC \lambda_c$, which is
about the same as $1/\srf \approx 2f_c/N \le 2 \times \tC$ or $\srf >
3.03$, we have
\[
|q(1/N)| \le 1 - 0.3533 \, (f_c/N)^2 \approx 1 - 0.0883/\srf^2 = \rho. 
\]
This applies directly to any other $t$ such that $\min_{\tau \in T}
\abs{t - \tau} = 1$. Also, for all $t$ at distance at least 2 from
$T$, Lemma \ref{lemma:qbound_strict} implies that $|q(t/N)| \le
\rho$. This completes the proof.
\end{proof}

\subsection{Proof of Theorem \ref{theorem:discrete_noisy}}

The proof is a fairly simple consequence of Lemma
\ref{lemma:nullspace_prop}. Set $h = \hat{x} - x$ and decompose the
error into its low- and high-pass components
\[
h_L = P_n h, \quad h_H = h - h_L. 
\]
The high-frequency part is in the null space of $P_n$ and
\eqref{nullspace_strict} gives
\begin{equation}
  \normOne{P_T h_H} \leq \rho \normOne{P_{T^c} h_H}.
  \label{noiseineq1_l1}
\end{equation}
For the low-frequency component we have
\begin{equation}
  \label{eq:tube}
  \|h_L\|_1 = 
  \|P_n(\hat x - x)\|_1 \le \|P_n \hat x - s\|_1 + \|s - P_n x\|_1 \le 2\delta.
\end{equation}
To bound $\|P_{T^c} h_H\|_1$, we exploit the fact that $\hat{x}$ has
minimum $\ell_1$ norm. We have
\begin{align*}
  \normOne{x} \geq \normOne{x + h} & \ge \normOne{x + h_H}
  - \|h_L\|_1 \\
  & \ge \|x\|_1 - \normOne{P_T h_H} + \normOne{P_{T^c} h_H} - \|h_L\|_1\\
  & \ge \|x\|_1 + (1-\rho) \normOne{P_{T^c} h_H} - \|h_L\|_1,
\end{align*}
where the last inequality follows from \eqref{noiseineq1_l1}. Hence, 
\[
\normOne{P_{T^c} h_H} \le \frac{1}{1-\rho} \|h_L\|_1 \quad \Rightarrow
\quad \|h_H\|_1 \le \frac{1+\rho}{1-\rho} \, \|h_L\|_1. 
\]
To conclude, 
\[
\|h\|_1 \le \|h_L\|_1 + \|h_H\|_1 \le \frac{2}{1-\rho} \|h_L\|_1 \le
\frac{4\delta}{1-\rho}, 
\]
where the last inequality follows from \eqref{eq:tube}.

Since from Lemma \ref{lemma:nullspace_prop}, we have $1-\rho =
\alpha/\srf^2$ for some numerical constant $\alpha$, the upper bound
is of the form $4\alpha^{-1} \, \srf^2 \, \delta$. For $\Delta(T) \ge
2.5\lambda_c$, we have $\alpha^{-1} \approx 11.235$.

\subsection{Sparsity is not enough}
\label{sec:slepian}

Consider the vector space $\C^{48}$ of sparse signals of length
$N=4096$ supported on a certain interval of length 48. Figure
\ref{fig:singval} shows the eigenvalues of the low-pass filter $P_n =
\frac{1}{N} F_n F_n^*$ acting on $\C^{48}$ for different values of the
super-resolution factor. For $\srf=4$, there exists a subspace of
dimension 24 such that any unit-normed signal ($\|x\|_2 = 1$)
belonging to it obeys
\[
\|P_n x\|_2 \le 2.52 \; 10^{-15} \quad \Leftrightarrow \quad
\frac{1}{\sqrt{N}} \|F_n x\|_2 \le 5.02 \; 10^{-8}.
\]
For $\srf=16$ this is true of a subspace of dimension 36, two thirds
of the total dimension. Such signals can be completely canceled out by
perturbations of norm $5.02 \; 10^{-8}$, so that even at
signal-to-noise ratios (SNR) of more than 145 dB, recovery is
impossible \textit{by any method} whatsoever.
\begin{figure}[ht]
\centering
\subfigure[]{
\includegraphics[width=4.5cm]{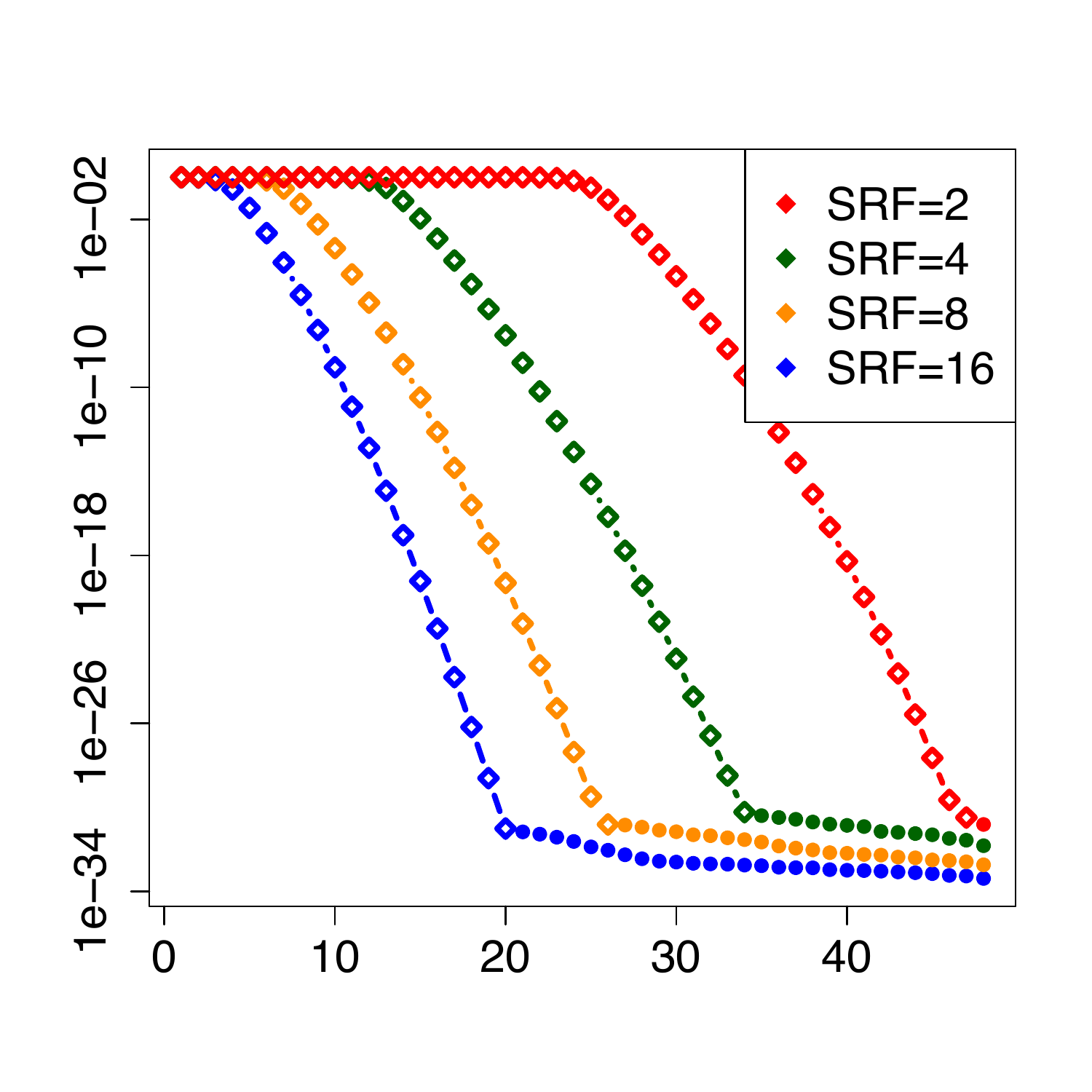}
\label{fig:singval_a}
}
\subfigure[]{
\includegraphics[width=4.5cm]{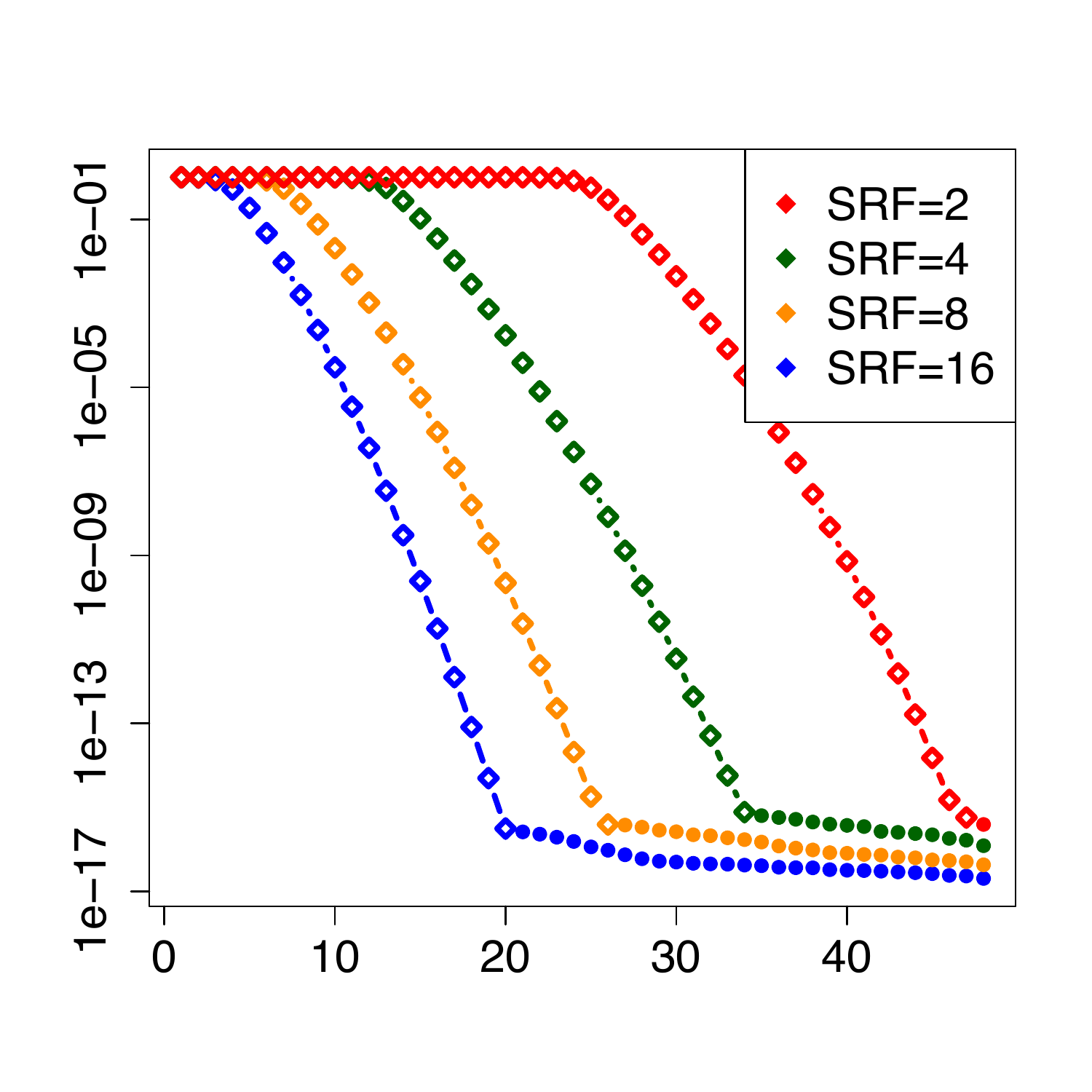}
\label{fig:singval_b}
}
\subfigure[]{
\includegraphics[width=4.5cm]{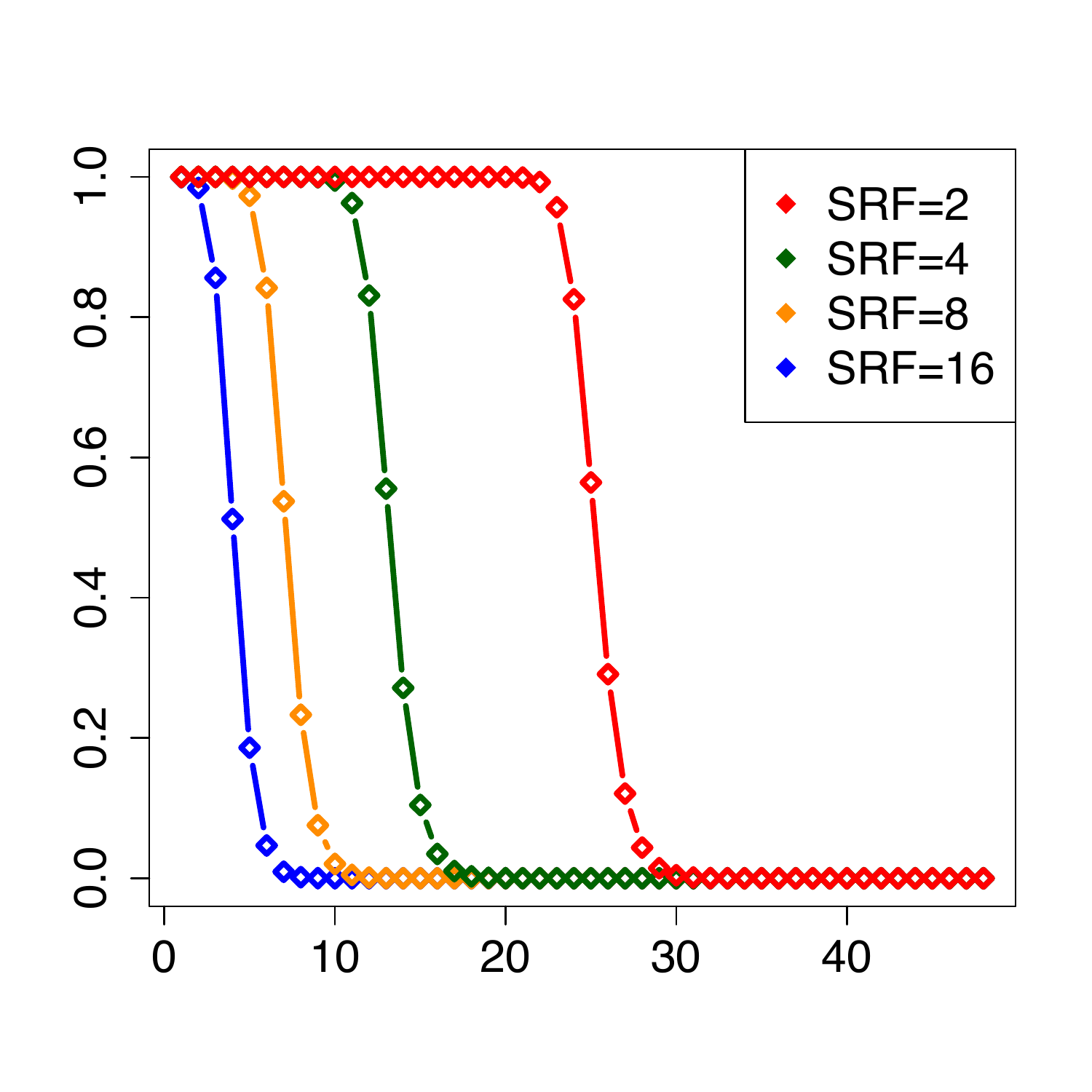}
\label{fig:singval_c}
}\caption{(a) Eigenvalues of $P_n$ acting on signals supported on a
  contiguous interval of length $48$ for super-resolution factors of
  2, 4, 8 and 16 and a signal length of $4096$. (b) Singular values of
  $\frac{1}{\sqrt{N}} F_n$ on a logarithmic scale.  (c) Same as (b)
  but on a linear scale.  Due to limited numerical precision (machine
  precision), the smallest singular values, marked with circular dots
  on the graphs, are of course not computed accurately.}
\label{fig:singval}
\end{figure}

Interestingly, the sharp transition shown in Figure \ref{fig:singval}
between the first singular values almost equal to one and the others,
which rapidly decay to zero, can be characterized asymptotically by
using the work of Slepian on prolate spheroidal sequences
\cite{slepian_discrete}. Introduce the operator $\mathcal{T}_k$, which
sets the value of an infinite sequence to zero on the complement of an
interval $T$ of length $k$.  With the notation of Section
\ref{sec:slepian_intro}, the eigenvectors of the operator $\PWn
\mathcal{T}_k$ are the discrete prolate spheroidal sequences
$\{s_j\}_{j = 1}^k$ introduced in \cite{slepian_discrete},
\begin{equation}
\label{eig_eq}
\PWn \mathcal{T}_k s_j = \lambda_j s_j,
\quad 1 > \lambda_1 \ge \lambda_2 \ge \ldots \ge \lambda_k > 0. 
\end{equation}
Set $v_j=\mathcal{T}_k s_j/ \sqrt{\lambda_j}$, then by \eqref{eig_eq},
it is not hard to see that
\[
\mathcal{T}_k \PWn v_j = \lambda_j v_j, \qquad \normTwo{v_j} = 1.
\]
In fact, the $v_j$'s are also orthogonal to each other
\cite{slepian_discrete}, and so they form an orthonormal basis of $\C^k$
(which can represent any sparse vector supported on $T$). For values
of $j$ near $k$, the value of $\lambda_j$ is about 
\[
A_j e^{-\gamma \brac{k+1}}, \quad \quad A_j =
\frac{\sqrt{\pi}2^{\frac{14\brac{k-j}+9}{4}}\alpha^{\frac{2\brac{k-j}+1}{4}}\brac{k+1}^{k-j+0.5}}{\brac{k-j}!\brac{2-\alpha}^{k-j+0.5}},
\]
where
\[
\alpha = 1 + \cos{2\pi W}, \quad \gamma = \log \brac{1 + \frac{2
    \sqrt{\alpha}}{\sqrt{2} - \sqrt{\alpha}}}.
\]
Therefore, for a fixed value of $\srf = 1/2W$, and $k \ge 20$, the
small eigenvalues are equal to zero for all practical purposes. In
particular, for $\srf=4$ and $\srf=1.05$ we obtain
\eqref{eq:eigenvalue_slepian} and \eqref{eq:eigenvalue_slepian_1p05}
in Section \ref{sec:slepian_intro} respectively. Additionally, a
Taylor series expansion of $\gamma$ for large values of $\srf$ yields
\eqref{eq:slepian_largeSRF}. %, which fixing $k=48$ and $j=48$ yields \eqref{eq:ten_minus_60}.
%Similarly, we obtain
%\eqref{eq:slepian_largeSRF} by a Taylor series expansion of $\gamma$
%for large values of $\srf$.

Since $\|\PWn v_j\|_{L_2} = \sqrt{\lambda_j}$, the bound on
$\lambda_j$ for $j$ near $k$ directly implies that some sparse signals
are essentially zeroed out, even for small super-resolution
factors. However, Figure \ref{fig:singval} suggests an even stronger
statement: as the super-resolution factor increases not only
\textit{some}, but \textit{most} signals supported on $T$ seem to be
almost completely suppressed by the low pass filtering. Slepian
provides an asymptotic characterization for this phenomenon. Indeed,
just about the first $2k W$ eigenvalues of $\PWn \mathcal{T}_k$
cluster near one, whereas the rest decay abruptly towards zero. To be
concrete, for any $\epsilon>0$ and $j \geq 2 k W \brac{1+\epsilon}$,
there exist positive constants $C_0$, and $\gamma_0$ (depending on
$\epsilon$ and $W$) such that
\[
\lambda_j \leq C_0 e^{-\gamma_0 k}.
\]
This holds for all $k \ge k_0$, where $k_0$ is some fixed integer.
This implies that for any interval $T$ of length $k$, there exists a
subspace of signals supported on $T$ with dimension asymptotically
equal to $\brac{1-1/\srf}k$, which is obliterated by the measurement
process. This has two interesting consequences. First, even if the
super-resolution factor is just barely above one, asymptotically there
will always exist an irretrievable vector supported on $T$. Second, if
the super-resolution factor is two or more, most of the information
encoded in clustered sparse signals is lost.
%As a result, most of the information is irretrievable. 
Consider for instance a random sparse vector $x$
supported on $T$ with i.i.d.~entries. Its projection onto a fixed
subspace of dimension about $ \brac{1-1/\srf} k $ (corresponding to
the negligible eigenvalues) contains most of the energy of the signal
with high probability. However, this component is practically destroyed by
low-pass filtering. Hence, super-resolving almost any tightly
clustered sparse signal in the presence of noise is hopeless.  This
justifies the need for a minimum separation between nonzero
components.

\section{Minimization via semidefinite programming}
\label{sec:sdp}
At first sight, finding the solution to the total-variation norm
problem \eqref{TVnormMin} might seem quite challenging, as it requires
solving an optimization problem over an infinite dimensional space. It
is of course possible to approximate the solution by discretizing the
support of the signal, but this could lead to an increase in
complexity if the discretization step is reduced to improve
precision. Another possibility is to try approximating the solution by
estimating the support of the signal in an iterative fashion
\cite{radon_measures}.  Here, we take a different route and show that
\eqref{TVnormMin} can be cast as a semidefinite program with just
$\brac{n+1}^2/2$ variables, and that highly accurate solutions can be
found rather easily.  This formulation is similar to that in
\cite{atomic_norm_denoising} which concerns a related infinite
dimensional convex program. Our exposition is less formal here than in the rest of the paper.%here less formal than the rest of the paper.

The convex program dual to \eqref{TVnormMin} is
\begin{align}
\label{TVnormMin_dual}
\max_{c} \; \operatorname{Re} \<y, c\>  \quad \text{subject to}
\quad \normInf{\mathcal{F}_{n}^{\ast} \, c} \leq 1;
\end{align}
the constraint says that the trigonometric polynomial
$(\mathcal{F}_{n}^{\ast} \, c)(t) = \sum_{|k| \le \fc} c_k e^{i2\pi k
  t}$ has a modulus uniformly bounded by $1$ over the interval
$[0,1]$. The interior of the feasible set contains the origin and is
consequently non empty, so that strong duality holds by a generalized
Slater condition \cite{rockafellar1974conjugate}.  The cost function
involves a finite vector of dimension $n$, but the problem is still
infinite dimensional due to the constraints. A corollary to Theorem
4.24 in \cite{dumitrescu} allows to express this constraint as the
intersection between the cone of positive semidefinite matrices $\{X :
X \succeq 0\}$ and an affine hyperplane.
\begin{corollary}
\label{cor:sdp-charact}
A causal trigonometric polynomial $\sum_{k=0}^{n-1} c_k e^{i 2 \pi k
  t} $ with $c \in \C^{n}$ is bounded by one in magnitude if and only
if there exists a Hermitian matrix $Q \in \C^{n \times n}$ obeying
 \begin{equation}
\label{eq:sdp-charact}
   \MAT{Q & c \\ c^{\ast} & 1} \succeq 0, \qquad \sum_{i=1}^{n-j}Q_{i,i+j} = \begin{cases} 1, & j = 0,\\
     0, & j = 1, 2, \ldots, n-1. 
   \end{cases} 
\end{equation}
\end{corollary}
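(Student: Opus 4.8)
The plan is to reduce the uniform bound $\abs{p(t)}\le 1$ to the nonnegativity on the circle of a single trigonometric polynomial of degree $n-1$, namely $g(t):=1-\abs{p(t)}^2$, and then to quote the Gram-matrix (trace) parametrization of nonnegative trigonometric polynomials, which is the univariate specialization of Theorem~4.24 in \cite{dumitrescu}: a trigonometric polynomial $\sum_{\abs{k}\le n-1} g_k e^{i2\pi kt}$ with $g_{-k}=\overline{g_k}$ is nonnegative on $\mathbb{T}$ if and only if its coefficients can be written as the diagonal sums $g_k = \sum_i R_{i,i+k}$ of some positive semidefinite $R\in\C^{n\times n}$. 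Write $p(t)=\sum_{k=0}^{n-1}c_k e^{i2\pi kt}$ and let $a(t)=(1,e^{-i2\pi t},\dots,e^{-i2\pi(n-1)t})^{\transp}\in\C^n$, so that $c^{\adj}a(t)=\overline{p(t)}$. Since the bottom-right block of the matrix in \eqref{eq:sdp-charact} equals $1>0$, a Schur-complement argument shows that this matrix is positive semidefinite exactly when $Q\succeq cc^{\adj}$; writing $Q=cc^{\adj}+R$ with $R\succeq 0$, the affine constraints on $Q$ turn into affine constraints on the diagonal sums of $R$ alone.

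For the direction ``$Q$ exists $\Rightarrow \abs{p}\le 1$'', the key observation is that the quadratic form $a(t)^{\adj}Qa(t)$ is identically equal to $1$: grouping terms by the difference of indices, its value is $\sum_k \big(\sum_i Q_{i,i+k}\big)e^{i2\pi kt}$ together with the conjugate sub-diagonal terms, and the constraints $\sum_i Q_{i,i+k}=\delta_{k0}$ (plus Hermitian symmetry of $Q$) kill all but the constant term $\tr(Q)=1$. Hence $0\le a(t)^{\adj}Ra(t)=a(t)^{\adj}Qa(t)-\abs{c^{\adj}a(t)}^2 = 1-\abs{p(t)}^2$ for every $t$, which is the desired bound.

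For the converse, assume $\abs{p(t)}\le 1$ on $\mathbb{T}$. Then $g(t)=1-\abs{p(t)}^2\ge 0$, its Fourier support lies in $\{-(n-1),\dots,n-1\}$, and a direct expansion gives $g_k=\delta_{k0}-\sum_i c_{i+k}\overline{c_i}$ for $k\ge 0$. By the quoted parametrization there is a positive semidefinite $R\in\C^{n\times n}$ with $\sum_i R_{i,i+k}=g_{-k}$ for $0\le k\le n-1$. Setting $Q:=cc^{\adj}+R$, which is Hermitian and satisfies $Q\succeq cc^{\adj}$, one computes $\sum_i Q_{i,i+k}=\sum_i(cc^{\adj})_{i,i+k}+\sum_i R_{i,i+k}=\overline{\big(\sum_i c_{i+k}\overline{c_i}\big)}+g_{-k}=\delta_{k0}$, so $(Q,c)$ is feasible for \eqref{eq:sdp-charact}.

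The only substantive ingredient is the representation theorem for nonnegative trigonometric polynomials, which is cited rather than reproved, so the real work is the Schur-complement reduction and the index bookkeeping --- keeping straight which off-diagonal of $Q$ matches which Fourier frequency of $1-\abs{p}^2$, and where Hermitian conjugation enters --- together with the observation that $1-\abs{p}^2$ has degree exactly $n-1$, so that an $n\times n$ Gram matrix is the correctly sized object. I expect that bookkeeping, rather than any conceptual difficulty, to be the only place one can slip.
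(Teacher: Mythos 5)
Your proof is correct and follows essentially the same route as the paper: the easy direction is the same Schur-complement/quadratic-form evaluation along the Fourier atoms $a(t)$ (the paper's $z_k=e^{i2\pi kt}$), and the converse is what the paper subsumes under "a corollary to Theorem~4.24 in \cite{dumitrescu}," which you make explicit via the Gram-matrix parametrization of the nonnegative polynomial $1-\abs{p}^2$. The index bookkeeping and conjugations check out, so this is a correct and somewhat more detailed rendering of the argument the paper sketches.
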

%To see why this formulation is equivalent to the inequality constraint on the magnitude of the trigonometric polynomial $\sum_{k=0}^{n-1} c_k e^{i 2 \pi k t} $
To see one direction, note that the positive semidefiniteness constraint
is equivalent to $Q-cc^{\ast} \succeq 0$. Hence, $z^{\ast} Q z \geq
\abs{c^{\ast}z}^2$ for all $z \in \C^n$. Fix $t \in [0,1]$ and set
$z_k = e^{i 2 \pi kt}$. The equality constraints in
\eqref{eq:sdp-charact} give $z^{\ast} Q z=1$ and $\abs{c^{\ast}z}^2 =
|(\mathcal{F}_n^* c)(t)|^2$ so we obtain the desired inequality constraint on the magnitude of the trigonometric polynomial $\sum_{k=0}^{n-1} c_k e^{i 2 \pi k t}$. %the conclusion follows. % \ejc{I don't think we need what's below.} \cfg{The equality constraints in \eqref{eq:sdp-charact} imply that $Q$ corresponds to a trace parametrization of one (see Theorem 2.3 in \cite{dumitrescu}), so that $z_0^{\ast} Q z_0=1$ and $\abs{c^{\ast}z_0}=\abs{\sum_k c_k e^{i 2 \pi kt_0} }\leq 1$ for any value of $t_0$.}

Returning to \eqref{TVnormMin_dual}, the polynomial $e^{i2\pi f_c t}
\, (\mathcal{F}_{n}^{\ast} \, c)(t)$ is causal and has the same
magnitude as $(\mathcal{F}_{n}^{\ast} \, c)(t)$. Hence, the dual
problem is equivalent to 
\begin{align}
\label{TVnormMin_sdp}
\max_{c,Q} \; \operatorname{Re} \<y, c \> \quad
\text{subject to} \quad \text{\eqref{eq:sdp-charact}}.
%  & \MAT{Q & q \\ q^{\ast} & 1} \succeq 0, \notag\\
% & \sum_{l=1}^{n-j}Q_{l,l+j} = \delta_j \quad 0 \leq j \leq n-1,
\end{align}
To be complete, the decision variables are an Hermitian matrix $Q \in
\C^{n\times n}$ and a vector of coefficients $c \in \C^n$.  The finite
dimensional semidefinite program can be solved by off-the-shelf convex
programming software.

\begin{figure}
\centering
 \includegraphics[width=10cm]{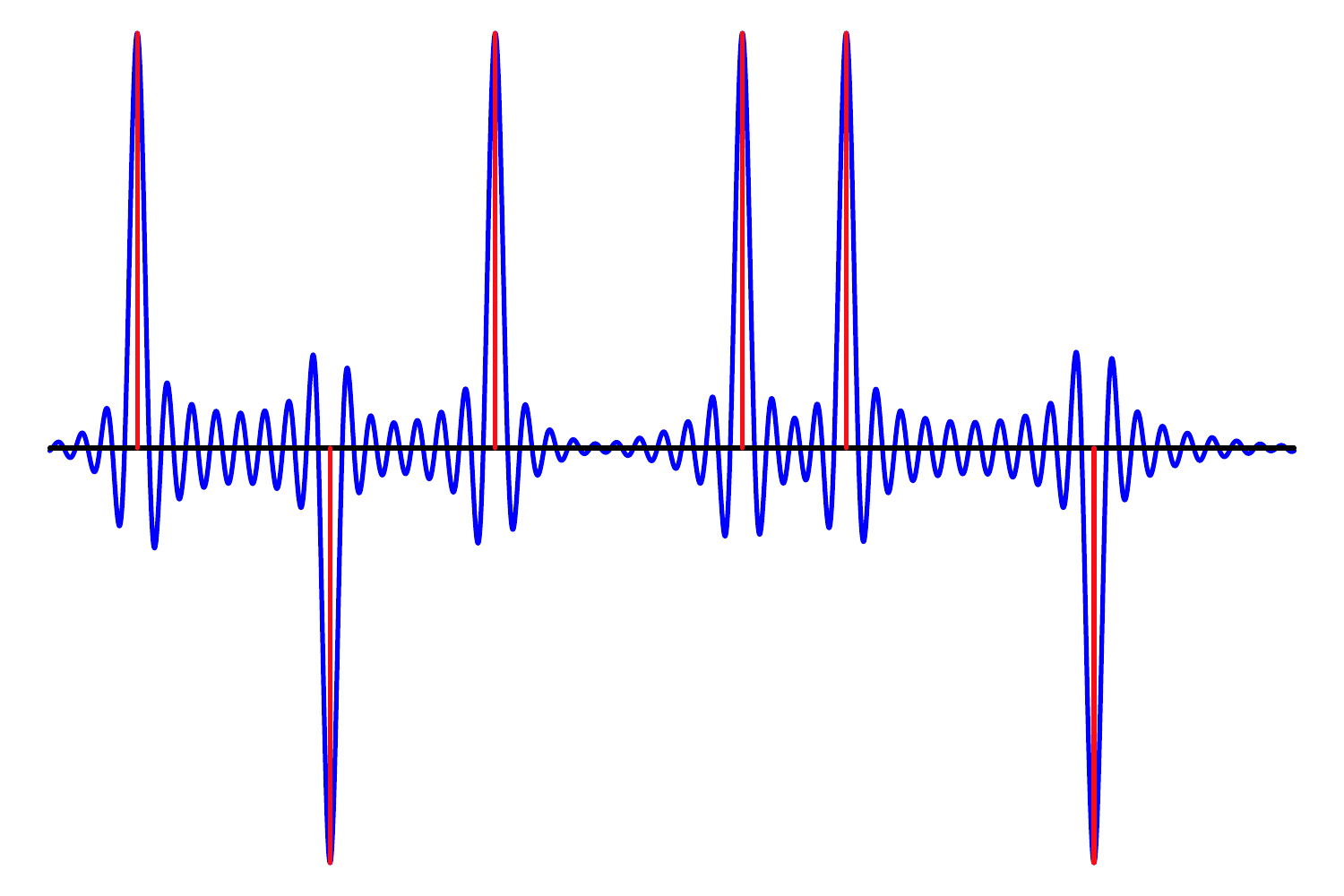}
 \caption{The sign of a real atomic measure $x$ is plotted in red. The
   trigonometric polynomial $\mathcal{F}_{n}^{\ast} c$ where $c$ is a
   solution to the dual problem \eqref{TVnormMin_sdp} is plotted in
   blue.  Note that $\mathcal{F}_{n}^{\ast} c$ interpolates the sign
   of $x$. Here, $\fc=50$ so that we have $n = 101$ low-frequency
   coefficients.} 
\label{fig:dual_pol_sdp} 
\end{figure}
The careful reader will observe that we have just shown how to compute
the optimal value of \eqref{TVnormMin}, but not how we could obtain a
solution. %Abusing notation, let $q$ be the solution to \eqref{TVnormMin_sdp}.
To find a primal solution, we abuse notation by letting $c$ be the
solution to \eqref{TVnormMin_sdp} and consider the trigonometric
polynomial
\begin{align}
  p_{2n-2}(e^{i2\pi t}) = 1-\abs{(\mathcal{F}_{n}^{\ast}c)(t)}^2 =
  1-\sum_{k=-2\fc}^{2\fc} u_k e^{i2\pi k t}, \qquad u_k = \sum_j
  c_{j}\bar{c}_{j-k}. \label{polynomial_P}
\end{align}
Note that $z^{2\fc}p_{2n-2}(z)$, where $z \in \C$, is a polynomial of
degree $4\fc = 2(n-1)$ with the same roots as $p_{2n-2}(z)$---besides
the trivial root $z = 0$. Hence, $p_{2n-2}(e^{i2\pi t})$ has at most
$2n-2$ roots. By construction $p_{2n-2}(e^{i2\pi t})$ is a real-valued
and nonnegative trigonometric polynomial; in particular, it cannot
have single roots on the unit circle since the existence of single
roots would imply that $p_{2n-2}(e^{i2\pi t})$ takes on negative
values. Therefore, $p_{2n-2}(e^{i2\pi t})$ is either equal to zero
everywhere or has at most $n-1$ roots on the unit circle.
% \begin{align}
%   P(t) = 1-\abs{(\mathcal{F}_{n}^{\ast}c)(t)}^2 =
%   1-\sum_{k=-2\fc}^{2\fc} u_k e^{i2\pi k t}, \qquad u_k = \sum_j
%   c_{j}\bar{c}_{j-k}. \label{polynomial_P}
% \end{align}
% By construction $P(t)$ is a real-valued and nonnegative trigonometric
% polynomial; in particular, it cannot have single roots since the
% existence of single roots would imply that $P(t)$ take on negative
% values. \ejc{What is degree of $P$? How many roots can it have?} 
% Therefore, $P(t)$ is either equal to zero everywhere or has at most
% $n$ roots on the unit
% circle. %In the latter case, the complementary slackness condition of problem \eqref{TVnormMin} and its dual implies that the support of the solution to \eqref{TVnormMin} is a subset of $\hat{T}$.
By strong duality, any solution $\hat{x}$ to \eqref{TVnormMin} obeys
\begin{align*}
  \operatorname{Re}\<y, c\> = \operatorname{Re} \<\mathcal{F}_{n} \,
  \hat{x}, c\> = \operatorname{Re} \<\hat{x}, \mathcal{F}_{n}^{\ast}
  \, c\> = \operatorname{Re}\sqbr{ \int_0^1
    \overline{(\mathcal{F}_{n}^{\ast} c)\brac{t}}\, \hat{x}(\text{d}t)} =
  \normTV{\hat{x}},% \label{cost_equality}
\end{align*}
which implies that the trigonometric polynomial
$\mathcal{F}_{n}^{\ast} c$ is exactly equal to the sign of $\hat{x}$
when $\hat x$ is not vanishing. This is illustrated in Figure
\ref{fig:dual_pol_sdp}. Thus, to recover the support of the solution
to the primal problem, we must simply locate the roots of $p_{2n-2}$
on the unit circle, for instance by computing the eigenvalues of its
companion matrix \cite{numerical_recipes}. As shown in Table
\ref{table:sdp}, this scheme allows to recover the support with very
high precision. Having obtained the estimate for the support
$\hat{T}$, the amplitudes of the signal can be reconstructed by
solving the system of equations $\sum_{t \in \hat{T}} e^{-i2\pi k t}
a_t = y_k$, $\abs{k} \le f_c$, using the method of least
squares. There is a unique solution as we have at most $n-1$ columns
which are linearly independent since one can add columns to form a
Vandermonde system.\footnote{The set of roots contains the support of
  a primal optimal solution; if it is a strict superset, then some amplitudes will vanish.}%in case it is a strict superset, then some amplitudes will vanish.} 
  Figure \ref{fig:sdp_recovery}
illustrates the accuracy of this procedure; a Matlab script
reproducing this example is available at
\url{http://www-stat.stanford.edu/~candes/superres_sdp.m}.
\begin{table}[htbp]%\vspace{10pt}
\begin{center}
\begin{tabular}{| c | c | c | c | c |}
	\hline
	$\fc$ &  25 & 50 &  75 &  100  \\
% & Bound on $\abs{q\brac{t}}$ \\
	\hline
	Average error   & $ 6.66  \, 10^{-9} $ & $1.70  \, 10^{-9} $ &  $5.58  \, 10^{-10} $ & $2.96  \, 10^{-10} $ \\
	\hline
	Maximum error   & $1.83  \, 10^{-7} $ & $8.14  \, 10^{-8} $ &  $2.55 \, 10^{-8}$   & $2.31  \, 10^{-8} $ \\
	\hline
\end{tabular}
\caption{Numerical recovery of the support of a sparse signal obtained
  by solving \eqref{TVnormMin_sdp} via CVX \cite{cvx}. For each value of
  the cut-off frequency $\fc$, 100 signals were generated with random
  complex amplitudes situated at approximately $\fc/4$ random
  locations in the unit interval separated by at least $2/\fc$. The
  table shows the errors in estimating the support locations. }
  \label{table:sdp}
\end{center}
\end{table}
\begin{figure}
\centering
 \includegraphics[width=8cm]{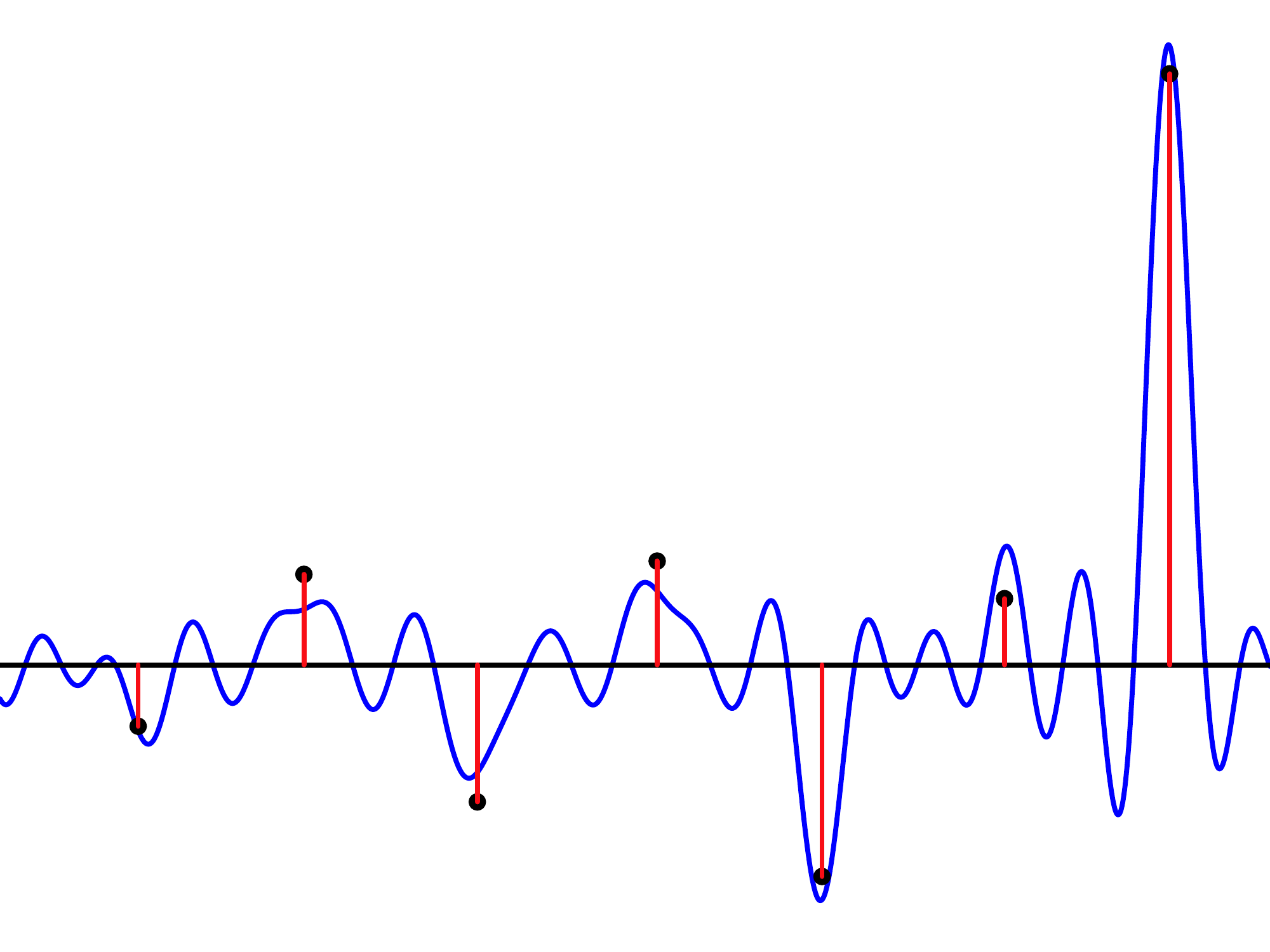}
 \caption{There are 21 spikes situated at arbitrary locations
   separated by at least $2\lambda_c$ and we observe 101 low-frequency
   coefficients ($\fc = 50$).  In the plot, seven of the original
   spikes (black dots) are shown along with the corresponding low
   resolution data (blue line) and the estimated signal (red line). % A
  % Matlab script that reproduces this example is available at
  % \url{http://www.stanford.edu/~cfgranda/superres_sdp.m}.
 }
\label{fig:sdp_recovery}
\end{figure}

\begin{figure}
\centering
 \includegraphics[width=9cm,height=4cm]{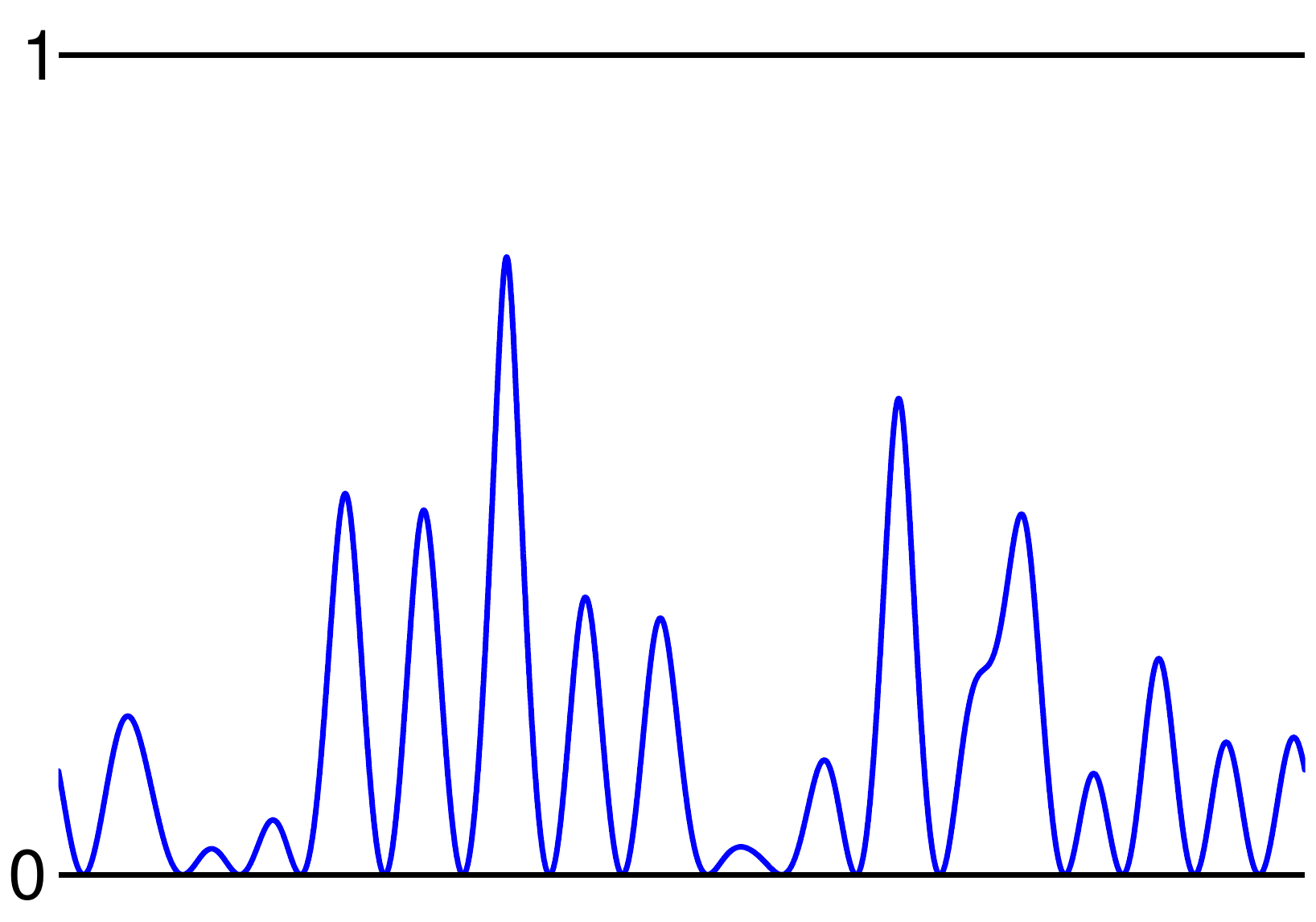}
 \caption{The trigonometric polynomial $p_{2n-2}(e^{i2\pi t})$ with
   random data $y \in \C^{21}$ ($n = 21$ and $\fc = 10$) with
   i.i.d.~complex Gaussian entries. The polynomial has 16 roots.}
\label{fig:random_y}
\end{figure}
In summary, in the usual case when $p_{2n-2}$ has less than $n$ roots
on the unit circle, we have explained how to retrieve the minimum
total-variation norm solution. It remains to address the situation in
which $p_{2n-2}$ vanishes everywhere.  In principle, this could happen
even if a primal solution to \eqref{TVnormMin} is an atomic measure
supported on a set $T$ obeying $|T| < n$. For example, let $x$ be a
positive measure satisfying the conditions of Theorem
\ref{theorem:noiseless}, which implies that it is the unique solution
to \eqref{TVnormMin}. Consider a vector $c \in \C^n$ such that
$(\mathcal{F}_{n}^{\ast} c)(t) = 1$; i.~e., the trigonometric
polynomial is constant. Then
\begin{align*}
  \operatorname{Re}\<y, c\> = \operatorname{Re} \<\mathcal{F}_{n} \,
  x, c\> = \operatorname{Re} \<x, \mathcal{F}_{n}^{\ast} \, c \> =
  \normTV{x},
\end{align*}
which shows that $c$ is a solution to the dual \eqref{TVnormMin_sdp}
that does not carry any information about the support of $x$.
Fortunately, this situation is in practice highly unusual. In fact, it
does not occur as long as
\begin{equation}
  \text{there exists a solution $\tilde c$ to \eqref{TVnormMin_sdp} obeying
  $\abs{(\mathcal{F}_{n}^{\ast} \tilde{c})(t)} < 1$ for some $t \in
  \sqbr{0,1}$}, \label{cond_c}
\end{equation}
and we use interior point methods as in SDPT3 \cite{sdpt3} to solve
\eqref{TVnormMin_sdp}.  (Our simulations use CVX which in turn calls
SDPT3.) This phenomenon is explained below.  At the moment, we would
like to remark that Condition \eqref{cond_c} is sufficient for the
primal problem \eqref{TVnormMin} to have a unique solution, and holds
except in very special cases.  To illustrate this, suppose $y$ is a
random vector, \textit{not} a measurement vector corresponding to a
sparse signal. In this case, we typically observe dual solutions as
shown in Figure \ref{fig:random_y} (non-vanishing polynomials with at
most $n-1$ roots). To be sure, we have solved 400 instances of
\eqref{TVnormMin_sdp} with different values of $\fc$ and random data
$y$. In every single case, condition \eqref{cond_c} held so that we
could construct a primal feasible solution $x$ with a duality gap
below $10^{-8}$, see Figure \ref{fig:sdp_polynomial_histogram}.  In
all instances, the support of $x$ was constructed by determining roots
of $p_{2n-2}(z)$ at a distance at most $10^{-4}$ from the unit circle.
\begin{figure}
\centering
\subfigure[]{
\includegraphics[width=5.5cm,height=3cm]{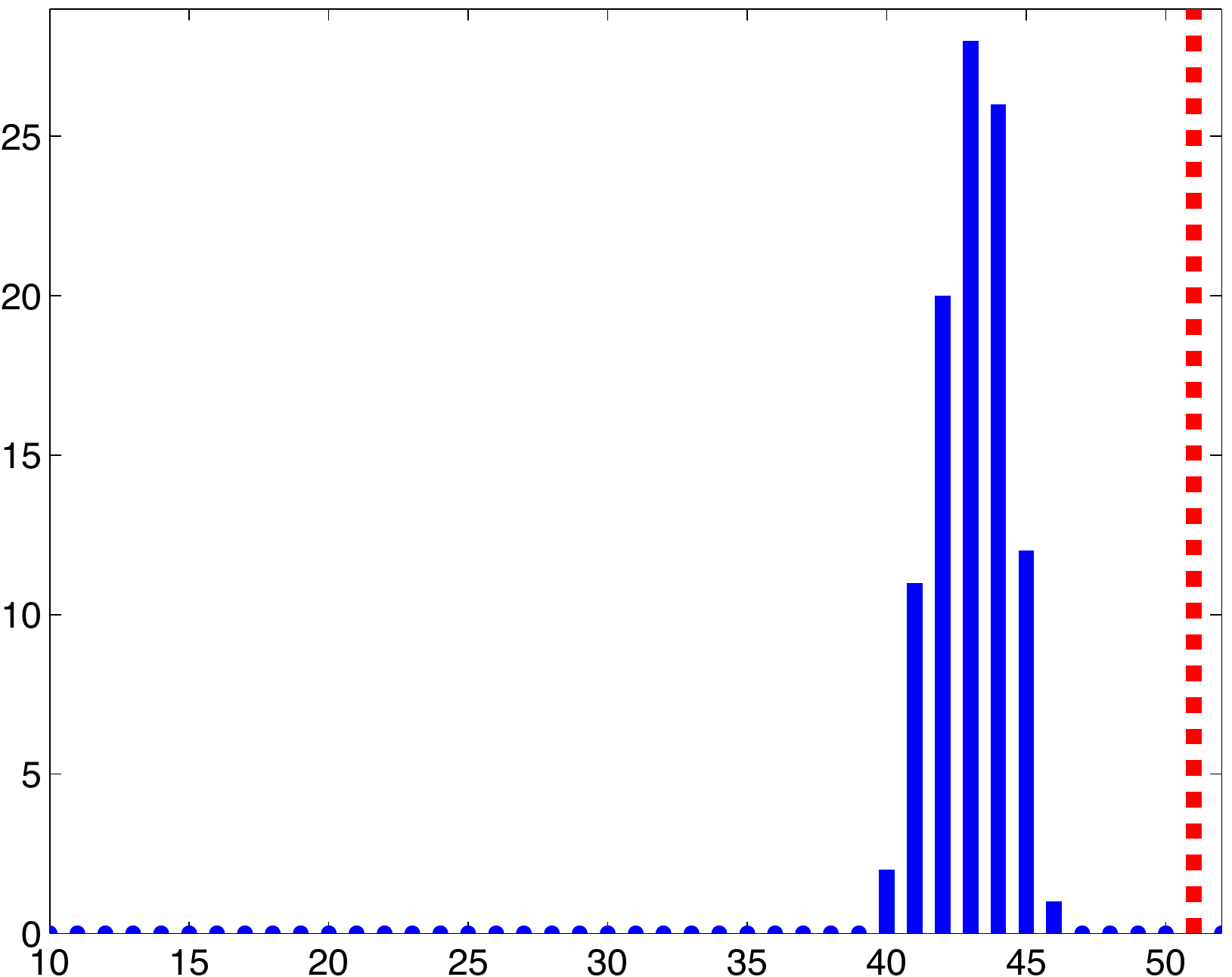}
}
\subfigure[]{
\includegraphics[width=5.5cm,height=3cm]{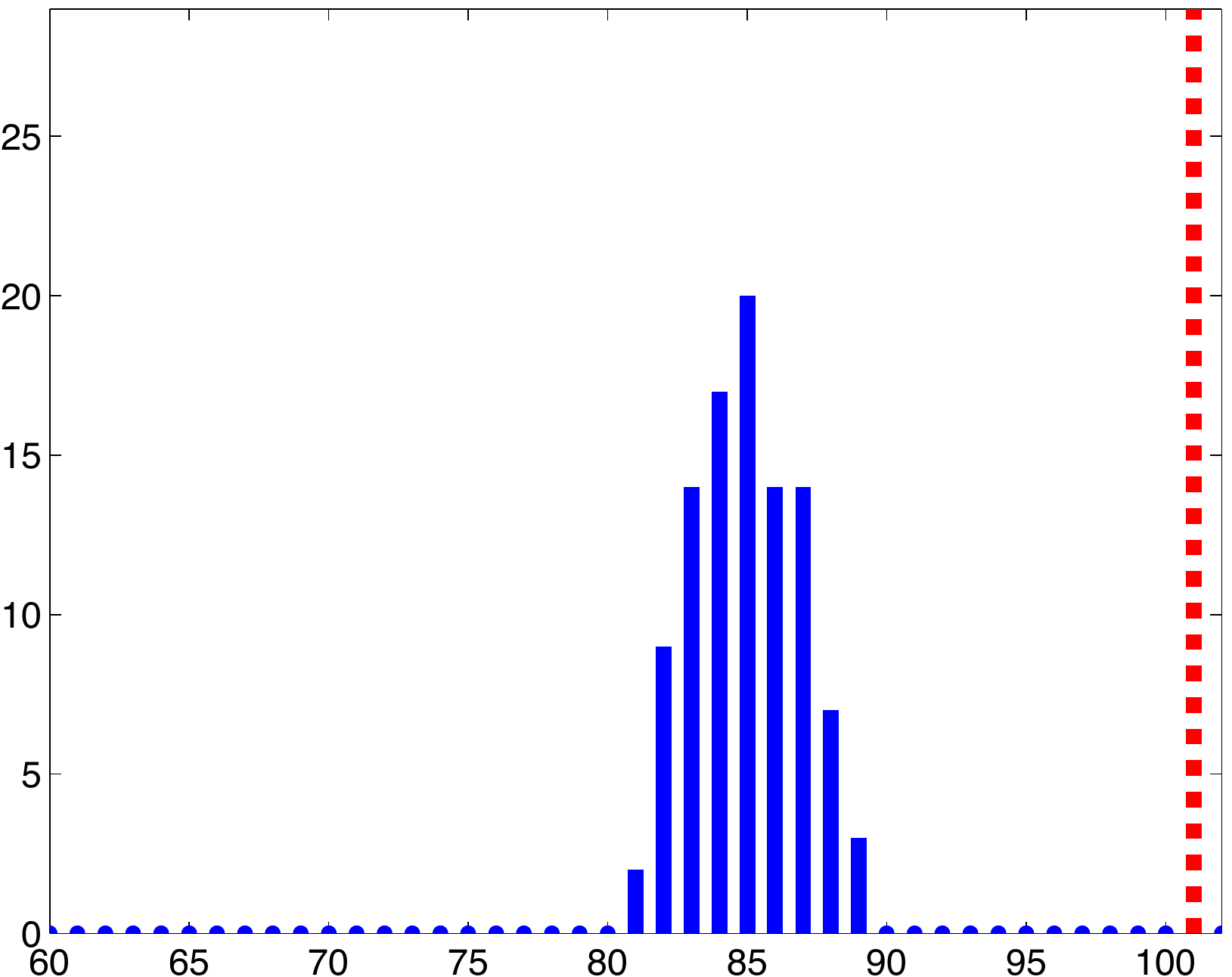}
}
\subfigure[]{
\includegraphics[width=5.5cm,height=3cm]{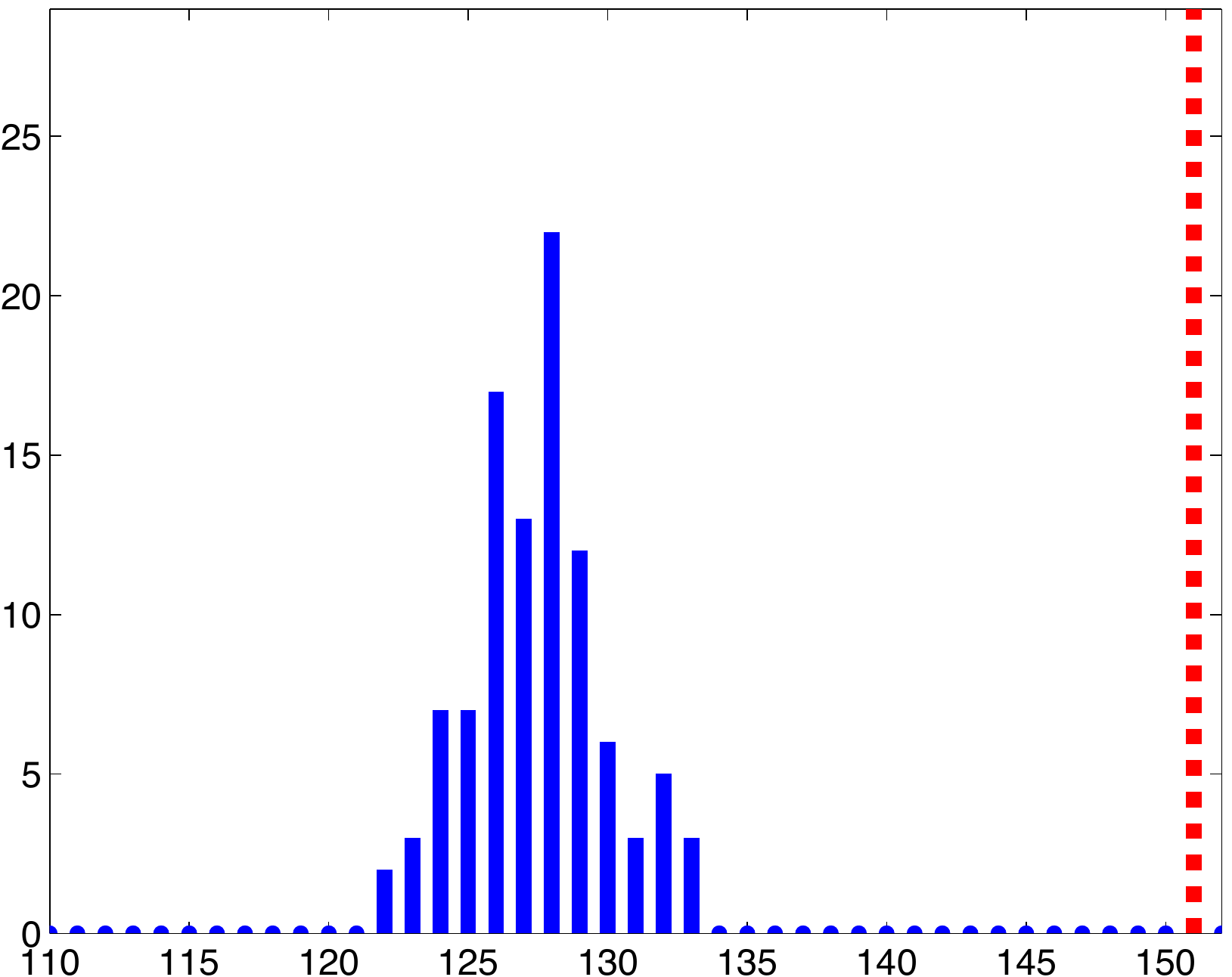}
}
\subfigure[]{
\includegraphics[width=5.5cm,height=3cm]{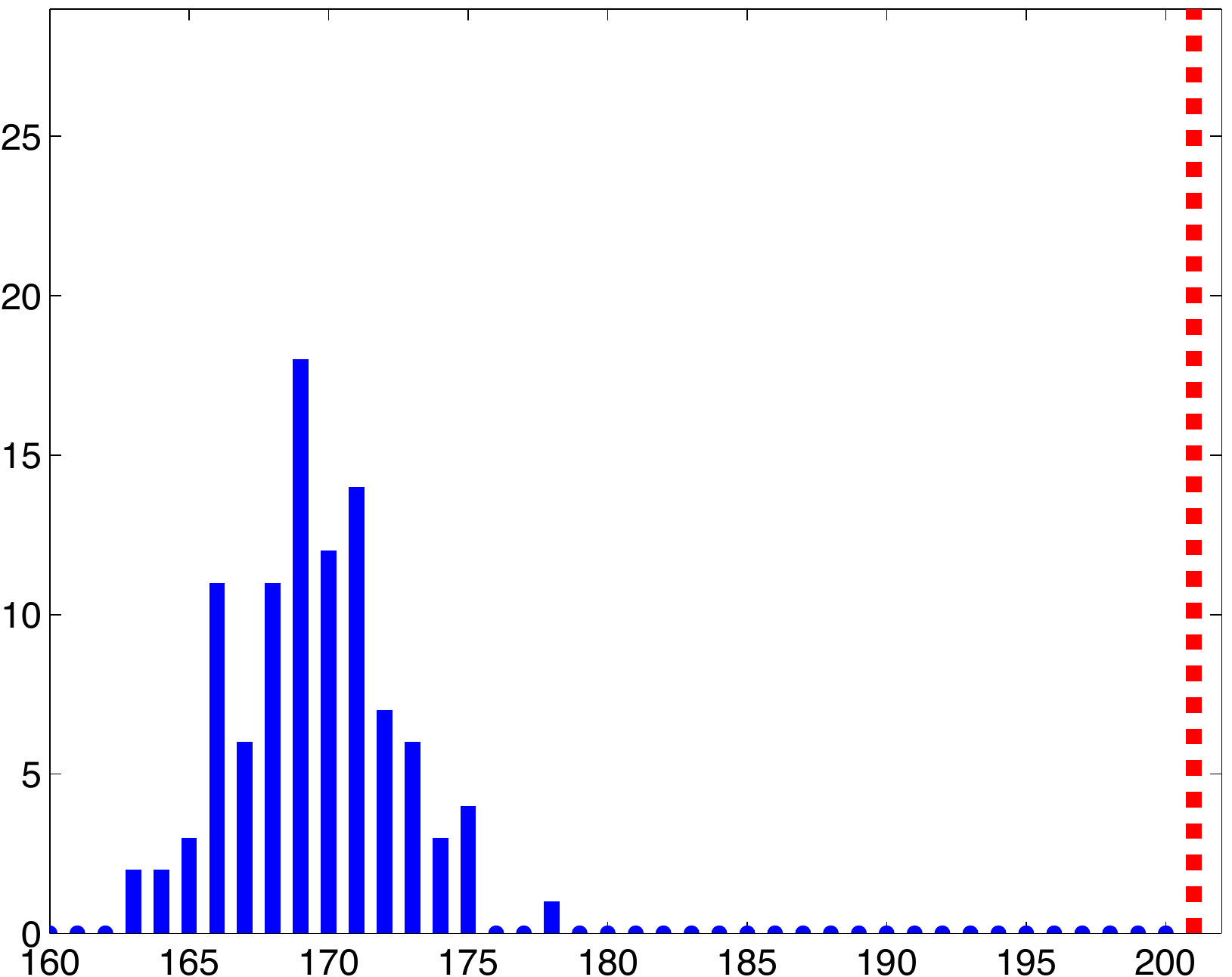}
}
\caption{Primal feasible points $x$ with a duality gap
  below $10^{-8}$ are constructed from random data $y$ sampled 
with i.i.d.~complex
  Gaussian entries. A dual gap below $10^{-8}$ implies that
  $\|x\|_{\text{TV}} - \|\hat x\|_{\text{TV}} \le 10^{-8}$, where 
  $\hat x$ is any primal optimal solution. (For reference, the optimal
  value $\|\hat x\|_{\text{TV}}$ is on the order of 10 in all cases.)
  Each figure plots the frequency of occurrence of support
  cardinalities out of 100 realizations. For example, in (a) we
  obtained a support size equal to 44 in 25 instances out of 100.
%   Histogram of the number of times (vertical axis) that
%   $p_{2n-2}(e^{i2\pi t})$ had a certain number of roots (horizontal
%   axis) for $n$ equal to 51~(a), 101~(b), 151~(c) and 201~(d) out of
%   100 realizations of $y$ sampled from an i.i.d.~complex Gaussian
%   distribution. 
  The value of $n$ is the same in each plot and is marked by a dashed red line; 
(a) $n = 51$, (b) $n
  = 101$, (c) $n = 151$, (d) $n = 201$.
%  For information, we  constructed the support of $x$ by determining roots of $p_{2n-2}(z)$
%   at a distance at most $10^{-4}$ from the unit circle.
  % we counted the number of pairs of complex roots
  % of $p_{2n-2}(z)$ that were at most $10^{-4}$ away from the unit
  % circle. This decision criterion yields solutions to problems
  % \ref{TVnormMin} and \ref{TVnormMin_sdp} with a duality gap of
  % order $10^{-8}$.}
}
\label{fig:sdp_polynomial_histogram}
\end{figure}
 % shows the
% results of solving 100 realizations of problem \eqref{TVnormMin_sdp}
% for different values of $\fc$ with a random data vector $y$. In every
% single case, condition \eqref{cond_c} is
% satisfied.
% \ejc{I am not sure we should include the text below.}
% \cfg{I agree. In the last sentence I just meant that one can adapt
%   Fuchs's argument, but when $k_0$ is not zero, to keep the dual
%   polynomial low frequency one has to reduce the cardinality of the
%   support.}
%To see why this is not
%surprising, note that if \eqref{cond_c} does not hold then this
%immediately implies that the primal solution $x$ is of the form
%$\abs{x\brac{t}}e^{i 2 \pi k_0 t}$, for some $k_0 \in \{-\fc, -\fc+1,
%\ldots, \fc\}$. In words, $x$ must have a fixed sign over all the
%domain. \ejc{I don't understand the last sentence.} Moreover, if $x$
%is an atomic measure of the form $\abs{x\brac{t}}e^{i 2 \pi k_0 t}$,
%then \eqref{cond_c} holds if its support has cardinality less or equal
%to $\fc-k_0$ (this follows by adapting an argument in
%\cite{fuchs_positive} for positive signals).

Interior point methods approach solutions from the interior of the
feasible set by solving a sequence of optimization problems in which
an extra term, a scaled \textit{barrier function}, is added to the
cost function \cite{boyd_convex}. To be more precise, in our case
\eqref{TVnormMin_sdp} would become
 \begin{align}
\max_{c,Q} \; \operatorname{Re}\sqbr{y^{\ast} c} + t \log \det \brac{\MAT{Q & c \\ c^{\ast} & 1}}\quad
\text{subject to} \quad \text{\eqref{eq:sdp-charact}}, \label{interior_point_problem}
\end{align}
where $t$ is a positive parameter that is gradually reduced towards
zero in order to approach a solution to \eqref{TVnormMin_sdp}. Let
$\lambda_k$, $1 \le k \le n$, denote the eigenvalues of $Q -
cc^{\ast}$. By Schur's formula (Theorem 1.1 in \cite{zhangSchur}) we
have
\begin{align*}
 \log \det \brac{\MAT{Q & c \\ c^{\ast} & 1}} & = \log \det \brac{Q - cc^{\ast}} = \sum_{k=1}^{n} \log \lambda_k.
\end{align*}
Suppose condition \eqref{cond_c} holds. Then Corollary
\ref{cor:sdp-charact} states that there exists $\tilde c$ with the
property that at least one eigenvalue of $Q -
\tilde{c}\tilde{c}^{\ast}$ is bounded away from zero. This is the
reason why in the limit $t \rightarrow 0$,
\eqref{interior_point_problem} will construct a non-vanishing
polynomial $p_{2n-2}$ with at most $n-1$ roots on the unit circle
rather than the trivial solution $p_{2n-2} = 0$ since in the latter
case, all the eigenvalues of $Q - {c}{c}^{\ast}$ vanish. Hence, an
interior-point method can be said to solve the primal problem
\eqref{TVnormMin} provided \eqref{cond_c} holds.

To conclude, we have merely presented an informal discussion of a
semidefinite programming approach to the minimum-total variation
problem \eqref{TVnormMin}.  It is beyond the scope of this paper to
rigorously justify this approach---for example, one would need to
argue that the root finding procedure can be made stable, at least
under the conditions of our main theorem---and we leave this to future
work along with extensions to noisy data.

\section{Numerical experiments}
\label{sec:numerical}
%\subsection{Lower bound for minimum distance}

%To evaluate the minimum distance needed for exact recovery by $\ell_1$ minimization, we propose the following heuristic scheme: we assume the signal we are looking for is in $\C^N$ with $N$ fixed.
To evaluate the minimum distance needed to guarantee exact recovery by
$\ell_1$ minimization of any signal in $\C^N$, for a fixed $N$, we
propose the following heuristic scheme:
\begin{itemize}
\item For a super-resolution factor $\srf = N/n$, we work with a
  partial DFT matrix $F_n$ with frequencies up to $f_c = \lfloor n/2
  \rfloor$. Fix a candidate minimum distance $\Delta$.

\item Using a greedy algorithm, construct an adversarial support with
  elements separated by at least $\Delta$ by sequentially adding
  elements to the support. Each new element is chosen to minimize the
  condition number formed by the columns corresponding to the selected
  elements.

\item Take the signal $x$ to be the singular vector corresponding to
  the smallest singular value of $F_n$ restricted to $T$.

\item Solve the $\ell_1$-minimization problem \eqref{l1problem} and
  declare that exact recovery occurs if the normalized error is below
  a threshold (in our case $10^{-4}$).
\end{itemize}
This construction of an adversarial signal was found to be better
adapted to the structure of our measurement matrix than other methods
proposed in the literature such as \cite{dossal_adversarial}. We used
this scheme and a simple binary search to determine a lower bound for
the minimum distance that guarantees exact recovery for $N=4096$,
super-resolution factors of 8, 16, 32 and 64 and support sizes equal
to 2, 5, 10, 20 and 50. {The simulations were carried out in Matlab,
  using \texttt{CVX} \cite{cvx} to solve the optimization problem.}
Figure \ref{fig:distances} shows the results, which suggest that on
the discrete grid we need at least a minimum distance equal to twice
the super-resolution factor in order to guarantee reconstruction of
the signal (red curve). Translated to the continuous setting, in which
the signal would be supported on a grid with spacing $1/N$, this
implies that $\Delta \gtrsim \lambda_c$ is a necessary condition for
exact recovery.
%which suggest that $\Delta \gtrsim \lambda_c$ is necessary for exact
%recovery (red curve).
\begin{figure}[ht]
\centering
\includegraphics[width=10cm]{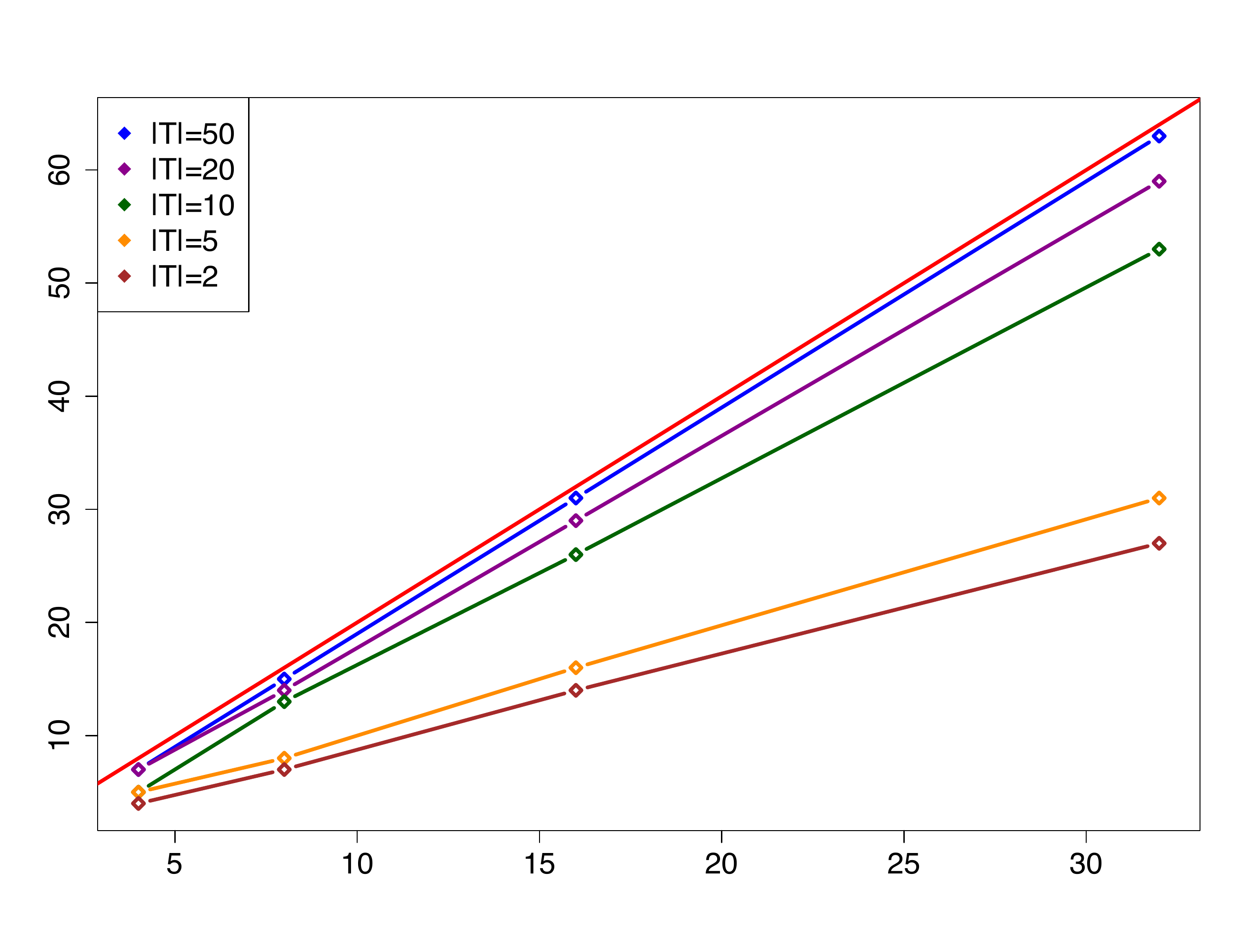}
\caption{% \ejc{I don't like the labeling of the axes because it is not
%     inline with the text. We should have units of $\lambda_c$ in $x$
%     and multiples of $\lambda_c$ in $y$. Also, what is being plotted?
%     In terms of the binary search?}
%  \cfg{the minimum distance at which exact recovery occurs} 
  Minimum distances (vertical axis) at which exact recovery by
  $\ell_1$ minimization occurs for the adversarial complex sign
  patterns against the corresponding super-resolution factors. {At the
    red curve, the minimum distance would be exactly equal to twice
    the super-resolution factor.} Signal length is $N=4096$.}
\label{fig:distances}
\end{figure}

% \subsection{Stability}
% In order to explore the robustness of super-resolution by $\ell_1$ minimization in the presence of noise we performed the following simulation:
% \begin{itemize}
% \item Generate a signal satisfying the minimum distance condition with random support and a uniformly random amplitude between $-1$ and $1$. 
% \item Simulate measurements according to the model described by \eqref{noisemodel_l1} where $w$ is an i. i. d. Gaussian vector of standard deviation equal to $10^{-3}$.
% \item Solve Problem \eqref{l1problem_relaxed} using $\delta$ equal to the $\ell_1$ norm of $P_nw$.
% \end{itemize}
% 
% This was carried out for different values of the super-resolution
% factor and cardinality of the support of the signal. The resulting
% error was averaged over 20 tries. The results for $N=300$,
% super-resolution factors of 2, 2.5, 3, 3.5, 4, 4.5 and 5 and support sizes of 5, 10
% and 15 are shown in Figure \ref{fig:noise_sim}. The simulations were carried out in Matlab, using \texttt{CVX} \cite{cvx} to solve the optimization problem.
% \begin{figure}[ht]
% \centering
% %\includegraphics[width=8cm]{figures/error_delta}
% \caption{The vertical axis shows the estimation error $\normOne{x-\hat{x}}/\delta$, where $x$ and $\hat{x}$ are the original and estimated signal respectively. The simulations were carried out for super-resolution factors of 2, 2.5, 3, 3.5, 4, 4.5 and 5 (horizontal axis) and for support sizes ($\abs{T}$) of 5, 10 and 15. The length of the signal was $N=300$.}
% \label{fig:noise_sim}
% \end{figure}
\section{Discussion}
\label{sec:discussion}

In this paper, we have developed the beginning of a mathematical
theory of super-resolution. In particular, we have shown that we can
super-resolve `events' such as spikes, discontinuity points, and so on
with infinite precision from just a few low-frequency samples by
solving convenient convex programs. This holds in any dimension
provided that the distance between events is proportional to $1/f_c =
\lambda_c$, where $f_c$ is the highest observed frequency; for
instance, in one dimension, a sufficient condition is that the
distance between events is at least $2\lambda_c$. Furthermore, we have
proved that when such condition holds, stable recovery is possible
whereas super-resolution---by any method whatsoever---is in general
completely hopeless whenever events are at a distance smaller than about
$\lambda_c/2$.%lower than about $\lambda_c/2$.

\subsection{Improvement}

In one dimension, Theorem \ref{theorem:noiseless} shows that a
sufficient condition for perfect super-resolution is $\Delta(T) \ge
\optvalue \, \lambda_c$. Furthermore, the authors of this paper have a
proof showing that $\Delta(T) \ge 1.85 \, \lambda_c$ is
sufficient. This proof, however, is longer and more technical and,
therefore, not presented here. In fact, we expect that arguments more
sophisticated than those presented here would allow to lower this
value even further.  On the other hand, our numerical experiments show
that we need at least $\Delta(T) \ge \lambda_c$ as one can otherwise
find sparse signals which cannot be recovered by $\ell_1$
minimization. Hence, the minimum separation needed for success is
somewhere between $\lambda_c$ and $1.85 \, \lambda_c$. It would be
interesting to know where this critical value might be.

\subsection{Extensions}

We have focused in this paper on the super-resolution of point
sources, and by extension of discontinuity points in the function
value, or in the derivative and so on. Clearly, there are many other
models one could consider as well. For instance, we can imagine
collecting low-frequency Fourier coefficients of a function
\[
f(t) = \sum_j x_j \varphi_j(t), 
\]
where $\{\varphi_j(t)\}$ are basis functions. Again, $f$ may have lots
of high-frequency content but we are only able to observe the low-end
of the spectrum. An interesting research question is this: suppose the
coefficient sequence $x$ is sparse, then under what conditions is it
possible to super-resolve $f$ and extrapolate its spectrum accurately?
% For instance, suppose we have a function that is discontinuous along
% a curve but otherwise very smooth. How well can we super-resolve the
% discontinuity?
In a different direction, it would be interesting to extend our
stability results to other noise models and error metrics. We leave
this to further research.

% \begin{small}
\subsection*{Acknowledgements}
E.~C.~is partially supported by NSF via grant CCF-0963835 and the 2006
Waterman Award, by AFOSR under grant FA9550-09-1-0643 and by ONR under
grant N00014-09-1-0258. C.~F.~is supported by a Caja Madrid Fellowship
and was previously supported by a La Caixa Fellowship. E.~C.~would
like to thank Mikhail Kolobov for fruitful discussions, and Mark
Davenport, Thomas Strohmer and Vladislav Voroninski for useful
comments about an early version of the paper. C.~F.~would like to thank Armin Eftekhari for a remark on Lemma~\ref{lemma:fejersq_bounds}.
% \end{small}

% \begin{small}
\bibliographystyle{abbrv}
\bibliography{refs_SR}
% \end{small}

\newpage
\appendix
\section{Background on the recovery of complex measures}
\label{sec:tv}

% \subsection{Definition of the total variation norm}
% \label{subsec:total_variation_measure}

With $\mathbb{T} = [0,1]$, the total variation of a complex measure
$\nu$ on a set $B \in \mathcal{B}\brac{\mathbb{T}}$ is defined by
\[
\abs{\nu}\brac{B} = \sup \sum_{j=1}^{\infty}\abs{\nu \brac{B_j}},
\]
where the supremum is taken over all partitions of $B$ into a finite
number of disjoint measurable subsets. The total variation $\abs{\nu}$
is a positive measure on $\mathcal{B}\brac{\mathbb{T}}$ and can be
used to define the total-variation norm on the space of complex
measures on $\mathcal{B}\brac{\mathbb{T}}$, 
\[
\normTV{\nu} = \abs{\nu}\brac{\mathbb{T}}.
\]
For further details, we refer the reader to \cite{rudin}.

% In this section we prove that the existence of a certain low frequency
% polynomial that is capable of interpolating a normalized vector on a
% certain support $T$ implies exact recovery of any complex measure
% supported on $T$ by solving Problem \eqref{TVnormMin}. 

\begin{proposition}
\label{prop:noiseless_recovery_proof}
Suppose that for any vector $v \in \C^{\abs{T}}$ with unit-magnitude
entries, there exists a low-frequency polynomial $q$ \eqref{eq:trig}
obeying \eqref{eq:cond_q}.  Then $x$ is the unique solution to
\eqref{TVnormMin}.
\end{proposition}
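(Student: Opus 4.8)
The plan is to run the by-now-standard dual-certificate argument, working directly with the total-variation duality recalled above. It suffices to prove that any feasible competitor $\tilde x$, i.e.\ a finite complex measure on $[0,1]$ with $\mathcal{F}_n \tilde x = y$ and $\normTV{\tilde x} \le \normTV{x}$, must equal $x$; since $x$ itself is feasible with $\normTV{x} = \normTV{x}$, this shows at once that $x$ is the unique minimizer of \eqref{TVnormMin}, with no need for a separate compactness/existence argument.

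First I would fix the sign pattern $v_j = a_j/|a_j|$ (all $a_j \neq 0$ since $T = \supp{x}$) and invoke the hypothesis to get a low-frequency polynomial $q$ as in \eqref{eq:trig} with $q(t_j) = v_j$ on $T$ and $|q(t)| < 1$ on $\mathbb{T}\setminus T$. Writing $h = \tilde x - x$, the two elementary observations are: (i) $\int \overline q\, dh = 0$, because $\overline q$ is again a trigonometric polynomial of degree at most $f_c$ and $\mathcal{F}_n h = 0$; and (ii) $\int \overline q\, dx = \sum_j \overline{v_j}\, a_j = \sum_j |a_j| = \normTV{x}$ by the choice of $v$ and the identity $\normTV{\sum_j a_j\delta_{t_j}} = \sum_j |a_j|$. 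Hence $\int \overline q\, d\tilde x = \normTV{x}$, a nonnegative real number, so that
\[
\normTV{x} = \Bigl| \int \overline q\, d\tilde x \Bigr| \le \int |q|\, d|\tilde x| \le \int 1 \, d|\tilde x| = \normTV{\tilde x} \le \normTV{x} .
\]
Thus every inequality in this chain is an equality.

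Now I would extract structure. The equality $\int (1-|q|)\, d|\tilde x| = 0$ together with $1-|q| > 0$ on $\mathbb{T}\setminus T$ forces $|\tilde x|$ to be concentrated on $T$, so $\tilde x = \sum_j \tilde a_j \delta_{t_j}$ is atomic with support in $T$. Writing $d\tilde x = \sgn(\tilde x)\, d|\tilde x|$ (polar/Radon--Nikodym decomposition), equality in the triangle inequality $|\int \overline q\, d\tilde x| = \int |q|\, d|\tilde x|$ forces the integrand $\overline{q}\,\sgn(\tilde x)$ to have constant argument $|\tilde x|$-a.e.; since the sum equals the positive number $\normTV{x}$, that argument is $0$, i.e.\ $\overline{v_j}\, \tilde a_j \in \R_{\ge 0}$, which says $\tilde a_j = \mu_j a_j$ with $\mu_j \ge 0$ for every $j$. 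Finally $\mathcal{F}_n h = 0$ reads $\sum_j (\mu_j - 1) a_j e^{-i 2\pi k t_j} = 0$ for all $|k| \le f_c$. The hypothesis furnishes an interpolant for \emph{every} unit-modulus pattern, so the linear map $c \mapsto \big(\sum_{|k|\le f_c} c_k e^{i2\pi k t_j}\big)_j$ from $\C^{n}$ has image containing the whole torus $\{v : |v_j| = 1\}$; a proper subspace of $\C^{|T|}$ cannot contain that torus, so the map is onto and in particular $|T| \le n = 2f_c+1$. Then the above Vandermonde-type system with distinct nodes has only the trivial solution, giving $(\mu_j-1)a_j = 0$, hence $\mu_j = 1$ and $\tilde x = x$.

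The bookkeeping with complex measures (restriction to Borel sets, the polar decomposition, the TV/$\ell_1$ identity) is routine. The one step that I expect to require real care in the write-up is the equality-case analysis: deducing simultaneously that $\tilde x$ is atomic on $T$ and that its amplitudes are nonnegative multiples of those of $x$, from the joint saturation of $|q| \le 1$ and of the complex triangle inequality — this is precisely where the \emph{strict} off-support bound in \eqref{eq:cond_q} is used. Once that is in hand, injectivity of the partial Fourier operator on supports of size at most $n$ closes the argument.
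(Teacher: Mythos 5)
Your proof is correct, but it takes a genuinely different route from the paper's. The paper's argument performs a Lebesgue decomposition of the error $h = \hat x - x$ into a part $h_T$ concentrated on $T$ and a part $h_{T^c}$ concentrated off $T$, and — crucially — chooses the dual polynomial to interpolate the \emph{sign of $h_T$}, not of $x$. This makes $\int q\, dh_T = \normTV{h_T}$, and $\mathcal{F}_n h = 0$ together with $|q| < 1$ off $T$ then forces $\normTV{h_T} < \normTV{h_{T^c}}$ whenever $h_{T^c} \neq 0$. Plugging this into $\normTV{x} \ge \normTV{x + h_T} + \normTV{h_{T^c}} \ge \normTV{x} - \normTV{h_T} + \normTV{h_{T^c}}$ yields a direct contradiction, so $h_{T^c} = 0$, and then $\int q\, dh = \normTV{h_T} = 0$ kills $h_T$ as well. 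No cardinality bound or Vandermonde step is needed. You instead fix $v = \sgn(x)$ up front, which gives only the weaker equality-case conclusion that any competitor is a nonnegative rescaling of $x$ supported on $T$; you must then eliminate this residual degree of freedom with an injectivity argument, for which you supply the observation that the restriction map $c \mapsto (\sum_{|k|\le f_c} c_k e^{i2\pi k t_j})_j$ is onto (its image is a subspace containing the torus, which spans $\C^{|T|}$), hence $|T| \le n$, hence the $n \times |T|$ Vandermonde-type system has trivial kernel. Both proofs use the full strength of the hypothesis (that a certificate exists for \emph{every} unit-modulus pattern): the paper uses it to match the unknown sign of $h_T$; you use it once for $\sgn(x)$ and once again to deduce $|T| \le n$. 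The paper's route is shorter and avoids the final injectivity step; yours makes the geometry of the equality case more explicit and exposes the cardinality bound $|T| \le 2f_c+1$ as an automatic consequence of the dual-certificate hypothesis, which is a nice observation not recorded in the paper.
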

\begin{proof}
  The proof is a variation on the well-known argument for finite
  signals, and we note that a proof for continuous-time signals,
  similar to that below, can be found in \cite{supportPursuit}. Let
  $\hat x$ be a solution to \eqref{TVnormMin} and set $\hat x = x +
  h$. Consider the Lebesgue decomposition of $h$ relative to
  $\abs{x}$,
\[
 h  = h_T + h_{T^c} \text{,}
\]
where (1) $h_T$ and $h_{T^c}$ is a unique pair of complex measures on
$\mathcal{B}\brac{\mathbb{T}}$ such that $h_T$ is absolutely
continuous with respect to $\abs{x}$, and (2) $h_{T^c}$ and $\abs{x}$
are mutually singular. It follows that $h_T$ is concentrated on $T$
while $h_{T^c}$ is concentrated on $T^c$. Invoking a corollary of the
Radon-Nykodim Theorem (see Theorem 6.12 in \cite{rudin}), it is
possible to perform a polar decomposition of $h_T$:
\begin{align*}
 h_T = e^{i 2 \pi \phi \brac{t}}\abs{h_T}\text{,}
\end{align*}
such that $\phi\brac{t}$ is a real function defined on
$\mathbb{T}$. Assume that there exists $q(t) = \sum_{j=-\fc}^{\fc} a_j
e^{-i2\pi jt}$ obeying 
\begin{equation}
\label{eq:qApp}
\begin{cases} q\brac{t_j}  = e^{-i 2 \pi \phi \brac{t_j}}, & \forall t_j \in T \\
  \abs{q\brac{t}}  < 1, & \forall t \in [0,1]\setminus T.
\end{cases}
\end{equation}
The existence of $q$ suffices to establish a valuable inequality
between the total-variation norms of $h_T$ and $h_{T^c}$. Begin with
\[
0 = \int_{\mathbb{T}}q(t) h(\text{d}t) = \int_{\mathbb{T}}q(t) h_T(\text{d}t)
+\int_{\mathbb{T}}q(t) h_{T^c}(\text{d}t) = \normTV{h_T} + \int_{\mathbb{T}}q(t)
h_{T^c}(\text{d}t)
\]
and observe that 
\[
\abs{\int_{\mathbb{T}}q(t) h_{T^c}(\text{d}t)} < \normTV{h_{T^c}}
\]
provided $h_{T^c} \neq 0$. This gives
\[
\normTV{h_T} \le \normTV{h_{T^c}}
\]
with a strict inequality if $h \neq 0$. Assuming $h \neq 0$, we have
\[
\normTV{x} \ge \normTV{x + h} = \normTV{x + h_T} + \normTV{h_{T^c}}
\geq \normTV{x} - \normTV{h_T} + \normTV{h_{T^c}} > \normTV{x}.
\]
This is a contradiction and thus $h = 0$. In other words, $x$ is the
unique minimizer. 
\end{proof}

\section{Proof of Lemma \ref{lemma:fejersq_bounds}}
% \subsection{Proof of Lemma \ref{lemma:fejersq_bounds}}
\label{proof:fejersq_bounds}
The first inequality in the lemma holds due to two lower bounds on the sine function:
\begin{align}
  \abs{\sin \brac{\pi t}} & \geq 2\abs{t}, \text{ for all } t \in
  [-1/2,1/2]\label{bound_sinx}\\
  \sin \brac{\pi t} & \geq \pi t - \frac{\pi^3t^3}{6} = \frac{2
    t}{a(t)}, \text{ for all } t \geq 0\label{sinelowerbound}.
\end{align}
The proof for these expressions, which we omit, is based on concavity of the sine function and on a Taylor expansion around the origin.
% The functions $2t$ and $\sin \brac{\pi t}$ are equal at $0$ and $0.5$,
% an interval where the latter is also concave, hence implying the
% inequality on $[0,1/2]$. The claim follows by symmetry.
Put $f = \fc/2+1$ for short. Some simple calculations give $K'(0) = 0$
and for $t \neq 0$,
\begin{equation}
K'\brac{t} =
    4\pi \brac{\frac{\sin \brac{f \pi t}}{f\sin
        \brac{\pi t}}}^{3}\brac{\frac{\cos \brac{f \pi
          t}}{\sin \brac{\pi t}} - \frac{\sin \brac{f \pi
          t}\cos \brac{\pi t} }{f \sin^2 \brac{\pi t}} }.
    \label{Kprime}
  \end{equation}
Further calculations show that the value of the second derivative of
$K$ at the origin is $-\pi^2 \fc \brac{\fc+4}/3$, and for $t \neq 0$,
\begin{multline}
  K''\brac{t} =  \frac{4\pi^2 \sin ^{2} \brac{f \pi t} }{f^{2} \sin^4 \brac{\pi t} } \left[ 3\brac{\cos \brac{f \pi t} - \frac{\sin \brac{f \pi t}\cos \brac{\pi t} }{f \sin \brac{\pi t}} }^2 \right. \\ \left.
    -\sin^2 \brac{f \pi t} -\frac{\sin \brac{2f \pi t}}{f\tan
      \brac{\pi t}} + \frac{\sin^2 \brac{f \pi t}}{f^2 \tan^2
      \brac{\pi t}} + \frac{\sin^2 \brac{f \pi t}}{f^2 \sin^2
      \brac{\pi t}}\right].\label{Kprime2}
\end{multline}
It is also possible to check that the third derivative of $K$ is zero
at the origin, and for $t \neq 0$,
\begin{equation}
  K'''\brac{t}  =  \frac{ 4 \pi^3 \sin \brac{f \pi t}}{f\sin^{4} 
    \brac{\pi t}} \, \left( 6H_1(t) + 9 \sin\brac{f \pi t} H_2(t) + \sin^2\brac{f \pi t}H_3(t)\right) 
\label{Kprime3}
\end{equation} 
with
\begin{align*}
  H_1(t) & =  \brac{\cos \brac{f \pi t} - \frac{\sin \brac{f \pi t}\cos \brac{\pi t} }{f \sin \brac{\pi t}} }^3 \\
  H_2(t) & = \brac{\cos \brac{f \pi t} - \frac{\sin \brac{f \pi t}\cos
      \brac{\pi t} }{f \sin \brac{\pi t}} } \left( -\sin \brac{f \pi
      t} -\frac{2\cos \brac{f \pi t}}{f\tan \brac{\pi t}}+ \frac{\sin
      \brac{f \pi t}}{f^2 \tan^2 \brac{\pi t}} + \frac{\sin \brac{f
        \pi t}}{f^2 \sin^2 \brac{\pi
        t}}\right)\\
  H_3(t) & = \left(\frac{3\cos \brac{f \pi t}\brac{1+\cos^2 \brac{\pi
          t}}}{f^2\sin^2 \brac{\pi t}} -\cos \brac{f \pi t} + \frac{3
      \sin \brac{f \pi t}}{f\tan \brac{\pi t}} - \frac{\sin \brac{f
        \pi t} \brac{1+5\cos \brac{\pi t}}}{f^3\sin^3 \brac{\pi
        t}}\right).
\label{Kprime3}
\end{align*} 
The remaining inequalities in the lemma are all almost direct consequences of plugging \eqref{bound_sinx} and \eqref{sinelowerbound} into \eqref{Kprime}, \eqref{Kprime2} and \eqref{Kprime3}. The bounds are nonincreasing in $t$ because the derivative of $b(t)$ is negative between zero and $\sqrt{2}/\pi$ and one can check that $H_\ell(\sqrt{2}/\pi) < H_\ell^\infty$ for $\fc \geq 128$.
% The remaining inequalities in the lemma are all almost direct
% consequences of plugging \eqref{bound_sinx} and
% \eqref{sinelowerbound} into \eqref{Kprime}, \eqref{Kprime2} and
% \eqref{Kprime3}. Let $k$ be any positive integer.  The bounds are
% decreasing in $t$ between zero and $1/2$ because the derivative of
% $t^{-k}$ is negative over all the interval and that of $b^k(t)$ is
% negative between zero and $\sqrt{2}/\pi$.
Additionally, $b^k(t)$ is strictly convex for positive $t$ and $k\in \keys{1,2,3}$, so the derivative with respect to $\tau$ of $b^k(\Delta-\tau)+b^k(\Delta+\tau)$ is positive for $0 \leq \tau <\Delta/2$, which implies that
$\tilde{B}_\ell(\Delta-\tau) + \tilde{B}_\ell(\Delta+\tau)$ is increasing in $\tau$.

\newpage

\section{Proof of Theorem \ref{theorem:2D}}
\label{sec:proof_2D}
Theorem \ref{theorem:2D} follows from Proposition
\ref{prop:2D_dualcert} below, which guarantees the existence of a dual
certificate. In this section, we write $\Delta = \Delta(T) \ge
\Deltamin = \optvaluetwoD \, \lambda_c$. Unless specified otherwise,
$|r - r'|$ is the $\infty$ distance.

\begin{proposition}
 \label{prop:2D_dualcert}
 Let $T=\keys{r_1,r_2,\dots} \subset \mathbb{T}^2$ be any family of
 points obeying the minimum distance condition
\begin{equation*}
  \abs{r_j-r_k} \geq \optvaluetwoD \, \lambda_c, \quad r_j \neq r_k \in T.
\end{equation*}
Assume $\fc\geq \minmTwoD$. Then for any vector $v \in \R^{\abs{T}}$
with $\abs{v_j} = 1$, there exists a trigonometric polynomial $q$ with
Fourier series coefficients supported on $\{-f_c, -\fc + 1, \ldots,
f_c\}^2$ with the property
\begin{equation}
  \label{eq:cond_q_2D}
  \begin{cases} q(r_j) = v_j, & t_j \in T,\\
    |q(r)| < 1, & t \in \mathbb{T}^2 \setminus T. 
\end{cases}
\end{equation}
\end{proposition}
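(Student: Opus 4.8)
The plan is to prove Proposition~\ref{prop:2D_dualcert} by transcribing the one-dimensional construction behind Proposition~\ref{prop:continuous_dualcert} into a tensor-product form. Let $\bar K(r) = K(r_1)\,K(r_2)$ be the product of two copies of the squared Fej\'er kernel \eqref{def:kernel}, so that $\bar K$ and all its partial derivatives are trigonometric polynomials with Fourier support in $\{-f_c,\dots,f_c\}^2$. Since we are in the real-valued regime, I would look for a \emph{real} polynomial of the form
\[
q(r) = \sum_{r_j \in T}\Bigl( \alpha_j\, \bar K(r - r_j) + \beta_{j,1}\, \partial_1 \bar K(r - r_j) + \beta_{j,2}\, \partial_2 \bar K(r - r_j)\Bigr),
\]
with $\alpha,\beta_1,\beta_2 \in \R^{|T|}$ fixed by the interpolation conditions $q(r_k) = v_k$ and $\nabla q(r_k) = 0$ for all $r_k \in T$. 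This is a square system of size $3|T|$; the vanishing-gradient conditions are the 2D counterpart of \eqref{eq:interp2} and are what will force $|q|$ to have a strict local maximum at each $r_k$.

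\textbf{Step 1 (solvability and coefficient control).} Writing the system in block form, with blocks assembled from $\bar K$, its first partials, and its second partials evaluated on the differences $r_j - r_k$, I would prove invertibility by exactly the Schur-complement-plus-diagonal-dominance mechanism of Section~\ref{subsec:fejersq_coeffs}: the diagonal blocks dominate the off-diagonal ones in $\|\cdot\|_\infty$ matrix norm, so \eqref{infinfnorminv} yields uniform bounds on $\|\alpha\|_\infty,\|\beta_1\|_\infty,\|\beta_2\|_\infty$ and forces $\alpha_k$ close to $v_k$. The only genuinely new ingredients needed are 2D versions of Lemmas~\ref{lemma:fejersq_bounds} and~\ref{lemma:bound_sum_kernel}: the pointwise decay of $\bar K^{(\ell)}$ is immediate from the 1D estimates because the kernel factors, but the estimate for $\sum_{r_i \in T \setminus \{\tau\}} |\bar K^{(\ell)}(r - r_i)|$ requires more care since $T$ is not contained in a grid. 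Here I would organize the sum by sorting the points of $T$ into coordinate strips of width $\Delta$ and summing a one-dimensional tail series within each strip against a one-dimensional tail series across strips; the $\ell^\infty$-separation of $T$ is precisely what keeps every such double sum finite and small.

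\textbf{Step 2 (the bound $|q| < 1$ off $T$).} Fix $\tau \in T$ and, negating $v$ if necessary, assume $v_\tau = 1$, so $q(\tau) = 1$ and $\nabla q(\tau) = 0$. On a box of side a fixed multiple of $\lambda_c$ around $\tau$ I would show the Hessian $\nabla^2 q$ is negative definite, via the Gershgorin-type test $\partial_{11}q + |\partial_{12}q| < 0$ and $\partial_{22}q + |\partial_{12}q| < 0$, bounding the three entries by Taylor expansions of $K,K',K''$ around $0$ in each variable (as in \eqref{boundK_near}--\eqref{boundKp3_near}) combined with the Step-1 coefficient bounds and sum estimates; together with $q(\tau)=1$, $\nabla q(\tau)=0$ this gives $q(r) < 1$ there. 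Outside that box but within half the distance from $\tau$ to its nearest neighbor, and then in the far field, I would bound $|q(r)|$ directly by $\|\alpha\|_\infty$ and $\max_d\|\beta_d\|_\infty$ times explicit sums of $|\bar K|$ and $|\partial_d \bar K|$ over $T$, and verify numerically---in tables analogous to Tables~\ref{table:F_tC}--\ref{table:L_bounds}---that the bound stays below $1$ under $\Delta \ge 2.38\,\lambda_c$ and $f_c \ge 512$. The complex-valued version of the theorem is obtained by reinstating imaginary parts and redoing the near-spike estimate along rays through $\tau$ with the $q_R,q_I$ split of Lemma~\ref{lemma:concavity}, at the cost of a slightly larger separation constant.

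\textbf{Main obstacle.} The crux is the two-dimensional summation estimate and fitting it inside the target constant. In the plane each spike has many more close neighbors and the product-kernel tails summed over a separated planar set decay effectively more slowly, so the diagonal-dominance margins in Step~1 and the off-support margins in Step~2 are thinner than in one dimension; getting every relevant $\|\cdot\|_\infty$ quantity strictly below $1$ simultaneously is exactly what forces the weaker constant $2.38$ (rather than an optimized value) and the larger requirement $f_c \ge 512$, the latter being needed so that the tail series arising in the strip decomposition are negligible. Beyond that single estimate, the proof is a faithful, if lengthy, transcription of the one-dimensional computations.
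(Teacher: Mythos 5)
Your proposal follows essentially the same route as the paper's proof. You tensorize the squared Fej\'er kernel, interpolate $v$ with $K^{2\mathrm{D}}$ and its two first partials while forcing $\nabla q = 0$ on $T$, control the coefficients by the Schur-complement/diagonal-dominance mechanism of Section~\ref{subsec:fejersq_coeffs}, establish negative definiteness of the Hessian in a small box around each $\tau\in T$, and then bound $|q|$ directly in the far field with numerical tables. Your proposed strip decomposition for the two-dimensional tail sums is the same device the paper implements via Lemma~\ref{lem:simple} together with a split into coordinate bands and quadrants; the $\ell^\infty$-separation hypothesis is indeed exactly what makes that bookkeeping work. Your Gershgorin test for the Hessian is a cosmetic variant of the paper's trace-plus-determinant test and is compatible with the same numerical bounds.

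One genuine (though small and easily repaired) gap: inside the near-$\tau$ box, local concavity together with $q(\tau)=1$ and $\nabla q(\tau)=0$ only yields $q(r)<1$; the conclusion of the proposition is $|q(r)|<1$, so in the real-valued case you must also verify $q(r)>-1$ there. The paper does this explicitly at the end of the proof of Lemma~\ref{lemma:q_bound_2D_conc}, obtaining $q(r)\ge 0.6447$ on the box, and your Step~2 as written silently skips it. Adding that lower bound (by the same kind of estimate you already use for $q_{(2,0)}$ and $q_{(1,1)}$) would close the argument.
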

The proof is similar to that of Proposition
\ref{prop:continuous_dualcert} in that we shall construct the dual
polynomial $q$ by interpolation with a low-pass, yet rapidly decaying
two-dimensional kernel. Here, we consider
\begin{align*}
\Ktwo\brac{r} = K\brac{x}K\brac{y} \text{,}
\end{align*}
obtained by tensorizing the square of the Fejer kernel
\eqref{def:kernel}. (For reference, if we had data in which $y(k)$ is
observed if $\|k\|_2 \le f_c$, we would probably use a radial kernel.)
Just as before, we have fixed $K$ somewhat arbitrarily, and it would
probably be possible to optimize this choice to improve the constant
factor in the expression for the minimum distance.  We interpolate the
sign pattern using $\Ktwo$ and its partial derivatives, denoted by
$\Ktwo_{\brac{1,0}}$ and $\Ktwo_{\brac{0,1}}$ respectively, as
follows:
\[
q\brac{r} = \sum_{r_j \in T} \bigl[ \alpha_j \Ktwo\brac{r-r_j} + \beta_{1j}
\Ktwo_{\brac{1,0}}\brac{r-r_j}+\beta_{2j} \Ktwo_{\brac{0,1}}\brac{r-r_j}\bigr], 
\]
and we fit the coefficients so that for all $t_j \in T$,
\begin{equation}
  \label{eq:cond2_q_2D}
\begin{aligned}
  q\brac{t_j} & = v_j, \\
\nabla q\brac{t_j} & = 0.  
\end{aligned}
\end{equation}
The first intermediate result shows that the dual polynomial is well
defined, and also controls the magnitude of the interpolation
coefficients.
\begin{lemma}
\label{lemma:coeff_bounds_2D}
Under the hypotheses of Proposition \ref{prop:2D_dualcert}, there are
vectors $\alpha$, $\beta_1$ and $\beta_2$ obeying \eqref{eq:cond2_q_2D} and 
\begin{equation}
  \label{bound_alpha_2D}
\begin{aligned}
  \normInf{\alpha}  & \leq 1 + 5.577 \; 10^{-2},\\
  \normInf{\beta} & \leq 2.930 \;
  10^{-2}\,\lambda_c, 
\end{aligned}
\end{equation}
where $\beta = (\beta_1, \beta_2)$. Further, if $v_1=1$,
\begin{align}
  \alpha_1 & \geq 1- 5.577 \; 10^{-2} \label{bound_alpha1_2D}\text{.}
\end{align} 
\end{lemma}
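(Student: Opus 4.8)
The plan is to reproduce the proof of Lemma~\ref{lemma:fejersq_coeffs} almost step for step, the only genuinely new ingredient being a two–dimensional counterpart of Lemma~\ref{lemma:bound_sum_kernel}. First I would set up the interpolation conditions \eqref{eq:cond2_q_2D} as a linear system. Writing $\brac{D_{(0,0)}}_{jm}=\Ktwo\brac{r_j-r_m}$, $\brac{D_{(1,0)}}_{jm}=\Ktwo_{\brac{1,0}}\brac{r_j-r_m}$, and similarly for $D_{(0,1)},D_{(2,0)},D_{(1,1)},D_{(0,2)}$, these conditions read
\[
\MAT{D_{(0,0)} & D_{(1,0)} & D_{(0,1)} \\ D_{(1,0)} & D_{(2,0)} & D_{(1,1)} \\ D_{(0,1)} & D_{(1,1)} & D_{(0,2)}}\MAT{\alpha \\ \beta_1 \\ \beta_2}=\MAT{v \\ 0 \\ 0}.
\]
Since $K$ is even, the tensor structure gives $D_{(1,0)}^{\transp}=-D_{(1,0)}$ and $D_{(0,1)}^{\transp}=-D_{(0,1)}$, while $D_{(0,0)},D_{(2,0)},D_{(0,2)},D_{(1,1)}$ are symmetric; moreover $D_{(0,0)}$ has unit diagonal, $D_{(2,0)}$ and $D_{(0,2)}$ have constant diagonal $K''(0)=-\pi^2\fc\brac{\fc+4}/3$, and $D_{(1,1)}$ has vanishing diagonal because $K'(0)=0$. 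Grouping the second–order blocks into $D_2=\MAT{D_{(2,0)} & D_{(1,1)} \\ D_{(1,1)} & D_{(0,2)}}$ and writing $D_1=\MAT{D_{(1,0)} & D_{(0,1)}}$, the system takes exactly the one–dimensional block form $\MAT{D_{(0,0)} & D_1 \\ -D_1^{\transp} & D_2}$; thus $\alpha,\beta_1,\beta_2$ are well defined (and real) as soon as $D_2$ and the Schur complement $C=D_{(0,0)}+D_1D_2^{-1}D_1^{\transp}$ are invertible, with $\alpha=C^{-1}v$ and $\brac{\beta_1,\beta_2}=D_2^{-1}D_1^{\transp}C^{-1}v$.

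The quantitative control of $D_2$ and $C$ will rest on a two–dimensional version of Lemma~\ref{lemma:bound_sum_kernel}: for an $\ell_\infty$–separated family $T$ with separation at least $\Delta$, and for $P$ any partial derivative of $\Ktwo$ of order at most $3$, a bound of the form $\sum_{r_i\in T\setminus\{0\}}\abs{P\brac{r-r_i}}\le G_P\brac{\Delta,r}$ with an explicit, monotone $G_P$. I would obtain this from the factorization $\Ktwo=K\otimes K$ and the monotone one–dimensional bounds $B_\ell$ of Lemma~\ref{lemma:fejersq_bounds}: since each $\Delta\times\Delta$ cell of the plane contains at most one point of $T$, the 2D sum is dominated by sums of products of the one–dimensional quantities $F_\ell\brac{\Delta,\cdot}$ of Lemma~\ref{lemma:bound_sum_kernel}. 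Plugging the values at $\Delta=\Deltamin$, $r=0$ into the Neumann–series criterion \eqref{inf_invertibility} applied to $D_2$ gives $\normInfInf{\abs{K''(0)}\Id+D_2}<\abs{K''(0)}$, so $D_2$ is invertible with $\normInfInf{D_2^{-1}}=O\brac{\fc^{-2}}$; then $\normInfInf{\Id-C}\le\normInfInf{\Id-D_{(0,0)}}+\normInfInf{D_1}\,\normInfInf{D_1^{\transp}}\,\normInfInf{D_2^{-1}}$ with $\normInfInf{D_1},\normInfInf{D_1^{\transp}}=O\brac{\fc}$, which is $<1$ and yields $\normInfInf{C^{-1}}\le\brac{1-\normInfInf{\Id-C}}^{-1}\le 1+5.577\;10^{-2}$. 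The bound on $\normInf{\alpha}$ is then immediate, $\normInf{\beta}\le\normInfInf{D_2^{-1}}\normInfInf{D_1^{\transp}}\normInfInf{C^{-1}}\le 2.930\;10^{-2}\,\lambda_c$ gives \eqref{bound_alpha_2D}, and for $v_1=1$, $\alpha_1=1-\bigl[(\Id-C^{-1})v\bigr]_1\ge 1-\normInfInf{C^{-1}}\,\normInfInf{\Id-C}\ge 1-5.577\;10^{-2}$, which is \eqref{bound_alpha1_2D}. Note that, $v$ being real, no splitting into real and imaginary parts is needed, unlike in the one–dimensional complex case.

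The main obstacle is precisely the two–dimensional summation bound: one must check that over the worst–case configuration of $\ell_\infty$–separated points in the plane the tensor–kernel tails are small enough that all of $\normInfInf{\Id-D_{(0,0)}}$, $\normInfInf{D_1}$ and $\normInfInf{\abs{K''(0)}\Id+D_2}$ stay in the ranges needed to close the Schur argument. This is where the stronger hypothesis $\fc\ge\minmTwoD$ is used, since the constants in the one–dimensional bounds $B_\ell$ only approach their stated asymptotic values for $\fc$ sufficiently large and two copies of those bounds are now multiplied together. Once these numerical inputs are in hand, the remaining computation is identical to that of Section~\ref{subsec:fejersq_coeffs}.
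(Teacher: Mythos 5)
Your proposal is correct and follows essentially the same route as the paper: write the interpolation constraints as the block system $\MAT{D_{(0,0)} & D_1 \\ -D_1^{\transp} & D_2}$, bound the two-dimensional off-diagonal sums using the tensor structure $\Ktwo = K\otimes K$ together with the monotone one-dimensional tail bounds $B_\ell$ of Lemma~\ref{lemma:fejersq_bounds}, and close via the same Neumann/Schur argument as in Section~\ref{subsec:fejersq_coeffs}. You correctly flag the genuinely new ingredient as the 2D summation bound; the paper supplies it with a simple pigeonhole counting lemma (Lemma~\ref{lem:simple}) and a partition of the plane into bands near the axes and the remaining quadrants, yielding bounds of the form $\normInfInf{\Id-D_{(0,0)}}\le 4\sum_{j\ge 1}B_0(j\Delta)+4\bigl[\sum_{j\ge 0}B_0(\Delta/2+j\Delta)\bigr]^2$. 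Your phrase ``sums of products of $F_\ell$'' is a little loose---the $\ell_\infty$ separation in the plane does not make the two coordinate projections separately $\Delta$-separated, so one cannot literally multiply two $F_\ell$'s---but the mechanism you invoke (monotone 1D tails plus a pigeonhole over $\Delta\times\Delta$ cells) is exactly the right one. The only substantive deviation from the paper is that you invert the $2|T|\times 2|T|$ block $D_2$ by a single Neumann step, bounding $\normInfInf{|K''(0)|\Id+D_2}\le\normInfInf{|K''(0)|\Id+D_{(2,0)}}+\normInfInf{D_{(1,1)}}$, whereas the paper first Schur-complements out $D_{(0,2)}$ inside $D_2$ (forming $S_1 = D_{(2,0)}-D_{(1,1)}D_{(0,2)}^{-1}D_{(1,1)}$ and an auxiliary $S_2$) and only then forms the outer Schur complement $S_3$. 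The nested version trades the summand $\normInfInf{D_{(1,1)}}$ for the much smaller $\normInfInf{D_{(1,1)}}^2\normInfInf{D_{(0,2)}^{-1}}$, giving a marginally tighter bound on $\normInfInf{D_2^{-1}}$, and it also produces the explicit block expression for $D_2^{-1}$ used to bound $\beta_1$ and $\beta_2$ individually; but with the paper's numerical inputs your shorter path lands essentially at the same constants $1+5.577\cdot 10^{-2}$ and $2.930\cdot 10^{-2}\lambda_c$, so the simplification is harmless. Your observation that no real/imaginary splitting is needed because $v$ is real is also correct.
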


Proposition \ref{prop:2D_dualcert} is now a consequence of the two
lemmas below which control the size of $q$ near a point $r_0 \in
T$. Without loss of generality, we can take $r_0 = 0$.
\begin{lemma}
\label{lemma:q_bound_2D_conc}
Assume $0 \in T$. Then under the hypotheses of Proposition
\ref{prop:2D_dualcert}, $\abs{q\brac{r}}<1$ for all $0 < |r| \leq
\rOne \, \lambda_c$.
\end{lemma}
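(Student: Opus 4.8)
The plan is to adapt the proof of Lemma~\ref{lemma:concavity}. Since $q$ is now a function of two variables I would establish the stronger statement that its Hessian is negative definite throughout the closed $\ell_\infty$-ball $B=\{r:|r|_\infty\le\rOne\,\lambda_c\}$, quantitatively
\[
u^\transp\nabla^2 q(r)\,u\le -c\,\fc^2\,\norm{u}_2^2\qquad\text{for all }r\in B,\ u\in\R^2,
\]
with $c>0$ a numerical constant. Granting this, assume without loss of generality that $r_0=0\in T$ with $v_1=1$, so that Lemma~\ref{lemma:coeff_bounds_2D} and the interpolation conditions \eqref{eq:cond2_q_2D} give $q(0)=1$ and $\nabla q(0)=0$; note that $q$ is real-valued because $v$ and $\Ktwo$ are real. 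For $r$ with $0<|r|_\infty\le\rOne\,\lambda_c$, write $r=\norm{r}_2\,u$ with $u=r/\norm{r}_2$ and apply Taylor's theorem with integral remainder to $t\mapsto q(tu)$:
\[
q(r)=1+\int_0^{\norm{r}_2}(\norm{r}_2-s)\,u^\transp\nabla^2 q(su)\,u\,\mathrm{d}s .
\]
Each point $su$, $0\le s\le\norm{r}_2$, lies on the segment $[0,r]\subset B$, so the integrand is $\le -c\,\fc^2$, giving $q(r)\le 1-\tfrac{c}{2}\fc^2\norm{r}_2^2<1$; and since the integrand is $O(\fc^2)$ in absolute value while $\norm{r}_2^2\le 2\rOne^2\lambda_c^2$ is tiny, the same identity forces $q(r)>0>-1$. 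Thus $|q(r)|<1$ on the punctured ball, as claimed.

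It remains to control the Hessian, for which I would bound the directional second derivative $u^\transp\nabla^2 q(r)\,u$ directly. Differentiating the interpolation formula for $q$ twice in the direction $u$ and using $\Ktwo(x,y)=K(x)K(y)$, one finds that $u^\transp\nabla^2 q(r)\,u$ is a sum over the nodes $r_j=(x_j,y_j)\in T$ of $\alpha_j\bigl(u_1^2K''(x-x_j)K(y-y_j)+2u_1u_2K'(x-x_j)K'(y-y_j)+u_2^2K(x-x_j)K''(y-y_j)\bigr)$ together with the analogous $\beta_{1j}$- and $\beta_{2j}$-weighted third-order terms. The node term $j=1$ dominates: $\alpha_1\bigl(u_1^2K''(x)K(y)+2u_1u_2K'(x)K'(y)+u_2^2K(x)K''(y)\bigr)\approx\alpha_1K''(0)$ is strongly negative, bounded above using $\alpha_1\ge 1-5.577\cdot10^{-2}$ together with the near-origin expansions \eqref{boundK_near}--\eqref{boundKp3_near} evaluated at $\rOne\lambda_c$ (these keep $K''(x)\le-(\text{const})\fc^2$ and $K(y)$ near $1$ for $|x|,|y|\le\rOne\lambda_c$ since $\rOne^2<2/\pi^2$, while forcing $|K'(x)K'(y)|\le(\tfrac{\pi^2}{3})^2\rOne^2\fc^2$, a small cross-term contribution). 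The remaining node terms carry a factor $\norm{\beta}_\infty\le 2.930\cdot10^{-2}\lambda_c$ against quantities $O(\fc^3)$, hence are $O(\fc^2)$ with a small constant. Finally, the terms $j\ge2$ are estimated by a two-dimensional counterpart of Lemma~\ref{lemma:bound_sum_kernel}: the separation $\Delta\ge\Deltamin=\optvaluetwoD\,\lambda_c$ forces $\max(|x_j|,|y_j|)\ge\Delta$ for every $r_j\ne0$ and at most $(\lfloor s/\Delta\rfloor+1)^2$ nodes in any axis-parallel square of side $s$, so grouping the nonzero nodes into square annuli $k\Delta\le\max(|x_j|,|y_j|)<(k+1)\Delta$ (with $O(k)$ nodes each) and using, for each node, that one of $|x-x_j|,|y-y_j|$ exceeds $\approx k\Delta$ while the complementary one-dimensional factor is at most a fixed multiple of a power of $\fc$, Lemma~\ref{lemma:fejersq_bounds} bounds the whole sum by $O(\fc^{a+b})\,(\fc\Delta)^{-4}\sum_{k\ge1}k^{-3}$, which is finite with a small constant since $\fc\Delta\ge\optvaluetwoD$ (here $\fc\ge\minmTwoD$ plays the role that $\fc\ge\minm$ plays in one dimension, validating the tail estimates). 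Combining these contributions yields $u^\transp\nabla^2 q(r)\,u\le -c\,\fc^2$ uniformly in $r\in B$ and in the unit vector $u$.

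The step I expect to be the real obstacle is precisely the constant bookkeeping in the last paragraph, and in particular controlling the genuinely new cross term $2u_1u_2K'(x)K'(y)$ and its associated correction sums. Its node value is only of order $\rOne^2\fc^2$, so the scheme closes only when $\rOne$ is small enough that this term, together with the two-dimensional sums over the neighbouring nodes (which are larger than their one-dimensional analogues because a node may be ``beside'', ``above'', or placed diagonally relative to the origin), stays comfortably below the negative diagonal contribution $\alpha_1K''(0)$; this is presumably how the value $\rOne=0.2447$ is selected, with the annular region $|r|_\infty>\rOne\lambda_c$ left to the companion lemma and the decay bounds of Lemma~\ref{lemma:fejersq_bounds}. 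A related subtlety is that the worst-case node configuration depends on $u$ --- axis-aligned $u$ makes the side nodes dominate, diagonal $u$ makes the cross term dominate --- so it is cleaner to bound $u^\transp\nabla^2 q(r)\,u$ as a single quantity, as above, rather than bounding $q_{xx},q_{xy},q_{yy}$ separately and recombining (the naive recombination via $q_{xx},q_{yy}\le-M\fc^2$, $|q_{xy}|\le m\fc^2$ requires $m<M$, which is tight).
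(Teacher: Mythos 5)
Your approach is essentially the paper's: show $\nabla^2 q$ is negative definite on the $\ell_\infty$-ball of radius $\rOne\lambda_c$, use $q(0)=1$, $\nabla q(0)=0$ to get $q<1$, and separately verify $q>-1$. Two small discrepancies are worth noting between your prediction and what the paper actually does, and one point where your sketch over-estimates how much is gained by your preferred variant.

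First, the paper uses exactly the recombination you call ``naive'': it bounds $q_{(2,0)}(r),\ q_{(0,2)}(r)\le -1.175\,\fc^2$ and $|q_{(1,1)}(r)|\le 1.059\,\fc^2$ on the ball, and then invokes the trace/determinant criterion. The margin you predicted would be tight is indeed tight --- $1.175$ vs $1.059$ --- but it does close. Your proposal to bound the single quantity $u^\transp\nabla^2 q(r)u$ is a legitimate alternative presentation, but as written it buys nothing: if all you do is bound each node's contribution to $u_1^2 q_{xx}+2u_1u_2 q_{xy}+u_2^2 q_{yy}$ term by term, the Cauchy--Schwarz step $2|u_1u_2|\le u_1^2+u_2^2$ reduces you to the same $-(M-m)\fc^2$ bound. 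The directional route would only beat the entry-wise one if you exploited sign correlations between the diagonal and off-diagonal corrections, which your sketch does not attempt.

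Second, for the lower bound $q>-1$ the paper does not use the Taylor integral remainder; it lower-bounds $q(r)$ directly from the interpolation formula, obtaining $q(r)\ge 0.6447$ via $\alpha_1 \Ktwo(r)$ minus the sum of the correction terms. Your Taylor route also works (and gives a comparable constant), so this is a stylistic difference. Third, your tail estimate via square annuli with $O(k)$ nodes each is coarser than the paper's device: the paper proves a clean lemma (Lemma~\ref{lem:simple}) showing that the separation condition makes $r_j\mapsto(\lfloor x_j/\Delta\rfloor,\lfloor y_j/\Delta\rfloor)$ injective, which immediately tensorizes the one-dimensional sum bounds into product form and avoids any explicit annulus counting. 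Your version should still converge, but the constants would need more care and the paper's decomposition into ``band'' regions ($|x_j|$ or $|y_j|\le\Delta/2$) plus ``quadrant'' regions is the cleaner bookkeeping. None of this is a gap --- the plan is sound --- but you guessed wrong about which of the two equivalent Hessian criteria the paper uses, and you slightly oversell how much the directional formulation helps with the tightness.
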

\begin{lemma}
\label{lemma:q_bound_2D}
Assume $0 \in T$.  Then under the conditions of Proposition
\ref{prop:2D_dualcert}, $\abs{q\brac{r}}<1$ for all $r$ obeying $\rOne
\, \lambda_c \le \abs{r} \leq \Delta/2$. This also holds for all $r$
that are closer to $0 \in T$ (in the $\infty$ distance) than to any
other element in $T$.
\end{lemma}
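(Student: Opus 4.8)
The plan is to follow the structure of the proof of Lemma~\ref{lemma:boundq}, exploiting the tensor structure $\Ktwo(r)=K(x)K(y)$ for $r=(x,y)$ together with the one-dimensional decay estimates of Lemma~\ref{lemma:fejersq_bounds}. Since the sign pattern $v$ is adversarial, I would first invoke the triangle inequality and the coefficient bounds of Lemma~\ref{lemma:coeff_bounds_2D} to obtain
\[
\abs{q(r)} \le \normInf{\alpha}\sum_{r_j\in T}\abs{\Ktwo(r-r_j)} + \normInf{\beta}\sum_{r_j\in T}\Bigl(\abs{\Ktwo_{(1,0)}(r-r_j)}+\abs{\Ktwo_{(0,1)}(r-r_j)}\Bigr),
\]
and then use $\Ktwo_{(1,0)}(r)=K'(x)K(y)$ and $\Ktwo_{(0,1)}(r)=K(x)K'(y)$, so that everything reduces to estimating sums of products of $K$ and $K'$ at shifted arguments. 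The region $0<\abs{r}\le\rOne\lambda_c$ is already handled by Lemma~\ref{lemma:q_bound_2D_conc}, so it remains to treat $\rOne\lambda_c\le\abs{r}\le\Delta/2$.

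The key technical step is a two-dimensional analogue of Lemma~\ref{lemma:bound_sum_kernel}: a bound on $\sum_{r_j\in T\setminus\{0\}}\abs{K^{(\ell)}(x-x_j)}\,\abs{K^{(m)}(y-y_j)}$ for $\ell,m\in\{0,1\}$. Because $0\in T$ and the $\ell_\infty$-separation is at least $\Delta$, every nonzero $r_j\in T$ satisfies $\max(\abs{x_j},\abs{y_j})\ge\Delta$; I would split $T\setminus\{0\}$ into the points with $\abs{x_j}\ge\Delta$ and those with $\abs{y_j}\ge\Delta$ (counting the overlap twice only weakens the inequality). For the first family, group the points by the vertical strip $x_j\in[m\Delta,(m+1)\Delta)$: inside such a strip the points are $\Delta$-separated in the $y$-coordinate, so $\sum_{y_j}\abs{K^{(m)}(y-y_j)}$ is a one-dimensional sum controlled, via Lemma~\ref{lemma:fejersq_bounds}, by a quantity of the same shape as $F_m(\Delta,\cdot)$ from Lemma~\ref{lemma:bound_sum_kernel}. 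Summing the strip contributions against the fourth-order decay of $\max_{x_j}\abs{K^{(\ell)}(x-x_j)}$ in $m$ produces a convergent series that is handled exactly as the tail $\sum_{j\ge21}j^{-4}$ was in Lemma~\ref{lemma:bound_sum_kernel}. The term $r_j=0$ is kept separate and bounded directly by $\abs{K(x)}\,\abs{K(y)}\le\abs{K(y)}$ (or $\le\abs{K(x)}$), using that the larger of the two coordinates has absolute value at least $\rOne\lambda_c$.

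To finish, I would split the range of $r$ into an inner shell $\rOne\lambda_c\le\abs{r}\le\rTwo\lambda_c$ and an outer shell $\rTwo\lambda_c\le\abs{r}\le\Delta/2$. On the inner shell the $r_j=0$ term is estimated by the power-series bounds \eqref{KboundUp} applied to each coordinate of $r$, which are monotone there, and the remaining terms by the decay bounds of Lemma~\ref{lemma:fejersq_bounds}; on the outer shell the decay bounds apply to every term. Using the monotonicity of all these quantities in $\Delta$ and in $\abs{r}$, the problem reduces to evaluating the resulting expressions at $\Delta=\Deltamin$ and at the endpoints of the two shells and checking that the total stays below $1$ (a short numerical table analogous to Tables~\ref{table:F_tC} and~\ref{table:L_bounds}). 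The extension to every $r$ closer to $0$ than to any other spike is immediate: $\Delta/2$ enters only as a lower bound on the distance from $r$ to the remaining spikes, and the Voronoi condition gives the same bound $\abs{r-r_j}\ge\Delta/2$, so no estimate changes, exactly as in one dimension. The main obstacle I expect is organizing this double sum: because the support is an arbitrary $\ell_\infty$-separated set rather than a grid, one must group the points so that the fourth-order decay of $K$ in each variable beats the quadratic growth in the number of points in large squares, while keeping the constants small enough that the final bound remains below $1$; formulating the strip decomposition and the monotonicity claims that justify evaluating everything at $\Delta=\Deltamin$ is where most of the effort lies.
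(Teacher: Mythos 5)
Your proposal follows the same overall strategy as the paper: bound $|q|$ via the triangle inequality and the coefficient estimates of Lemma~\ref{lemma:coeff_bounds_2D}, exploit the tensor structure $\Ktwo(r)=K(x)K(y)$, control the tail sums over $T\setminus\{0\}$ using the one-dimensional decay bounds of Lemma~\ref{lemma:fejersq_bounds}, and finish by a numerical check on a partition of $[\rOne\lambda_c,\Delta/2]$ into subintervals, together with the observation that the Voronoi condition gives the same estimates beyond $\Delta/2$ by monotonicity of $B_0$ and $B_1$. The one place where you reinvent rather than reuse is the two-dimensional tail sum: the paper has already built the required bound, the quantity $Z_{(\ell_1,\ell_2)}(|r|)$ defined in Section~\ref{subsec:coeff_bounds_2D}, via a bands-plus-quadrants decomposition (points with one coordinate at most $\Delta/2$ in absolute value, treated as a one-dimensional sum with the other factor bounded by its global maximum, plus Lemma~\ref{lem:simple} for the four quadrants); the proof of the present lemma then simply plugs $Z_{(0,0)}$ and $Z_{(0,1)}$ into the triangle inequality, alongside a function $W(r)$ built from the near-origin power-series bounds \eqref{KboundUp}. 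Your alternative strip decomposition is a legitimate route to the same kind of estimate, but the split of $T\setminus\{0\}$ into points with $|x_j|\ge\Delta$ and points with $|y_j|\ge\Delta$ double-counts every point far from both axes, inflating the far-field contribution by up to a factor of two; since several of the paper's intermediate bounds land quite close to $1$ (e.g.\ $0.9958$ on the first shell), that slack would likely need to be removed before the numerics close. Aside from this quantitative point and from not noticing that the $Z$ machinery is already set up, the plan is correct and matches the paper's approach.
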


\subsection{Proof of Lemma \ref{lemma:coeff_bounds_2D}}
\label{subsec:coeff_bounds_2D}

To express the interpolation constraints in matrix form, define
\begin{align*}
\brac{D_0}_{jk} & = \Ktwo \brac{r_j-r_k}, & \qquad 
\brac{D_{\brac{1,0}}}_{jk} & = \Ktwo_{\brac{1,0}}\brac{r_j-r_k}, & \qquad 
\brac{D_{\brac{0,1}}}_{jk} & = \Ktwo_{\brac{0,1}}\brac{r_j-r_k},\\
\brac{D_{\brac{1,1}}}_{jk} & =  \Ktwo_{\brac{1,1}}\brac{r_j-r_k}, & \qquad 
\brac{D_{\brac{2,0}}}_{jk} & = \Ktwo_{\brac{2,0}}\brac{r_j-r_k},  & \qquad 
\brac{D_{\brac{0,2}}}_{jk} & = \Ktwo_{\brac{0,2}}\brac{r_j-r_k} \text{.}
\end{align*}
To be clear, $\Ktwo_{(\ell_1,\ell_2)}$ means that we are taking
$\ell_1$ and $\ell_2$ derivatives with respect to the first and second
variables. Note that $D_0$, $D_{\brac{2,0}}$, $D_{\brac{1,1}}$ and
$D_{\brac{0,2}}$ are symmetric, while $D_{\brac{1,0}}$ and
$D_{\brac{0,1}}$ are antisymmetric, because $K$ and $K''$ are even
while $K'$ is odd. The interpolation coefficients are solutions to
\begin{align}
\MAT{D_0 & D_{\brac{1,0}} & D_{\brac{0,1}}\\ D_{\brac{1,0}} & D_{\brac{2,0}} & D_{\brac{1,1}} \\ D_{\brac{0,1}} & D_{\brac{1,1}} & D_{\brac{0,2}}} \MAT{\alpha \\ \beta_1 \\ \beta_2} & = \MAT{v\\0\\0} \quad \Leftrightarrow \quad 
\MAT{D_0 & -\tilde{D}_1^T \\ \tilde{D}_1 & \tilde{D}_2}\MAT{\alpha \\ \beta} =\MAT{v\\0}\text{,}\label{equation2D}
\end{align}
where we have defined two new matrices $\tilde{D}_1$ and
$\tilde{D}_2$. The norm of these matrices can be bounded by leveraging
1D results.
% using the
% minimum distance assumption \eqref{min_dist_condition_2D} and Lemma
% \ref{lemma:kernel_bounds_2D}. 
For instance, consider
\[
\normInfInf{\Id- D_0} = \sum_{r_j \in T\setminus \{0\}} \abs{\Ktwo
  \brac{r_j}} .% \leq \sum_{r_j \in T\setminus \{0\}} B_{0}\brac{x_j} B_0\brac{y_j}.
\]
We split this sum into different regions corresponding to whether
$|x_j|$ or $|y_j| \le \Delta/2$ and to $\min(|x_j|, |y_j|) \ge
\Delta/2$. First,
\[
\sum_{r_j \neq 0 \, : \, |y_j| < \Delta/2} \abs{\Ktwo
  \brac{r_j}} \le \sum_{r_j \neq 0 \, : \, |y_j| < \Delta/2}
B_{0}\brac{x_j} \le 2 \sum_{j \ge 1}
B_0(j\Delta).
\]
This holds because the $x_j$'s must be at least $\Delta$ apart,
$B_0$ is nonincreasing and the absolute value of $\Ktwo$ is bounded by one. The region $\{r_j \neq 0, \, |x_j| <
\Delta/2\}$ yields the same bound. Now observe that Lemma
\ref{lem:simple} below combined with Lemma \ref{lemma:fejersq_bounds} gives
\[
\sum_{r_j \neq 0 \, : \, \min(x_j, y_j) \ge \Delta/2} \abs{\Ktwo
  \brac{r_j}} \le \sum_{r_j \neq 0 \, : \, \min(x_j, y_j) \ge \Delta/2} B_{0}\brac{x_j}
B_0\brac{y_j} \le \Bigl[\sum_{j_1 \ge 0} B_0(\Delta/2 + j_1 \Delta)\Bigr]^2. 
\]
To bound this expression, we apply the exact same technique as for
\eqref{sumB_formula} in Section \ref{subsec:proofs_lemmas}, starting at
$j=0$ and setting $j_0=20$. This gives
\begin{equation}
\label{Id_F_bound_2D}
\normInfInf{\Id- D_0} \le 4 \sum_{j \ge 1} B_0(j\Delta) + 4
\Bigl[\sum_{j \ge 0} B_0(\Delta/2 + j \Delta)\Bigr]^2 \le 4.854 \; 10^{-2}.
\end{equation}

\begin{lemma}
  \label{lem:simple}
  Suppose $x \in \R_+^2$ and $f(x) = f_1(x_1) f_2(x_2)$ where both
  $f_1$ and $f_2$ are nonincreasing. Consider any collection of points
  $\{x_j\} \subset \R^2_+$ for which $|x_i - x_j| \ge 1$. Then 
\[
\sum_j f(x_j) \le \sum_{j_1 \ge 0} f_1(j_1) \sum_{j_2 \ge 0} f_2(j_2).
\]
\end{lemma}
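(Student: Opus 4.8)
The plan is to turn the $\ell_\infty$-separation hypothesis into a packing statement: the points $\{x_j\}$ can be injectively ``snapped'' to the nonnegative integer lattice in a way that only increases each summand, after which the factorized form of $f$ makes the resulting double sum split. First I would assign to each index $j$ the lattice point $k(j) = (\lfloor x_{j,1}\rfloor, \lfloor x_{j,2}\rfloor)$, which lies in $\mathbb{Z}_{\ge 0}^2$ since $x_j \in \R_+^2$. The key observation is that $j \mapsto k(j)$ is injective: if $k(i) = k(j)$, then $x_{i,1}$ and $x_{j,1}$ lie in a common half-open unit interval $[m_1, m_1+1)$, and similarly $x_{i,2}, x_{j,2} \in [m_2, m_2+1)$, so $|x_i - x_j| < 1$ in the $\ell_\infty$ metric, contradicting $|x_i - x_j| \ge 1$. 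This is the one place where the separation hypothesis is used.

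Next I would invoke monotonicity and nonnegativity of $f_1$ and $f_2$ (in the applications $f_1 = f_2 = B_0$, which is nonnegative and nonincreasing by Lemma~\ref{lemma:fejersq_bounds}): since $\lfloor x_{j,\ell}\rfloor \le x_{j,\ell}$, we have $f_\ell(x_{j,\ell}) \le f_\ell(\lfloor x_{j,\ell}\rfloor)$, hence $f(x_j) \le f_1(\lfloor x_{j,1}\rfloor)\, f_2(\lfloor x_{j,2}\rfloor)$. Summing over $j$, using injectivity of $k(\cdot)$ together with $\mathrm{range}(k) \subseteq \mathbb{Z}_{\ge 0}^2$ and the nonnegativity of $f_1 f_2$ to include the missing lattice points,
\[
\sum_j f(x_j) \le \sum_j f_1(\lfloor x_{j,1}\rfloor)\, f_2(\lfloor x_{j,2}\rfloor) \le \sum_{k_1 \ge 0}\sum_{k_2 \ge 0} f_1(k_1)\, f_2(k_2) = \Bigl(\sum_{j_1 \ge 0} f_1(j_1)\Bigr)\Bigl(\sum_{j_2 \ge 0} f_2(j_2)\Bigr),
\]
which is the claim.

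I do not anticipate any genuine obstacle here: the lemma is essentially a one-point-per-unit-cell packing bound, and the tensor structure of $f$ does the rest. The only mild care needed is the injectivity argument above and, if one insists on full generality, the harmless remark that one may replace $f_\ell$ by $\max(f_\ell,0)$ so that the ``fill in the rest of the lattice'' step is valid even without assuming nonnegativity outright; in every use of the lemma in this paper the functions are already nonnegative, so this is not an issue.
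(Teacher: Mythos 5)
Your proof is correct and follows exactly the same argument as the paper: snap each $x_j$ to $(\lfloor x_{j,1}\rfloor, \lfloor x_{j,2}\rfloor)$, note this is injective by the $\ell_\infty$-separation (two points in the same unit cell would be at distance $<1$), use monotonicity to bound each summand, and then enlarge to the full nonnegative lattice. Your explicit remark that nonnegativity of $f_1,f_2$ is needed for the last enlargement step is a small but genuine refinement — the paper's statement omits that hypothesis, though it holds in every application — and otherwise the two proofs coincide.
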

\begin{proof}
  Consider the mapping $x \in \R^2_+ \mapsto (\lfloor x_1 \rfloor,
  \lfloor x_2 \rfloor)$. This mapping is injective over our family of
  points. (Indeed, two points cannot be mapped to the same pair of
  integers $(j_1, j_2)$ as it would otherwise imply that they are both
  inside the square $[j_1 + 1) \times [j_2 + 1)$, hence violating the
  separation condition.) Therefore, the monotonicity assumption gives
\[
\sum_j f(x_j) \le \sum_{j} f_1(\lfloor x_{j,1} \rfloor) f_2(\lfloor
x_{j,2} \rfloor) \le \sum_{j_1, j_2 \ge 0}  f_1(j_1)
f_2(j_2),
\]
which proves the claim.
\end{proof}

Applying the same reasoning, we obtain
\[
\normInfInf{D_{\brac{1,0}}} \le 2 \sum_{j \ge 1} B_1(j\Delta) + 2\|K'\|_\infty
\sum_{j \ge 1} B_0(j\Delta) + 4 \Bigl[\sum_{j \ge 0} B_0(\Delta/2 + j
\Delta)\Bigr] \Bigl[\sum_{j \ge 0} B_1(\Delta/2 + j \Delta)\Bigr]. 
\]
In turn, the same upper-bounding technique yields
\begin{equation}
  \label{Dx_bound}
\normInfInf{D_{\brac{1,0}}} \le 7.723 \; 10^{-2}\, \fc, 
\end{equation}
where we have used the fact that $\|K'\|_\infty \le 2.08 \brac{\fc+2}$, which follows from combining Lemma \ref{lemma:fejersq_bounds}
with \eqref{boundKp_near}. Likewise,
\begin{equation}
 \label{Dxy_bound}
\normInfInf{D_{\brac{1,1}}}  \leq 4 \|K'\|_\infty \sum_{j \ge 1} B_1(j\Delta)
+ 4 \Bigl[\sum_{j \ge 0} B_1(\Delta/2 + j \Delta)\Bigr]^2 \le 0.1576 \, \fc^2, 
\end{equation}
and finally, 
\begin{multline}
\label{Id_Dxx_bound}
  \normInfInf{\abs{\Ktwo_{\brac{2,0}}\brac{0}}\Id-D_{\brac{2,0}}} \leq 2 \sum_{j
    \ge 1} B_2(j\Delta) + 2\|K''\|_\infty \sum_{j \ge 1} B_0(j\Delta)
\\  + 4 \Bigl[\sum_{j \ge 0} B_0(\Delta/2 + j \Delta)\Bigr]
  \Bigl[\sum_{j \ge 0} B_2(\Delta/2 + j \Delta)\Bigr] \le 0.3539 \, \fc^2,
\end{multline}
since $\|K''\|_\infty=\pi^2\fc \brac{\fc+4}/3$, as $\abs{K''}$ reaches its global maximum at the origin.
% \begin{align}
% \normInfInf{D_{\brac{1,0}}} & \leq \sum_{j=1}^{\infty} 4\brac{2j+1} B_{1}\brac{j \Delta} \leq 0.1303 \, \fc , \label{Dx_bound}\\
% %\sum_{j=1}^{\infty} \frac{32\pi 4\brac{2j+1}}{ \fc^3 \brac{j \Delta}^4} \leq 2.955 \; 10^{-2}\fc \label{Dx_bound}\\
% \normInfInf{D_{\brac{1,1}}} & \leq  \sum_{j=1}^{\infty} 8.32\brac{2j+1} B_{1}\brac{j \Delta} \leq 0.2711 \, \fc^2 , \label{Dxy_bound}\\
% %\sum_{j=1}^{\infty}  \frac{138 \pi 4\brac{2j+1} }{\fc^2\brac{j \Delta}^4} \leq 0.1275\fc^2 \label{Dxy_bound}\\
% \normInfInf{\abs{\Ktwo_{\brac{2,0}}\brac{0}}\Id-D_{\brac{2,0}}} & \leq \sum_{j=1}^{\infty} 4\brac{2j+1} B_{2}\brac{j \Delta}\leq 0.9490 \, \fc^2   \label{Id_Dxx_bound}\text{.}
% %\sum_{j=1}^{\infty} \frac{144\pi^2 4\brac{2j+1} }{\fc^2 \brac{j \Delta}^4} \leq 0.4177 \fc^2  \label{Id_Dxx_bound}\text{.}
% \end{align}

We use these estimates to show that the system \eqref{equation2D} is
invertible and to show that the coefficient sequences are bounded.  To
ease notation, set
\begin{align*}
S_1 & = D_{\brac{2,0}} - D_{\brac{1,1}}D_{\brac{0,2}}^{-1}D_{\brac{1,1}},\\
S_2 & = D_{\brac{1,0}} - D_{\brac{1,1}}D_{\brac{0,2}}^{-1}D_{y},\\
S_3 & = D_0 + S_2^TS_1^{-1}S_2 - D_{\brac{0,1}}D_{\brac{0,2}}^{-1}D_{\brac{0,1}} \text{.}
\end{align*}
Note that $S_1$ is a Schur complement of $D_{\brac{0,2}}$ and that a standard
linear algebra identity gives
\begin{align*}
\tilde{D}_2^{-1} & = \MAT{S_1^{-1} & -S_1^{-1}D_{\brac{1,1}}D_{\brac{0,2}}^{-1}\\ -D_{\brac{0,2}}^{-1}D_{\brac{1,1}}S_1^{-1} & D_{\brac{0,2}}^{-1} +D_{\brac{0,2}}^{-1}D_{\brac{1,1}}S_1^{-1}D_{\brac{1,1}}D_{\brac{0,2}}^{-1}} \text{.}
\end{align*}
Using this and taking the Schur complement of $\tilde{D}_2$, which is
equal to $S_3$, the solution to \eqref{equation2D} can be written as
\begin{align*}
  \MAT{\alpha \\
    \beta} %& =\MAT{G & D_1\\ D_1 & D_2}^{-1}\MAT{\sign{f}\\0}\\
  & =\MAT{\Id \\ -\tilde{D}_2^{-1}\tilde{D}_1}
  \brac{D_0+\tilde{D}_1^T\tilde{D}_2^{-1}\tilde{D}_1}^{-1} v \quad
  \Leftrightarrow \quad  \MAT{\alpha \\
    \beta_1\\\beta_2}  =\MAT{\Id \\ -S_1^{-1}S_2 \\
    D_{\brac{0,2}}^{-1}\brac{D_{\brac{1,1}}S_1^{-1}S_2-D_{\brac{0,1}}}} S_3^{-1} v \text{.}
\end{align*}
Applying \eqref{infinfnorminv} from Section
\ref{subsec:fejersq_coeffs}, we obtain
\begin{align}
\normInfInf{D_{\brac{0,2}}^{-1}} & \leq  \frac{1}{\abs{\Ktwo_{\brac{0,2}}\brac{0}} - \normInfInf{\abs{\Ktwo_{\brac{0,2}}\brac{0}}\Id-D_{\brac{0,2}}}} \leq \frac{0.3399}{\fc^2} \text{,} \label{Dxx_inv_bound}
\end{align}
which together with $\Ktwo_{\brac{2,0}}\brac{0}=-\pi^2\fc \brac{\fc+4}/3$ and
\eqref{Dxy_bound} imply
\begin{align*}
\normInfInf{\abs{\Ktwo_{\brac{2,0}}\brac{0}}\Id-S_1} & \leq \normInfInf{\abs{\Ktwo_{\brac{2,0}}\brac{0}}\Id-D_{\brac{2,0}}} + \normInfInf{D_{\brac{1,1}}}^2\normInfInf{D_{\brac{0,2}}^{-1}} \leq 0.33624 \, \fc^2 \text{.}
\end{align*}
Another application of \eqref{infinfnorminv} then yields
\begin{align}
\normInfInf{S_1^{-1}} & \leq \frac{1}{\abs{\Ktwo_{\brac{2,0}}\brac{0}} - \normInfInf{\abs{\Ktwo_{\brac{2,0}}\brac{0}}\Id-S_1}} \leq \frac{0.3408}{\fc^2}\text{.} \label{S1inv}
\end{align}
Next, \eqref{Dx_bound}, \eqref{Dxy_bound} and \eqref{Dxx_inv_bound}
allow to bound $S_2$,
\begin{align*}
\normInfInf{S_2} & \leq \normInfInf{D_{\brac{1,0}}}+\normInfInf{D_{\brac{1,1}}}\normInfInf{D_{\brac{0,2}}^{-1}}\normInfInf{D_{\brac{0,1}}} \leq 8.142  \; 10^{-2}\, \fc\text{,}
\end{align*}
which combined with \eqref{Id_F_bound_2D}, \eqref{Dx_bound},
\eqref{Dxx_inv_bound} and \eqref{S1inv} implies
\begin{align*}
\normInfInf{ \Id -S_3} & \leq \normInfInf{\Id -D_0}+ \normInfInf{S_2}^2\normInfInf{S_1^{-1}}+\normInfInf{D_{\brac{0,1}}}^2\normInfInf{D_{\brac{0,2}}^{-1}}\leq 5.283 \; 10^{-2} \text{.}
\end{align*}
The results above allow us to derive bounds on the coefficient vectors by applying \eqref{infinfnorminv} one last time, establishing
\begin{align*}
\normInf{\alpha} & \leq  \normInfInf{S_3^{-1}} \leq  1 + 5.577 \; 10^{-2},\\
\normInf{\beta_1} & \leq \normInfInf{S_1^{-1}S_2S_3^{-1}} \leq \normInfInf{S_1^{-1}} \normInfInf{S_2} \normInfInf{S_3^{-1}} \leq  2.930 \; 10^{-2} \, \lambda_c , \\
\alpha_1 & = v_1 - \brac{\brac{\Id - S_3^{-1}}v}_1 \geq 1-\normInfInf{S_3^{-1}}\normInfInf{\Id - S_3} \geq 1- 5.577 \; 10^{-2}
\text{,}
\end{align*}
where the last lower bound holds if $v_1=1$. The derivation for $\normInf{\beta_2}$ is identical and we omit it.

\subsection{Proof of Lemma \ref{lemma:q_bound_2D_conc}}
\label{subsec:q_bound_2D_conc}

Since $v$ is real valued, $\alpha$, $\beta$ and $q$ are all real
valued. For $|r| \leq \rOne \, \lambda_c$, we show that the Hessian
matrix of $q$,
\[
H = \MAT{q_{\brac{2,0}}\brac{r} & q_{\brac{1,1}}\brac{r}  \\ q_{\brac{1,1}}\brac{r} & q_{\brac{0,2}}\brac{r}}
\]
is negative definite.  In what follows, it will also be useful to
establish bounds on the kernel and its derivatives near the
origin. Using \eqref{boundK_near}--\eqref{boundKp3_near}, we obtain
% \begin{equation}
%   \label{boundK}
% \begin{alignedat}{2}
% K\brac{t} & \geq
% 0.9539, % \quad \quad \abs{K'\brac{t}} \leq 0.5446 \, \fc,
% & \qquad K''\brac{t}  & \leq -2.923 \, \fc^2, & \qquad & \\
% \abs{K'\brac{t}} & \leq 0.5595 \, \fc, & \qquad \abs{K''\brac{t}}
% & \leq 3.393 \, \fc^2, & \qquad \abs{K'''\brac{t}} & \leq 5.697 \,
% \fc^3. 
% \end{alignedat}
% \end{equation}
\begin{align*}
 \Ktwo \brac{x,y} & \geq \brac{1-\frac{\pi^2 \fc \brac{\fc+4}x^2}{6}}\brac{1-\frac{\pi^2 \fc \brac{\fc+4}y^2}{6}} \\
\Ktwo_{\brac{2,0}}\brac{x,y} & \leq \brac{-\frac{\pi^2 \fc \brac{\fc+4}}{3} + \frac{\brac{\fc+2}^4 \pi^4 x^2}{6}} \brac{1-\frac{ \pi^2 \fc \brac{\fc+4}y^2}{6}} 
\end{align*}
and
\begin{alignat*}{2}
  \abs{\Ktwo_{\brac{1,0}}\brac{x,y}} & \leq \frac{\pi^2\fc\brac{\fc+4}x}{3}, &
  \qquad \abs{\Ktwo_{\brac{1,1}}\brac{x,y}} & \leq \frac{\pi^4
    \fc^2\brac{\fc+4}^2 xy}{9},\notag\\ \abs{\Ktwo _{\brac{2,1}}\brac{x,y}} &
  \leq \frac{\pi^4 \fc^2\brac{\fc+4}^2 y}{9}, & \qquad \abs{\Ktwo
    _{\brac{3,0}}\brac{x,y}} & \leq \frac{\pi^4\brac{\fc+2}^4x}{3}.
\end{alignat*}
These bounds are all monotone in $x$ and $y$ so we can evaluate them
at $x=\rOne \, \lambda_c$ and $y=\rOne \, \lambda_c$ to show that for
any $|r| \le \rOne \, \lambda_c$, 
\begin{alignat}{3}
\Ktwo\brac{r} & \geq 0.8113 & \qquad \abs{\Ktwo_{\brac{1,0}}\brac{r}} & \leq 0.8113 & \qquad \Ktwo_{\brac{2,0}}\brac{r} & \leq -2.097 \, \fc^2, \notag\\
\abs{\Ktwo_{\brac{1,1}}\brac{r}} & \leq 0.6531 \, \fc, & \qquad \abs{\Ktwo _{\brac{2,1}}\brac{r}} & \leq 2.669 \, \fc^2, & \qquad \abs{\Ktwo _{\brac{3,0}}\brac{r}} & \leq  8.070\, \fc^3 \label{bound_K2D}.
\end{alignat}
The bounds for $\Ktwo_{\brac{1,0}}$, $\Ktwo_{\brac{2,0}}$,
$\Ktwo_{\brac{2,1}}$ and $\Ktwo_{\brac{3,0}}$ of course also hold for
$\Ktwo_{\brac{0,1}}$, $\Ktwo_{\brac{0,2}}$, $\Ktwo_{\brac{1,2}}$ and
$\Ktwo_{\brac{0,3}}$. Additionally, it will be necessary to bound sums
of the form $\sum_{r_j \in T/\keys{0}}\abs{\Ktwo
  _{\brac{\ell_1,\ell_2}}\brac{r-r_j}} $ for $r$ such that $\abs{r}
\leq \Delta/2$ and $\ell_1,\ell_2 = 0,1,2,3$.  Consider the case
$\brac{\ell_1,\ell_2} = \brac{0,0}$. Without loss of generality, let
$r = \brac{x,y}\in \R_{+}^2$. By Lemma \ref{lem:simple}, the
contribution of those ${r_j}$'s belonging to the three quadrants $\{|r
| >\Delta/2\} \setminus \R^2_+$ obeys
\[
\sum_{|r_j| > \Delta/2, r_j \notin \R_+^2} \abs{\Ktwo \brac{r-r_j}}
\le 3 \Bigl[\sum_{j \ge 0} B_0(\Delta/2 + j \Delta)\Bigr]^2.
\]
Similarly, the contribution from the bands where either $|r_{j,1}|$ or
$|r_{j,2}| \le \Delta/2$ obeys
\[
\sum_{|r_{j,1}| \le \Delta/2 \text{ or } |r_{j,2}| \le \Delta/2}
\abs{\Ktwo \brac{r-r_j}} \le 2\sum_{j \ge 1} B_0(j\Delta-\abs{r}).
\]
It remains to bound the sum over $r_j$'s lying in the positive
quadrant $\{|r | >\Delta/2\} \cap \R^2_+$. To do this, let
$f_1\brac{t}$ be equal to one if $\abs{t}\leq\Delta$ and to
$B_0(\Delta t-\abs{r})$ otherwise. Taking $f_2=f_1$, Lemma
\ref{lem:simple} gives 
\[
\sum_{|r_j| > \Delta/2, r_j \in \R_+^2} \abs{\Ktwo \brac{r-r_j}} \le
\sum_{j \ge 1} B_0(j\Delta-\abs{r})+ \Bigl[\sum_{j \ge 1} B_0(j
\Delta-\abs{r})\Bigr]^2.
\]
We can apply exactly the same reasoning to the summation of $\Ktwo
_{\brac{\ell_1,\ell_2}}$ for other values of $\ell_1$ and $\ell_2$,
and obtain that for any $r$ such that $\abs{r} \leq \Delta/2$, 
\begin{equation}
\sum_{r_j \in T\setminus\keys{0}}\abs{\Ktwo _{\brac{\ell_1,\ell_2}}\brac{r-r_j}}  \leq  Z_{\brac{\ell_1,\ell_2}}\brac{\abs{r}};\label{eq_Z}
\end{equation}
here, for $u \ge 0$,
\begin{align*}
  Z_{\brac{\ell_1,\ell_2}}\brac{u} & = 2\sum_{j
    \ge 1} K^{\brac{\ell_1}}_{\infty} B_{\ell_2}(j\Delta-u) + 2K^{\brac{\ell_2}}_{\infty} B_{\ell_1}(j\Delta-u) +K^{\brac{\ell_1}}_{\infty} \sum_{j\ge 1} B_{\ell_2}(j\Delta) \\
  & \qquad  + K^{\brac{\ell_2}}_{\infty} \sum_{j \ge 1} B_{\ell_1}(j\Delta) + 3
  \Bigl[\sum_{j \ge 0} B_{\ell_1}(\Delta/2 + j \Delta)\Bigr]
  \Bigl[\sum_{j \ge 0} B_{\ell_2}(\Delta/2 + j \Delta-u)\Bigr]\\
  & \qquad \qquad + \Bigl[\sum_{j \ge 1} B_{\ell_1}(j \Delta-u)\Bigr]
  \Bigl[\sum_{j \ge 1} B_{\ell_2}(j \Delta)\Bigr]
\end{align*}
in which $K^{\brac{\ell_1}}_{\infty}$ is a bound on the global maximum
of $K^{\brac{\ell_1}}$. The absolute value of the kernel $K$ and its second derivative reach their global maxima at the origin, so $K^{\brac{0}}_{\infty}=1$ and
$K^{\brac{2}}_{\infty}=\pi^2\fc \brac{\fc+4}/3$. Combining the bounds
on $\abs{K'}$ and $\abs{K'''}$ in Lemma \ref{lemma:fejersq_bounds}
with \eqref{boundKp_near} and \eqref{boundKp3_near}, we can show that
$K^{\brac{1}}_{\infty}=2.08 \brac{\fc+2}$ and
$K^{\brac{3}}_{\infty}=25.3\brac{\fc+2}^3$ if $\fc \geq
\minmTwoD$. Since $Z_{\brac{\ell_1,\ell_2}}=Z_{\brac{\ell_2,\ell_1}}$,
we shall replace $Z_{\brac{\ell_1,\ell_2}}$ for which $\ell_1>\ell_2$
by $Z_{\brac{\ell_2,\ell_1}}$.

Since
\[
q_{\brac{2,0}}\brac{r}  = \sum_{r_j \in T} \alpha_j \Ktwo_{\brac{2,0}}\brac{r-r_j} +
\beta_{1j} \Ktwo_{\brac{3,0}}\brac{r-r_j}+\beta_{2j} \Ktwo_{\brac{2,1}}\brac{r-r_j}
\]
it follows from \eqref{bound_K2D} and \eqref{eq_Z} that
\begin{align*}
  q_{\brac{2,0}}\brac{r}
  & \leq \alpha_1 \Ktwo_{\brac{2,0}}\brac{r} + \normInf{\alpha}\sum_{r_j \in T\setminus\keys{0}}\abs{\Ktwo_{\brac{2,0}}\brac{r-r_j}} \\
  & \qquad + \normInf{\beta} \Bigl[ \abs{\Ktwo_{\brac{3,0}}\brac{r}} +
  \sum_{r_j \in T\setminus\keys{0}} \abs{\Ktwo_{\brac{3,0}}
    \brac{r-r_j}} + \abs{\Ktwo_{\brac{2,1}}\brac{r}} + \sum_{r_j \in
    T\setminus\keys{0}}
  \abs{\Ktwo_{\brac{2,1}} \brac{r-r_j}}\Bigr] \\
  & \leq  \alpha_1 \Ktwo_{\brac{2,0}}\brac{r} + \normInf{\alpha}Z_{\brac{0,2}}(|r|) + \normInf{\beta} \brac{ \abs{\Ktwo_{\brac{3,0}}\brac{r}} +Z_{\brac{0,3}}(|r|) + \abs{\Ktwo_{\brac{2,1}}\brac{r}}+ Z_{\brac{1,2}}(|r|)} \\
  & \leq -1.175 \, \fc^2.
\end{align*}
The last inequality uses values of $Z_{\brac{\ell_1,\ell_2}}\brac{u}$
at $u = \rOne \, \lambda_c$ reported in Table \ref{table:Z}. By
symmetry, the same bound holds for $q_{\brac{0,2}}$. Finally, similar
computations yield
\begin{align*}
  \abs{q_{\brac{1,1}}(r)} & = \sum_{r_j \in T} \alpha_j \Ktwo_{\brac{1,1}}\brac{r-r_j} + \beta_{1j} \Ktwo _{\brac{2,1}}\brac{r-r_j}+\beta_{2j} \Ktwo _{(1,2)}\brac{r-r_j} \notag \\
  & \leq \normInf{\alpha}\Bigl[\abs{\Ktwo_{\brac{1,1}}\brac{r}} +
  Z_{(1,1)}(|r|) \Bigr] + \normInf{\beta}\Bigl[
  \abs{\Ktwo _{\brac{2,1}}\brac{r}} + \abs{\Ktwo _{\brac{1,2}}\brac{r}} + 2 Z_{(1,2)}(|r|)\Bigr]\\
   & \leq 1.059 \, \fc^2.
\end{align*}
Since $\tr(H) = q_{\brac{2,0}} + q_{\brac{0,2}} < 0$ and $\det(H) = |q_{\brac{2,0}}||q_{\brac{0,2}}| -
|q_{\brac{1,1}}|^2 > 0$, both eigenvalues of $H$ are strictly negative. 
\begin{table}[htbp]%\vspace{10pt}
\begin{center}
\begin{tabular}{| c | c | c | c | c | c |}
  \hline
  $Z_{\brac{0,0}}(u)$ & $Z_{\brac{0,1}}(u)$ & $Z_{\brac{1,1}}(u)$ & $Z_{\brac{0,2}}(u)$ & $Z_{\brac{1,2}}(u)$ & $Z_{\brac{0,3}}(u)$\\
  \hline
  $6.405 \; 10^{-2}$ & $0.1047 \, \fc$ & $0.1642 \, \fc^2$ &  $0.4019\, \fc$ & $0.6751 \, \fc^3$ &  $1.574 \fc^3$\\
  \hline	
\end{tabular}
\caption{%\ejc{These are exact values, upper bounds?} Evaluation of
  Upper bounds on $Z_{\brac{\ell_1,\ell_2}}(u)$ at $\rOne \,
  \lambda_c$.}
  \label{table:Z}
\end{center}
\end{table}

%  By
% symmetry, the same bound holds for $q_{\brac{0,2}}$.  By the Gershgorin circle
% theorem \cite{hornMatrixAnalysis}, both eigenvalues of the Hessian
% matrix are negative if $q_{\brac{2,0}} - |q_{\brac{1,1}}|$ and $q_{\brac{0,2}}-\abs{q_{yx}}$
% are negative. Combining the last two inequalities gives
% \begin{align*}
%   q_{\brac{2,0}}-\abs{q_{\brac{1,1}}} \leq -3.259 \; 10^{-3} \,\fc^2 < 0.
% \end{align*}
% By symmetry, the calculation for $q_{\brac{0,2}}-\abs{q_{yx}}$ is identical.

We have shown that $q$ decreases along any segment originating at $0$.
To complete the proof, we must establish that $q > -1$ in the
square. Similar computations show
\begin{align*}
  q\brac{r} & = \sum_{r_j \in T} \alpha_j \Ktwo\brac{r-r_j} + \beta_{1j} \Ktwo_{\brac{1,0}}\brac{r-r_j}+\beta_{2j} \Ktwo_{\brac{0,1}} \brac{r-r_j} \notag \\
  %& \geq \alpha_1 \Ktwo\brac{r} - \normInf{\alpha}\sum_{r_j \in T\setminus\keys{0}}\abs{\Ktwo\brac{r-r_j}} \\
  %& \qquad -  \normInf{\beta}\Bigl[\abs{\Ktwo_{\brac{1,0}}\brac{r}} + \sum_{r_j \in T\setminus\keys{0}} \abs{\Ktwo_{\brac{1,0}} \brac{r-r_j}} + \abs{\Ktwo _{y}\brac{r}} + \sum_{r_j \in T\setminus\keys{0}} \abs{\Ktwo _{y} \brac{r-r_j}}\Bigr]\\
  &\geq \alpha_1 \Ktwo\brac{r} - \normInf{\alpha}Z_{\brac{0,0}}(|r|) -  \normInf{\beta}\brac{\abs{\Ktwo_{\brac{0,1}}\brac{r}} + \abs{\Ktwo_{\brac{1,0}}\brac{r}} + 2Z_{\brac{0,1}}(|r|)}\\
  & \geq 0.6447.
\end{align*}

\subsection{Proof of Lemma \ref{lemma:q_bound_2D}}
\label{subsec:q_bound_2D}

For $\rOne \, \lambda_c \le \abs{r} \le \Delta/2$,
\begin{align*}
  \abs{q} & \leq \Bigl| \sum_{r_j \in T} \alpha_j \Ktwo \brac{r-r_j} + \beta_{1j} \Ktwo_{\brac{1,0}}\brac{r-r_j}+\beta_{2j} \Ktwo_{\brac{0,1}}\brac{r-r_j}\Bigr| \\
  & \leq \normInf{\alpha}\Bigl[\abs{\Ktwo \brac{r}} +
  Z_{(0,0)}(|r|)\Bigr] +
  \normInf{\beta}\Bigl[\abs{\Ktwo_{\brac{1,0}}\brac{r}} +
  \abs{\Ktwo_{\brac{0,1}}\brac{r}} + 2 Z_{(0,1)}(|r|)\Bigr]. 
\end{align*}
Using the series expansion around the origin of $K$ and $K'$
\eqref{KboundUp}, we obtain that for $t_1 \le |r| \le t_2$,
\begin{align*} 
\abs{\Ktwo \brac{r}} & \leq \brac{1 - \frac{\pi^2 \fc \brac{\fc+4} x^2}{6} + \frac{\pi^4\brac{\fc+2}^4 x^4}{72}}\brac{1 - \frac{\pi^2 \fc \brac{\fc+4} y^2}{6} + \frac{\pi^4\brac{\fc+2}^4 y^4}{72}}\\
& \leq \brac{1 - \frac{\pi^2 \brac{1+\frac{2}{\fc}}^2 t_1^2}{6}\brac{1 - \frac{\pi^2\brac{1+\frac{2}{\fc}}^2 t_2^2}{12}}}^2,\\
\abs{\Ktwo_{\brac{1,0}}\brac{r}} & \leq \brac{\frac{\pi^2\fc\brac{\fc+4}t_2}{3}}^2 \text{.}
\end{align*} 
The same bound holds for $\Ktwo_{\brac{0,1}}$.  Now set
\[
W\brac{r} = \alpha^\infty \abs{\Ktwo \brac{r}} + 2\beta^\infty
\abs{\Ktwo_{\brac{1,0}}\brac{r}}
\]
where $\alpha^\infty$ and $\beta^\infty$ are the upper bounds from
Lemma \ref{lemma:coeff_bounds_2D}. The quantities reported in Table \ref{table:2D_bounds} imply that setting $\keys{t_1,t_2}$ to $\keys{\tC\, \lambda_c,0.27\, \lambda_c}$, $\keys{0.27\, \lambda_c,0.36 \, \lambda_c}$, $\keys{0.36 \, \lambda_c,0.56 \, \lambda_c}$ and $\keys{0.56 \, \lambda_c,\rTwo \, \lambda_c}$ yields $\abs{q}<0.9958$, $\abs{q}<0.9929$, $\abs{q}<0.9617$ and $\abs{q}<0.9841$ respectively in the corresponding intervals.  
%Setting $t_1=\rOne \, \lambda_c$
%and $t_2=0.5 \, \lambda_c$ gives
%\[
%W\brac{r}\leq 0.8515, \quad \rOne \, \lambda_c \leq \abs{r} \leq 0.5
%\, \lambda_c.
%\]
%Given that $Z_{\brac{0,0}}\brac{0.5 \, \lambda_c} \le 7.248 \;
%10^{-2}$ and $Z_{\brac{0,1}}\brac{0.5 \, \lambda_c} \le 0.1200 \,
%\fc$, this implies that $\abs{q(r)} \leq 0.9287$ in the same range.
%Similarly, with $t_1=0.5 \, \lambda_c$ and $t_2=\rTwo \, \lambda_c$,
%we have 
%\[
%W\brac{r}\leq 0.7109, \quad 0.5 \, \lambda_c \leq \abs{r} \leq 0.8 \,
%\lambda_c.  
%\]
%Given that $Z_{\brac{0,0}}\brac{\rTwo \, \lambda_c} \le 0.1155$ and
%$Z_{\brac{0,1}}\brac{0.5 \, \lambda_c} \le 0.1957 \, \fc$, $\abs{q(r)}
%\leq 0.8339$ in this range. 
Finally, for $\abs{r}$ between $\rTwo \,
\lambda_c$ and $\Delta/2$, applying Lemma \eqref{lemma:fejersq_bounds}
yields $W\brac{r} \leq 0.5619$, $Z_{\brac{0,0}}\brac{\rTwo \, \lambda_c} \le 0.3646$ and
$Z_{\brac{0,1}}\brac{\rTwo \, \lambda_c} \le 0.6502 \, \fc$ , so that $\abs{q} \leq 0.9850$. These
last bounds also apply to any location beyond $\Delta/2$ closer to $0$
than to any other element of $T$ because of the monotonicity
of $B_0$ and $B_1$. This concludes the proof.
\begin{table}[htbp]%\vspace{10pt}
\begin{center}
\begin{tabular}{| c | c | c | c | c |}
	\hline
	$t_1/\lambda_c$ & $t_2/\lambda_c$ & Upper bound on $W\brac{r}$ & $Z_{\brac{0,0}}\brac{t_2}$ & $Z_{\brac{0,1}}\brac{t_2}$\\
% & Bound on $\abs{q\brac{t}}$ \\
	\hline
	$\rOne $ & $ 0.27$ & $0.9203$ &  $6.561 \, 10^{-2}$ & $0.1074$ \\
	\hline
	$0.27 $ & $ 0.36$ & $0.9099$ &  $7.196 \, 10^{-2}$   & $0.1184$ \\
	\hline
	$0.36 $ & $ 0.56$ & $0.8551$ &  $9.239 \, 10^{-2}$   & $0.1540$ \\
	\hline
	$0.56 $ & $ \rTwo$ & $0.8118$ &  $0.1490$   & $0.2547$ \\
	\hline
\end{tabular}
\vspace{10pt}
\caption{Numerical quantities used to bound $\abs{q}$ between $\rOne$ and $\rTwo$.}
  \label{table:2D_bounds}
\end{center}
\end{table}

\end{document}